\title{The Computational Complexity of Plethysm Coefficients}
\author{%
	Nick Fischer%
	\thanks{\email{nfischer@mpi-inf.mpg.de}}
	\institute{Max Planck Institute for Informatics,\\Saarbr\"ucken, Germany}
	\institute{Saarbr\"ucken Graduate School\\of Computer Science}
	\and
	Christian Ikenmeyer%
	\thanks{\email{christian.ikenmeyer@liverpool.ac.uk} This research was partially carried out when the author was at the Max Planck Institute for Software Systems, Saarbr\"ucken. The author acknowledges support from the DFG grant IK 116/2-1.}
	\institute{The University of Liverpool}}
\begin{document}

\maketitle
\thispagestyle{empty}

\begin{abstract}
In two papers, B\"urgisser and Ikenmeyer (STOC 2011, STOC 2013) used an adaption of the geometric complexity theory (GCT) approach by Mulmuley and Sohoni (Siam J Comput 2001, 2008) to prove lower bounds on the border rank of the matrix multiplication tensor. A key ingredient was information about certain Kronecker coefficients. While tensors are an interesting test bed for GCT ideas, the far-away goal is the separation of algebraic complexity classes. The role of the Kronecker coefficients in that setting is taken by the so-called plethysm coefficients: These are the multiplicities in the coordinate rings of spaces of polynomials. Even though several hardness results for Kronecker coefficients are known, there are almost no results about the complexity of computing the plethysm coefficients or even deciding their positivity.

In this paper we show that deciding positivity of plethysm coefficients is \NP{}-hard, and that computing plethysm coefficients is \SharpP{}-hard. In fact, both problems remain hard even if the inner parameter of the plethysm coefficient is fixed. In this way we obtain an inner versus outer contrast: If the outer parameter of the plethysm coefficient is fixed, then the plethysm coefficient can be computed in polynomial time.

Moreover, we derive new lower and upper bounds and in special cases even combinatorial descriptions for plethysm coefficients, which we consider to be of independent interest. Our technique uses discrete tomography in a more refined way than the recent work on Kronecker coefficients by Ikenmeyer, Mulmuley, and Walter (Comput Compl 2017). This makes our work the first to apply techniques from discrete tomography to the study of plethysm coefficients. Quite surprisingly, that interpretation also leads to new equalities between certain plethysm coefficients and Kronecker coefficients.
\end{abstract}

\pagestyle{plain}
\newpage
\setcounter{page}{1}


\section{Introduction} \label{sec:introduction}
Geometric complexity theory (GCT) is an approach towards the separation of Valiant's algebraic complexity classes using algebraic geometry and representation theory. These ideas were introduced by Mulmuley and Sohoni in~\cite{MulmuleyS2001,MulmuleyS2008}. A first implementation of that framework proves lower bounds on the border rank of the matrix multiplication tensor \cite{BurgisserI2011,BurgisserI2013b}. In these papers, \emph{Kronecker coefficients} play a key role, as they are the multiplicities in the coordinate rings of spaces of tensors. While that line of research turns out as an interesting test case for GCT ideas, the ultimate goal remains to separate algebraic complexity classes, that is, certain sets of (families of) polynomials. Switching from tensors to polynomials, the role of Kronecker coefficients in \cite{BurgisserI2011,BurgisserI2013b} is now taken by \emph{plethysm coefficients} (see e.g.~\cite[Sec.~12.4(i)]{BlaserI2018}). Plethysm coefficients are the main subject of ``GCT7'' \cite{Mulmuley2007}, and they also appear prominently in the GCT publications \cite{KadishL2012, Kumar2015, Burgisser2016, BurgisserHI2017}.

The main subject of study in this paper are plethysm coefficients and, to a lesser extent, Kronecker coefficients.
Both types of representation theoretic coefficients are not only important in GCT, but they are fundamental objects in algebraic combinatorics. Indeed, in 1999, Richard Stanley highlighted the quest for a combinatorial description for plethysm coefficients as Problem 9 in his survey on ``outstanding open problems'' in algebraic combinatorics~\cite{Stanley1999a}; finding a combinatorial interpretation for Kronecker coefficients is Problem 10 in the same paper\footnote{In the language of computational complexity theory, ``finding a combinatorial description'' is commonly interpreted as ``proving that the function that computes the coefficient from its parameter list is in the complexity class~\SharpP''. \cite{Mulmuley2007} lists conjectures that would imply the containment in \SharpP, thus proving Stanley's Problem~9.}.
Even though GCT asks to compare these (and related) coefficients (see e.g.\ \cite[Appendix]{Ikenmeyer2012}), both coefficients have been studied mostly independently. But this has started to change recently, as can be seen for example from the fact that the Fall 2016 Eastern Sectional Meeting of the American Mathematical Society held a special session on ``Plethysm and Kronecker products in Representation Theory'' and also from the 2020 Oberwolfach Mini-Workshop on ``Kronecker, Plethysm, and Sylow Branching Coefficients and their applications to Complexity Theory''.
To the best of our knowledge the only papers that give inequalities between these two sorts of coefficients are \cite{Manivel2011, IkenmeyerP2017,BurgisserIP2019}\footnote{\cite{Mulmuley2009} writes that ``The Kronecker coefficients [...] are [...] special cases of the fundamental plethysm constants'', however, that definition of ``plethysm constants'' is much broader than the plethysm coefficients that we study in this paper.}.

In GCT, both coefficients do not only serve as the multiplicities in the coordinate rings of spaces of tensors or polynomials, but they also appear as terms in nonnegative formulas that describe multiplicities of coordinate rings of important group orbits: The plethysm coefficients appear as terms in the nonnegative formulas for the multiplicities in the coordinate ring of the orbit of the product of variables (\cite[Sec.~9.2.3]{Landsberg2017}), the power sum polynomial (\cite{IkenmeyerK2019}), the unit tensor (\cite[Sec.~3.7]{Ikenmeyer2019}), and the permanent polynomial (\cite[Eq.~(5.5.2)]{BurgisserLMW2011}), whereas the Kronecker coefficients appear as terms in the nonnegative formulas for the multiplicities in the coordinate ring of the orbit of the determinant polynomial \cite[Eq.~(5.2.6)]{BurgisserLMW2011}, the permanent polynomial (\cite[Eq.~(5.5.2)]{BurgisserLMW2011}, both the plethysm coefficients and the Kronecker coefficients appear in this formula), the matrix multiplication tensor \cite[Thm.~5.3]{BurgisserI2011}, the unit tensor (\cite[Sec.~3.7]{Ikenmeyer2019}, both the plethysm coefficients and the Kronecker coefficients appear in this formula), and the trace of the matrix power polynomial \cite[Thm~2.10]{GesmundoIP2017}. In theoretical physics, the plethysm and Kronecker coefficients are closely related to the quantum marginal problem for indistinguishable and distinguishable particles, respectively, see e.g.~\cite{ChristiandlHM2007} and \cite[Sec.~7]{ChristandlDKW2014}, where the asymptotic positivity of the coefficients is studied.

While the \SharpP{}-hardness of the computation of the Kronecker coefficient dates back to \cite{BurgisserI2008,BriandOR2009}, the recent paper \cite{IkenmeyerMW2017} proves the \NP{}-hardness of deciding the positivity of the Kronecker coefficient. This was a setback for the GCT program, as it was originally conjectured that Kronecker positivity would be decidable in polynomial time, much in the same way as the well-known Littlewood-Richardson coefficients: Even though the computation of Littlewood-Richardson coefficients is $\SharpP$-complete~\cite{Narayanan2006}, positivity of Littlewood-Richardson coefficients can be decided in polynomial time~\cite{DeLoeraM2006} using linear programming (see also~\cite{MulmuleyNS2012}), and even with a combinatorial max-flow algorithm~\cite{BurgisserI2013a}. Such a result would have made the search for obstructions (i.e., inequalities between coefficients) much easier.

In this paper we prove \NP{}-hardness of deciding positivity of plethysm coefficients~(\cref{thm:plethysm-positivity-np-hard}) and \SharpP{}-hardness of the computation of plethysm coefficients~(\cref{thm:plethysm-sharpp-hard}). Not even the \SharpP{}-hardness of the computation of plethysm coefficients was known prior to our work. Indeed, our reduction is quite subtle compared to the one used to prove \NP-hardness of deciding positivity of Kronecker coefficients \cite{IkenmeyerMW2017}.

\subsection*{Structure of the Paper}
\cref{sec:preliminaries} explains the necessary background from representation theory. \cref{sec:results} states our main hardness results and puts them in contrast to what is known about efficiently computable subcases. \cref{sec:m=3} is a first important step in our reduction: We show that the inner parameter of the plethysm coefficient can be fixed to be three. \cref{sec:plethysm-bounds} translates an interesting subcase of the plethysm coefficient problem into a problem in discrete tomography. In \cref{sec:reduction} we use this tomography description to design a sequence of reductions that finally prove our hardness result. \cref{sec:kronecker,sec:positive-formulas} contain results of independent interest: \cref{sec:kronecker} highlights new close connections to Kronecker coefficients, while \cref{sec:positive-formulas} contains more combinatorial descriptions for the plethysm coefficients. Finally, \cref{sec:gapp-completeness} uses classical results to provide an algorithm that places the problem of computing plethysm coefficients in the complexity class~\GapP{}. 


\section{Preliminaries from Representation Theory} \label{sec:preliminaries}
The definition of plethysm and Kronecker coefficients that we use requires some elementary algebraic geometry and representation theory, as can be found for example in the standard textbooks \cite{Kraft1985,FultonH1991,Sagan2001}. The necessary material can also for example be found in \cite[Ch.~6]{Landsberg2011}.

A \emph{composition $\lambda$} is a list of nonnegative integers $\lambda = (\lambda_0, \lambda_1, \ldots)$, which we always treat as finite by omitting trailing zeros. If the entries are nonincreasing, then we call $\lambda$ a \emph{partition}. A partition~$\lambda$ is often presented as its \emph{Young diagram}, which is a top-left justified array containing~$\lambda_i$ boxes in the $i$-th row. For instance, $(3, 1)$ is identified with the Young diagram $\ydiagram{3,1}$. In that presentation it makes sense to term $\lambda_0$ the \emph{width} of $\lambda$ and the number of non-zero entries the \emph{height} of $\lambda$. We refer to the number $|\lambda| := \sum_i \lambda_i$ of boxes in $\lambda$'s Young diagram as the \emph{size} of~$\lambda$. If $\lambda$ is of height $1$, then we call $\lambda$ a \emph{1-row partition}, and similarly, if $\lambda$ is of width $1$, then $\lambda$ is a \emph{1-column partition}. Finally, let the \emph{transpose $\lambda^t$} of $\lambda$ denote the partition obtained by exchanging rows and columns. For example, $(3, 1)^\transpose = (2, 1, 1)$ is depicted as \raisebox{-.8ex}[2ex][2.9ex]{$\ydiagram{2,1,1}$}.

Let $V$ be a finite-dimensional complex vector space. There is a canonical action of the general linear group $\GL(V)$ on $V$ given by $gv := g(v)$ for all $g \in \GL(V)$, $v \in V$. For a composition $\lambda$, if $\diag(\alpha_0, \ldots, \alpha_k) v = \alpha_0^{\lambda_0} \cdots \alpha_k^{\lambda_k} v$, then we say that $v$ is a \emph{weight vector of weight $\lambda$}. Note that for all matrices $m \in \gl(V) := \Complex^{\dim V \times \dim V}$ we have that $\Id + \varepsilon m \in \GL(V)$ for all small enough~$\varepsilon$, where $\Id$ is the $\dim V \times \dim V$ identity matrix. We say that $V$ admits an action of the Lie-algebra~$\gl(V)$ defined by $m v := \lim_{\varepsilon \to 0} \varepsilon^{-1} ((\Id + \varepsilon m)v - v)$ for all $m \in \gl(V)$, $v \in V$.
For $i < j$, define the \emph{raising operator} $E_{i,j} \in \gl(V)$ as the matrix with a single one at position $(i, j)$ and zeros everywhere else. A weight vector $v$ of weight $\lambda$ is called a \emph{highest weight vector} if $v$ vanishes under the action of all raising operators. In that case, $\lambda$ is guaranteed to be a partition. It is well-known that the irreducible polynomial representations of $\GL(V)$ are characterized by their unique (up to scale) highest weight vectors, which in turn, are indexed by partitions of height at most $\dim V$. The irreducible $\GL(V)$-representation of type $\nu$ is called the \emph{Weyl module} and denoted by~$\Schur^\nu V$. The Weyl module $\Schur^\nu V$ is constructed as follows. Given any $\GL(V)$-representation $W$ (for example $W = V$), there is a natural linear $\GL(V)$-action on its $m$-th tensor power $\Tensor^m W$ given by $g(w_1 \tensor w_2 \tensor \cdots \tensor w_m) := (g w_1) \tensor (g w_2) \tensor \cdots \tensor (g w_m)$ and linear continuation. We define $\Schur^\nu W$ to be the image of the \emph{Schur functor $\Schur^\nu$} that maps $W$ to its $m$-th tensor power~$\Tensor^m W$ and then projects onto the $\nu$-th isotypic component via
\[
	w \mapsto \tfrac 1 {m!} \sum_{\pi \in \S_m} \chi_\nu(\pi) \pi w,
\]
where $\chi_\nu$ is the character of the irreducible representation of type $\nu$ of the symmetric group $\S_m$. For example, if $\nu=(m)$, then $S^\nu V$ is the vector space of symmetric tensors of order $m$, which we denote by $\Sym^m V$. If $\nu=(1,1,\ldots,1)$ is of size $m$, then $S^\nu V$ is the vector space of alternating tensors of order $m$, which we denote by $\Wedge^m V$.

A \emph{plethysm} is the application of $\Schur^\mu$ to $\Schur^\nu V$. The discussion above shows that this space $\Schur^\mu \Schur^\nu V := \Schur^\mu(\Schur^\nu V)$ is a $\GL(V)$-representation. Since $\GL(V)$ is reductive, every finite-dimensional $\GL(V)$-representation decomposes into a direct sum of irreducible $\GL(V)$-representations:
\begin{align*}
  \Schur^\mu \Schur^\nu V = \Directsum_\lambda (\Schur^\lambda V)^{\directsum p_\lambda(\mu, \nu)}.
\end{align*}
The nonnegative integers $p_\lambda(\mu, \nu)$ are called \emph{general plethysm coefficients}. The case where $\mu=(n)$ and $\nu=(m)$ is of special interest, so we write $a_\lambda(n, m) := p_\lambda((n), (m))$, which is the multiplicity of type $\lambda$ in $\Sym^n \Sym^m V$. The coefficient $a_\lambda(n, m)$ is often called the \emph{plethysm coefficient} in the literature.
We introduce the \emph{dual plethysm coefficient $b_\lambda(n, m)$} as another special case of general plethysm coefficients, defined by
\begin{align}
	\Sym^n \Sym^m V = \Directsum_\lambda (\Schur^\lambda V)^{\directsum a_\lambda(n, m)} &&
	\text{and} &&
	\Wedge^n \Sym^m V = \Directsum_\lambda (\Schur^\lambda V)^{\directsum b_\lambda(n, m)}, \label{eq:decomp-sym-sym}
\end{align}
i.e., by restricting the Young diagrams $\mu$ and $\nu$ to a single row (in case of $\Sym$) or a single column (in case of $\Wedge$). By some well-known dualities~\cite{Carre1990, Manivel1998, ManivelM2015} we can relate the above decompositions with their respective analogues after exchanging $\Sym$ and $\Wedge$.

\begin{fact}[\cite{Carre1990, Manivel1998, ManivelM2015}] \label[fact]{fact:duality}
If $m$ is odd, then:
\begin{align}
	\Wedge^n \Wedge^m V = \Directsum_\lambda (\Schur^\lambda V)^{\directsum a_{\lambda^\transpose}(n, m)} &&
	\text{and} &&
	\Sym^n \Wedge^m V = \Directsum_\lambda (\Schur^\lambda V)^{\directsum b_{\lambda^\transpose}(n, m)}, \label{eq:decomp-wedge-wedge-odd}
\end{align}
whereas in case that $m$ is even:
\begin{align}
	\Wedge^n \Wedge^m V = \Directsum_\lambda (\Schur^\lambda V)^{\directsum b_{\lambda^\transpose}(n, m)} &&
	\text{and} &&
	\Sym^n \Wedge^m V = \Directsum_\lambda (\Schur^\lambda V)^{\directsum a_{\lambda^\transpose}(n, m)}. \label{eq:decomp-wedge-wedge-even}
\end{align}
(Notice that in these decompositions \eqref{eq:decomp-wedge-wedge-odd} and \eqref{eq:decomp-wedge-wedge-even}, $\lambda$ occurs transposed in $a_{\lambda^\transpose}(n, m), b_{\lambda^\transpose}(n, m)$ as opposed to \eqref{eq:decomp-sym-sym}.)
\end{fact}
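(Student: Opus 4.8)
The plan is to pass from representations to symmetric functions and exploit the classical interaction between plethysm and the fundamental involution $\omega$. Via the characteristic map, for $\dim V$ large enough a polynomial $\GL(V)$-representation is recorded by its character, a symmetric function in the eigenvalues of the acting matrix; under this dictionary $\Sym^m V$ corresponds to the complete homogeneous symmetric function $h_m$, $\Wedge^m V$ to the elementary symmetric function $e_m$, $\Schur^\lambda V$ to the Schur function $s_\lambda$, the multiplicity of $\Schur^\lambda V$ in a representation is the coefficient of $s_\lambda$ in the associated symmetric function, and applying the Schur functor $\Sym^n$ (resp.\ $\Wedge^n$) corresponds to the plethysm $h_n[-]$ (resp.\ $e_n[-]$). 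Hence the four spaces $\Sym^n\Sym^m V$, $\Wedge^n\Sym^m V$, $\Sym^n\Wedge^m V$, $\Wedge^n\Wedge^m V$ correspond to $h_n[h_m]$, $e_n[h_m]$, $h_n[e_m]$, $e_n[e_m]$, and by the definitions in \eqref{eq:decomp-sym-sym} we have $h_n[h_m]=\sum_\lambda a_\lambda(n,m)\,s_\lambda$ and $e_n[h_m]=\sum_\lambda b_\lambda(n,m)\,s_\lambda$.

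The key tool I would invoke is the ring involution $\omega$ on symmetric functions determined by $\omega(h_r)=e_r$, which also satisfies $\omega(e_r)=h_r$ and $\omega(s_\lambda)=s_{\lambda^\transpose}$, together with its behaviour under plethysm: for $g$ homogeneous of degree $m$ and any homogeneous $f$,
\[
	\omega(f[g]) = (\omega f)[\omega g] \quad\text{if $m$ is odd,}\qquad \omega(f[g]) = f[\omega g] \quad\text{if $m$ is even.}
\]
This is the classical fact underlying \cite{Carre1990, Manivel1998, ManivelM2015}; its short proof uses the identity $f[-X]=(-1)^{\deg f}\,\omega f$ for homogeneous $f$ (which itself reduces to $p_r[-X]=-p_r$ for power sums) and the associativity of plethysm: from $\omega(f[g])=(-1)^{nm}\,f[g[-X]]$ together with $g[-X]=(-1)^m\,\omega g$ one sees that the parity of $m$ is exactly what decides whether an extra $\omega$ must be applied to $f$.

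With this in hand the statement is immediate. For $\Wedge^n\Wedge^m V\mapsto e_n[e_m]=(\omega h_n)[\omega h_m]$: if $m$ is odd this equals $\omega(h_n[h_m])=\sum_\lambda a_\lambda(n,m)\,s_{\lambda^\transpose}=\sum_\mu a_{\mu^\transpose}(n,m)\,s_\mu$, while if $m$ is even, writing it as $e_n[\omega h_m]$ and using the even form of the identity with $f=e_n$, it equals $\omega(e_n[h_m])=\sum_\mu b_{\mu^\transpose}(n,m)\,s_\mu$. For $\Sym^n\Wedge^m V\mapsto h_n[e_m]=h_n[\omega h_m]$: if $m$ is odd, using the odd form of the identity with $f=e_n$ gives $h_n[\omega h_m]=\omega(e_n[h_m])=\sum_\mu b_{\mu^\transpose}(n,m)\,s_\mu$, and if $m$ is even it equals $\omega(h_n[h_m])=\sum_\mu a_{\mu^\transpose}(n,m)\,s_\mu$. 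Reading off the coefficient of $s_\mu$ as the multiplicity of $\Schur^\mu V$ and translating back through the characteristic map yields exactly \eqref{eq:decomp-wedge-wedge-odd} and \eqref{eq:decomp-wedge-wedge-even}.

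The only non-formal ingredient is the parity-dependent $\omega$–plethysm identity; the rest is bookkeeping. Accordingly the main point to get right is the sign/parity accounting in that identity, since it is precisely what produces the odd/even dichotomy: an even inner $\Wedge^m$ behaves like $\Sym^m$ after transposing the outer shape once, whereas an odd inner $\Wedge^m$ forces an additional transposition coming from $\omega$ acting on the outer symmetric function.
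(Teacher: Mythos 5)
The paper states this as a cited fact from~\cite{Carre1990, Manivel1998, ManivelM2015} and gives no proof of its own, so there is no "paper's approach" to compare against. Your argument is correct and is the standard proof: translate via the characteristic map to symmetric-function plethysm, use the fundamental involution $\omega$ with $\omega(h_r)=e_r$, $\omega(s_\lambda)=s_{\lambda^\transpose}$, and the parity-sensitive identity $\omega(f[g])=(\omega f)[\omega g]$ for $\deg g$ odd and $\omega(f[g])=f[\omega g]$ for $\deg g$ even. Your derivation of that identity from $f[-X]=(-1)^{\deg f}\omega f$ and associativity of plethysm is the right route; the one step you compress is the odd case, where after writing $\omega(f[g])=(-1)^{nm}f[g[-X]]=(-1)^{nm}f[-\omega g]$ one still needs the auxiliary identity $f[-h]=(-1)^{\deg f}(\omega f)[h]$ (itself immediate on the power-sum basis) to arrive at $(\omega f)[\omega g]$ with the $(-1)^{n(m+1)}$ sign collapsing to $1$. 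The subsequent bookkeeping for all four cases ($e_n[e_m]$ and $h_n[e_m]$, $m$ odd and even) is carried out correctly and matches \eqref{eq:decomp-wedge-wedge-odd} and \eqref{eq:decomp-wedge-wedge-even} exactly, including the transposition of $\lambda$ in the multiplicity index coming from $\omega(s_\lambda)=s_{\lambda^\transpose}$.
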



\section{Main Results: Complexity of Plethysm Coefficients} \label{sec:results}
In order to establish hardness of computing general plethysm coefficients $p_\lambda(\mu, \nu)$, it is of course sufficient to focus on the interesting special cases $a_\lambda(n, m)$ and $b_\lambda(n, m)$. To that end, we study the following problems from a complexity theoretic perspective\footnotemark.
\footnotetext{In all problems the inputs can be encoded in binary or in unary, as it makes no difference for our results below. The hardness results still hold for the unary encoding, while the polynomial-time algorithms still work for the binary encoding.}

\begin{problem}[\Plethysm]
Given a partition $\lambda$ and two integers $n$ and $m$, output $a_\lambda(n,m)$.
\end{problem}

\begin{problem}[\PlethysmPositivity]
Given a partition $\lambda$ and two integers $n$ and $m$, output ``accept'' if $a_\lambda(n,m)>0$, and output ``reject'' otherwise.
\end{problem}

\begin{problem}[\DualPlethysm]
Given a partition $\lambda$ and two integers $n$ and $m$, output $b_\lambda(n,m)$.
\end{problem}

\begin{problem}[\DualPlethysmPositivity]
Given a partition $\lambda$ and two integers $n$ and $m$, output ``accept'' if $b_\lambda(n,m)>0$, and output ``reject'' otherwise.
\end{problem}

We remark that the computation of $a_\lambda(n, 2)$ and $b_\lambda(n, 2)$ is possible in polynomial time\footnotemark, so we need to deal with cases $m \geq 3$ only. We will be particularly interested in cases where one of the parameters $m$ or $n$ is fixed. For each of the problems above we therefore define versions where $m$, respectively $n$, is fixed by appending \Inner{$m$}, respectively \Outer{$n$}, to the problem name. For example, \Plethysm\Inner{$3$} is the problem ``given $\lambda$ and $n$, compute $a_\lambda(n,3)$''.
\footnotetext{By~\cite[Proposition~2.3.8, Proposition~2.3.9]{Stanley1999b, Weyman2003}, $\Sym^n \Sym^2 V$ decomposes multiplicity-free as \raisebox{0pt}[0pt][0pt]{$\Directsum_{\lambda \in A(n)} \Schur^\lambda V$} for $A(n) = \{ \text{$\lambda$ of size $2n$} : \text{each $\lambda_i$ is even} \}$; similarly, we have \raisebox{0pt}[0pt][0pt]{$\Wedge^n \Sym^2 V = \Directsum_{\lambda \in B(n)} \Schur^\lambda V$}, where $B(n) = \{ \text{$\lambda$ of size $2n$} : \text{if $\lambda_i \geq i$, then $\lambda_i = \lambda^\transpose_i + 1$} \}$. Testing membership $\lambda \in A_n$ and $\lambda \in B_n$ can both be performed in polynomial time in~$|\lambda|$.}

We remark that $\PlethysmPositivity$ is called the ``zero locus of plethysm coefficients'' in~\cite{KahleM2016}, where this problem is labeled as ``highly nontrivial''. Our main result makes this intuitive statement precise. We prove that $\PlethysmPositivity$ is $\NP$-hard:

\begin{theorem} \label[theorem]{thm:plethysm-positivity-np-hard}
\Plethysm\Inner{$m$} and \DualPlethysm\Inner{$m$} are \NP{}-hard for any fixed $m \geq 3$. In particular, \PlethysmPositivity{} and \DualPlethysmPositivity{} are \NP{}-hard.
\end{theorem}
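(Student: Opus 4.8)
The plan is to assemble the three ingredients prepared in \cref{sec:m=3,sec:plethysm-bounds,sec:reduction}. Since fixing the inner parameter only specializes \Plethysm{} and \DualPlethysm{}, the ``in particular'' clause about \PlethysmPositivity{} and \DualPlethysmPositivity{} follows for free once the \Inner{$m$} versions are shown \NP{}-hard; and since the cases $m\le 2$ are already known to be easy, there is no loss in aiming at $m=3$ first. Concretely, I would first invoke the embedding result of \cref{sec:m=3} to reduce to the boundary value $m=3$, while retaining enough flexibility in the construction so that the same instances can be realised at any fixed $m\ge 3$ (the combinatorial model below will be insensitive to $m$ in that range). From here on one only studies $a_\lambda(n,3)$ and, via $\Wedge^n\Sym^3 V$, its dual $b_\lambda(n,3)$.

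Next I would set up the discrete-tomography dictionary of \cref{sec:plethysm-bounds}. Restrict attention to partitions $\lambda$ of a tightly controlled shape --- a long first row (or a near-rectangle) carrying only a bounded-size ``perturbation'' that encodes the combinatorial data. On such a family, analysing the highest weight vectors of weight $\lambda$ in $\Sym^n\Sym^3 V$ --- equivalently, expanding the plethysm $h_n[h_3]$ in the Schur basis --- should show that $a_\lambda(n,3)$ equals the number of $0/1$-arrays whose line sums in a fixed set of directions are prescribed by $n$ and the parts of $\lambda$: the weight condition becomes the line-sum constraints, and annihilation under the raising operators becomes a consistency/monotonicity constraint. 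The coefficient $b_\lambda(n,3)$ is handled the same way, starting from $e_n[h_3]$ and using \cref{fact:duality} to pass between the $\Sym$- and $\Wedge$-pictures. With this dictionary in hand, positivity of the plethysm coefficient is \emph{exactly} the feasibility of a reconstruction instance in discrete tomography.

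Finally I would run the chain of reductions of \cref{sec:reduction}: start from a strongly \NP{}-hard reconstruction problem (existence of a $0/1$-array with prescribed line sums in several directions, or a closely related packing problem), encode its data as the line-sum parameters of a partition $\lambda$ and an integer $n$ in the controlled family above, check that $\lambda$ and $n$ stay polynomially bounded even in unary, and conclude that $a_\lambda(n,3)>0$ (resp.\ $b_\lambda(n,3)>0$) if and only if the reconstruction instance is solvable. The \SharpP{}-hardness stated in \cref{thm:plethysm-sharpp-hard} then comes from the counting versions of the same problems.

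The step I expect to be the real obstacle is making the tomographic model \emph{exact} rather than just an upper or lower bound. In contrast to Kronecker coefficients, which admit a fairly direct lattice-point description, a plethysm coefficient is governed by an inclusion--exclusion over symmetric-group characters, so a naive count of combinatorial witnesses overshoots $a_\lambda(n,3)$ and sign cancellations may occur. The heart of the argument is therefore to carve out a regime of shapes $\lambda$ in which the correction terms provably vanish --- this is precisely where the new lower and upper bounds on plethysm coefficients developed in \cref{sec:plethysm-bounds} do the work --- and then to design the encoding so that every instance produced by the reduction lands in that regime. A secondary technical nuisance is to carry $a_\lambda$ and $b_\lambda$ (and, through \cref{fact:duality}, the $\Wedge^m$-variants) through the whole pipeline uniformly, and to verify that ``fix $m=3$'' can be upgraded to ``fix $m$'' for every $m\ge 3$ without breaking the correspondence.
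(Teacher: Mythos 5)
Your proposal follows the same three-stage architecture as the paper's proof: reduce the inner parameter to $m=3$ via the column-prepending and duality device of \cref{sec:m=3}, translate $a_\lambda(n,3)$ and $b_\lambda(n,3)$ into a discrete-tomography feasibility question by looking at weight vectors and raising operators as in \cref{sec:plethysm-bounds}, and then run a chain of reductions from the Gardner--Gritzmann--Prangenberg \TwoDXRay{} problem. You also correctly anticipate the central technical obstacle: the naive tomographic count is only an \emph{upper} bound on the plethysm coefficient, and one must restrict to a family of $\lambda$ for which the bound is exact (in the paper this is the minimal-coordinate-sum regime of \cref{prop:min-coordinate-sum}).

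There is, however, one concrete gap you do not foresee. Because the weight of a monomial $X_xX_yX_z$ in $\Sym^3 V$ is $\delta_x+\delta_y+\delta_z$, the plethysm coefficient ``sees'' only the single \emph{sum-marginal} $S(P)=X(P)+Y(P)+Z(P)$, not the three axis-marginals $\mu,\nu,\rho$ individually. So the dictionary lands you at the \emph{symmetric} reconstruction problem (sum-marginal plus the cone constraint $x\ge y\ge z$), which is \emph{not} the problem proved hard in \cite{GardnerGP1999}. Closing this gap is the real new ingredient: \cref{lem:reduction-2dxray-sym2dxray} packs $\rho',\nu',\mu'$ into disjoint blocks of one composition of support $13r'$, with zero-blocks calibrated so that every solution point is forced to have its three coordinates in three pairwise disjoint intervals, which simultaneously enforces the cone constraint and lets the single sum-marginal recover the three original marginals. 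Without such a symmetrization gadget your reduction cannot be completed. A secondary inaccuracy: the regime where the bounds coincide is not ``a long first row with a small perturbation'' but rather $\lambda$'s equal to the sum-marginal of a complete pyramid $\overline P_{r-1}$ plus a top-layer perturbation, i.e.\ a roughly quadratic staircase; the minimal-coordinate-sum characterization (\cref{lem:coordinate-sum-pyramid}) is exactly what pins this shape down.
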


By a thorough analysis of the proof, show that $\Plethysm$ is \SharpP{}-hard:

\begin{theorem} \label[theorem]{thm:plethysm-sharpp-hard}
\Plethysm\Inner{$m$} and \DualPlethysm\Inner{$m$} are \SharpP{}-hard for any fixed $m \geq 3$. In particular, \Plethysm{} and \DualPlethysm{} are \SharpP{}-hard.
\end{theorem}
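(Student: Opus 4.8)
The plan is to take the reduction behind \cref{thm:plethysm-positivity-np-hard} and upgrade it from a decision reduction to a \emph{counting} reduction. That is, instead of only arranging that $b_\lambda(n,3)>0$ iff a certain discrete-tomography instance is feasible, I would arrange that $b_\lambda(n,3)$ \emph{equals} the number of solutions of that instance, up to a correction term that is computable in polynomial time (so at worst a polynomial-time Turing reduction, ideally a parsimonious one). Concretely I would: (i) fix the inner parameter to $m=3$ via \cref{sec:m=3}, checking that the reduction there is count-preserving; (ii) use the translation of \cref{sec:plethysm-bounds} to rewrite the relevant subcase of $b_\lambda(n,3)$ in terms of $3$-dimensional $0/1$-arrays with prescribed line sums subject to the shape constraints forced by the $\Wedge$/$\Sym$-structure; (iii) re-run the reduction chain of \cref{sec:reduction} while tracking exact solution counts rather than mere feasibility; and (iv) transfer the resulting \SharpP-hardness from \DualPlethysm\Inner{$3$} to \Plethysm\Inner{$3$} and then to every fixed $m \geq 3$ using \cref{fact:duality} together with the same padding/stability trick used for \cref{thm:plethysm-positivity-np-hard}, all of which are count-preserving operations.

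The combinatorial heart of the argument is the observation, already exploited for positivity, that in the isolated subcase the plethysm coefficient admits a cancellation-free description: it counts $3$-dimensional tomography arrays of the prescribed type, in the spirit of the combinatorial descriptions collected in \cref{sec:positive-formulas}. The one genuinely new issue compared with positivity is that $a_\lambda(n,m)$ and $b_\lambda(n,m)$ are a priori defined by an alternating expression (a multiplicity is $\dim W_\lambda$ minus the dimension of the image of the raising operators into $W_\lambda$, equivalently a signed sum of weight multiplicities), so a naive count could be spoiled by cancellation. I would handle this by proving that for exactly the family $(\lambda,n)$ produced by the reduction the correction term is either zero or is itself a count of the same kind for a shifted partition; iterating this telescoping a polynomial number of times, or absorbing each correction into one further oracle call, isolates a single \SharpP quantity.

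For the base problem I would take counting $3$-dimensional $0/1$-arrays with prescribed line sums in the three axis directions, which is \SharpP-complete: its feasibility version is the classical \NP-complete $3$-dimensional transportation/tomography problem underlying the Kronecker hardness of the Ikenmeyer--Mulmuley--Walter work, and its counting version inherits \SharpP-hardness by a standard parsimonious reduction (e.g.\ from counting perfect matchings in bipartite graphs, i.e.\ the $0/1$-permanent). The nontrivial point here is that the representation theory does not present us with \emph{arbitrary} such arrays but with arrays whose slices in one direction are nested or whose layers are partition-shaped; so I would have to check that \SharpP-completeness persists under these structural restrictions, just as the proof of \cref{thm:plethysm-positivity-np-hard} must check that \NP-completeness persists. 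Composing this with steps (i)--(iv) then yields \SharpP-hardness of \DualPlethysm\Inner{$3$}, hence of \Plethysm\Inner{$m$} and \DualPlethysm\Inner{$m$} for all fixed $m \geq 3$, and a fortiori of \Plethysm{} and \DualPlethysm{}.

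I expect the main obstacle to be bookkeeping of exactness through the entire chain rather than any single clever step: every equivalence invoked in \cref{sec:m=3}, \cref{sec:plethysm-bounds}, and \cref{sec:reduction} must be promoted to an equality of integers (or an equality up to an explicitly computable factor or summand), the telescoping correction in step~(ii) must be shown to stay within the problem class, and the underlying tomography counting problem must be verified to remain \SharpP-hard once the partition-shape constraints are imposed. This is precisely why the theorem is phrased as following from ``a thorough analysis of the proof'' of \cref{thm:plethysm-positivity-np-hard}: the reduction is already there, and what remains is the delicate verification that it counts.
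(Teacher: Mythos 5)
Your high-level plan matches the paper's: both start from a \SharpP{}-hard discrete tomography counting problem and push it through a chain of \emph{parsimonious} reductions into a plethysm coefficient, using the $m\to m+1$ reduction of \cref{lem:reduction-m=3} and the duality of \cref{fact:duality} to cover all fixed $m\geq 3$ at the end. You also correctly flag the central obstacle: a plethysm coefficient is a highest-weight multiplicity, a priori a \emph{signed} quantity, so one cannot simply count weight vectors. Where your proposal falls short is on \emph{how} that obstacle is overcome, and on one structurally new step it glosses over.

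First, the cancellation problem. You hedge between ``the correction term is zero'' and a telescoping/recursive absorption of corrections. The paper does something sharper and entirely cancellation-free: \cref{thm:plethysm-bounds} sandwiches $b_\lambda(n,3)$ between the number of \emph{pyramids} and the number of \emph{point sets} in the closed cone with sum-marginal $\lambda$, and \cref{prop:min-coordinate-sum} gives an explicit arithmetic condition on $\lambda$ (minimality of the coordinate sum $B(\lambda)=\overline\beta(|\lambda|/3)$) under which \emph{every} point set with that sum-marginal is automatically a pyramid, hence every weight vector is a highest weight vector and the two bounds coincide. The whole reduction chain (\cref{lem:reduction-2dxray-3dxray}, via the complete-pyramid embedding of \cref{lem:pyramid-embedding}) is engineered to produce only $\lambda$ satisfying this extremal condition, so the output plethysm coefficient \emph{equals} the tomography count with no residual term and no telescoping. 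A telescoping argument in the spirit you sketch would not obviously terminate in a polynomial number of steps, and is in any case unnecessary; identifying the extremal-coordinate-sum subcase is the real key insight, and it is missing from your proposal.

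Second, the tomography side has a step your proposal does not anticipate. Because $b_\lambda(n,3)$ is tied to \emph{sum-marginals} of point sets in a cone (a single sequence $S(P)$), not to three independent axis marginals, the base problem must first be \emph{symmetrized}. The paper's \cref{lem:reduction-2dxray-sym2dxray} is a parsimonious reduction from \#\TwoDXRay{} to its (skew-)symmetric variant, and the paper explicitly notes this is the first such connection between X-ray problems and their symmetric counterparts; it is not a routine verification that ``\SharpP{}-completeness persists under structural restrictions'' but a new construction (the $13r'$-stretching and placement into $M\times N\times R$). Without this step your plan has no way to get from standard line-sum tomography into the sum-marginal form that the representation theory actually sees. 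Finally, a smaller point of framing: the paper does not first prove \NP{}-hardness and then ``upgrade''; the reductions are built parsimonious from the outset, so \cref{thm:plethysm-positivity-np-hard} is the corollary, not the starting point.
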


Considering \cref{thm:plethysm-positivity-np-hard} and \cref{thm:plethysm-sharpp-hard}, we obtain an interesting complexity-theoretic contrast if we fix the outer parameter instead of the inner, as seen in the following proposition, whose proof can be extracted from \cite{KahleM2016}\footnotemark.
\footnotetext{\cite{KahleM2016} shows that, for any fixed partition $\mu$, the function $(\lambda_0, \ldots, \lambda_{|\mu|-1}) \mapsto p_\lambda(\mu, (m))$ has a constant size arithmetic formula (with modular arithmetic) whose inputs are the $\lambda_i$; see the appendix of the arXiv version~\cite{KahleM2015} for a good exposition.}

\begin{proposition} \label[proposition]{prop:plethysm-dichotomy}
\Plethysm\Outer{$n$} and \DualPlethysm\Outer{$n$} are in \P{} for any fixed~$n$. More generally, computing $p_\lambda(\mu, (m))$ can be done in polynomial time if $|\mu|$ is fixed.
\end{proposition}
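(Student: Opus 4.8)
The plan is to reduce the computation of $p_\lambda(\mu,(m))$ to counting lattice points in polytopes of constant dimension, following the approach of~\cite{KahleM2016}; I sketch a self-contained version. Fix the partition $\mu$ and set $d := |\mu|$, a constant. Note that \Plethysm\Outer{$n$} is the case $\mu=(n)$ and \DualPlethysm\Outer{$n$} the case $\mu=(1^n)$, both of size $n$, so it suffices to handle general $p_\lambda(\mu,(m))$ with $d$ fixed. First I would reduce to $\dim V = d$: since $\Schur^\mu \Sym^m V$ is a $\GL(V)$-submodule of $(\Sym^m V)^{\otimes d} = (\Schur^{(m)} V)^{\otimes d}$, and iterating Pieri's rule on $\Schur^{(m)}V$ adds at most one row per tensor factor, every $\lambda$ with $p_\lambda(\mu,(m)) > 0$ has height at most $d$ (if the input $\lambda$ is taller, we output $0$). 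With $\dim V = d$, $p_\lambda(\mu,(m))$ equals the multiplicity of the Schur polynomial $s_\lambda(x_1,\dots,x_d)$ in the character $s_\mu[h_m](x_1,\dots,x_d)$ of $\Schur^\mu \Sym^m V$.

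To extract this single Schur coefficient I would multiply by the Vandermonde determinant $a_\delta := \prod_{1 \le i < j \le d}(x_i - x_j)$, with $\delta := (d-1,d-2,\dots,1,0)$: since $s_\lambda \cdot a_\delta = \sum_{\sigma \in \S_d} \mathrm{sgn}(\sigma)\,x^{\sigma(\lambda+\delta)}$ and $\lambda+\delta$ is strictly decreasing, the wanted multiplicity is the coefficient of the monomial $x^{\lambda+\delta}$ in $a_\delta \cdot s_\mu[h_m]$. Expanding $a_\delta = \sum_{\tau \in \S_d} \mathrm{sgn}(\tau)\,x^{\tau(\delta)}$ and $s_\mu$ in the power-sum basis gives $s_\mu[h_m] = \sum_{\rho \vdash d} \frac{\chi_\mu(\rho)}{z_\rho} \prod_{i=1}^{\ell(\rho)} h_m(x_1^{\rho_i},\dots,x_d^{\rho_i})$, so $p_\lambda(\mu,(m))$ becomes a fixed rational combination of constantly many integers $N(\rho,\tau)$. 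Writing out the monomials of each $h_m$ factor, $N(\rho,\tau)$ is the number of tuples of nonnegative integers $(a^{(i)}_j)_{1 \le i \le \ell(\rho),\, 1 \le j \le d}$ with $\sum_j a^{(i)}_j = m$ for every $i$ and $\sum_i \rho_i\,a^{(i)}_j = (\lambda+\delta-\tau(\delta))_j$ for every $j$.

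Finally I would invoke Barvinok's algorithm: each $N(\rho,\tau)$ counts the lattice points of a polytope with at most $d^2$ variables and $O(d^2)$ linear constraints whose coefficients have bit-length polynomial in the input, so — the ambient dimension $d^2$ being a constant — it is computable in polynomial time; the coefficients $\chi_\mu(\rho)/z_\rho$ and the index set of the sum depend only on the fixed $\mu$, so assembling the $N(\rho,\tau)$ into $p_\lambda(\mu,(m))$ is bounded-size exact rational arithmetic, yielding the result in polynomial time (or one may simply quote~\cite{KahleM2016}). The step I expect to be the real obstacle is exactly this \emph{binary}-encoding issue: if $m$ were given in unary, directly expanding each $h_m$ into its $\binom{m+d-1}{d-1}$ monomials would already suffice, whereas for binary $m$ the fixed-dimension lattice-point machinery is genuinely needed. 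Getting the reduction to $d$ variables and the sign bookkeeping in the Vandermonde expansion exactly right is the remaining, routine, part.
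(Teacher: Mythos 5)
Your proposal is correct and is, in effect, a self-contained rendering of the lattice-point/quasipolynomiality argument from \cite{KahleM2016}, which is precisely the reference the paper defers to for this proposition (the paper itself gives no proof, only the footnoted citation). The three ingredients you use --- truncation to $d=|\mu|$ variables via Pieri/stability, extraction of the Schur coefficient by multiplying with the Vandermonde $a_\delta$, and expressing $s_\mu[h_m]$ through the power-sum expansion of $s_\mu$ so that each term becomes a lattice-point count in a polytope of constant ambient dimension handled by Barvinok --- are exactly the mechanism behind the ``constant-size arithmetic formula in the $\lambda_i$'' statement quoted in the paper's footnote, so this is the same approach rather than a genuinely different one.
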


This striking difference of complexities between \cref{thm:plethysm-positivity-np-hard} (if $m$ is fixed, then even deciding positivity of $a_\lambda(n,m)$ is $\NP$-hard) and \cref{prop:plethysm-dichotomy} (if $n$ is fixed, then even computing the exact value of $a_\lambda(n,m)$ can be done in polynomial time) could be interpreted as an explanation why it is considered to be much more difficult to obtain exact results about $a_\lambda(n,m)$ for $n \gg m$ compared to the case where $m \gg n$, see e.g.~\cite{Weintraub1990, Thibon1992, KahleM2016, KahleM2018, KimotoS2019} for evidence.

In addition, the \SharpP{}-hardness result leads to a completeness result for the complexity class \GapP{}---that is, the class of all counting problems implementable by a nondeterministic polynomial-time Turing machine, where the output is interpreted as the number accepting computation paths minus the number of rejecting computation paths~\cite{FennerFK1994}.

\begin{theorem} \label{thm:plethysm-gapp-complete}
\Plethysm{}, \DualPlethysm{} and the problem of computing general plethysm coefficients are \GapP{}-complete.
\end{theorem}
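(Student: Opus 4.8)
The plan is to prove the two inclusions separately; \GapP{}-hardness is a short consequence of \cref{thm:plethysm-sharpp-hard}, and the substance of the theorem lies in establishing \GapP{} membership. For the membership part we take the inputs to be encoded in unary, so that $|\lambda|$ (and hence $n$ and $m$) is polynomially bounded by the input length; the hardness statements of \cref{thm:plethysm-positivity-np-hard,thm:plethysm-sharpp-hard} hold in this encoding as well. For hardness, recall the theorem of Fenner--Fortnow--Kurtz~\cite{FennerFK1994} that $\GapP{}=\SharpP{}-\SharpP{}$, i.e.\ every $f\in\GapP{}$ can be written as $f=g_1-g_2$ with $g_1,g_2\in\SharpP{}$. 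By \cref{thm:plethysm-sharpp-hard}, \Plethysm{} is \SharpP{}-hard under polynomial-time Turing reductions, so each $g_i$ Turing-reduces to \Plethysm{}, and hence so does $f$ (run the two reductions and subtract). Thus \Plethysm{} is \GapP{}-hard; the same argument applies verbatim to \DualPlethysm{}, and since \Plethysm{} is a special case of computing $p_\lambda(\mu,\nu)$, the latter is \GapP{}-hard as well.

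For membership, the key idea is to extract the multiplicity by Weyl antisymmetrization rather than by averaging a character over $\S_m\wr\S_n$: the latter carries factorial denominators that \GapP{} cannot eliminate, whereas the former expresses $p_\lambda(\mu,\nu)$ as a signed sum of genuinely nonnegative counts. We may assume $|\lambda|=|\mu|\cdot|\nu|$, as otherwise the coefficient is $0$; put $d:=\max(\ell(\lambda),1)$, $V:=\Complex^d$, and $\delta:=(d-1,d-2,\dots,1,0)$. The $\GL(V)$-character of $\Schur^\mu\Schur^\nu V$, evaluated at $\diag(x_1,\dots,x_d)$, equals the plethystic evaluation $s_\mu[s_\nu](x_1,\dots,x_d)$. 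Multiplying this by the Vandermonde determinant $\prod_{i<j}(x_i-x_j)=\sum_{w\in\S_d}\mathrm{sgn}(w)\,x^{w(\delta)}$, invoking $\Schur^\mu\Schur^\nu V=\Directsum_\kappa(\Schur^\kappa V)^{\directsum p_\kappa(\mu,\nu)}$ together with $s_\kappa\cdot\prod_{i<j}(x_i-x_j)=\sum_{w\in\S_d}\mathrm{sgn}(w)\,x^{w(\kappa+\delta)}$, and extracting the coefficient of $x^{\lambda+\delta}$ yields the standard identity
\[
	p_\lambda(\mu,\nu)\;=\;\sum_{w\in\S_d}\mathrm{sgn}(w)\cdot D_{\lambda+\delta-w(\delta)}(\Schur^\mu\Schur^\nu V),
\]
where $D_\beta(W)$ denotes the dimension of the weight-$\beta$ space of a $\GL(V)$-representation $W$, with the convention $D_\beta(W):=0$ whenever $\beta$ has a negative entry. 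Splitting the sum by the sign of $w$ gives $p_\lambda(\mu,\nu)=P-N$ with $P:=\sum_{w\,:\,\mathrm{sgn}(w)=1}D_{\lambda+\delta-w(\delta)}(\Schur^\mu\Schur^\nu V)$ and $N:=\sum_{w\,:\,\mathrm{sgn}(w)=-1}D_{\lambda+\delta-w(\delta)}(\Schur^\mu\Schur^\nu V)$, so it suffices to place $P$ and $N$ into \SharpP{}.

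This reduces to showing that $(\lambda,\mu,\nu,w)\mapsto D_{\lambda+\delta-w(\delta)}(\Schur^\mu\Schur^\nu V)$ is in \SharpP{}. The weight-$\beta$ dimension of $\Schur^\mu\Schur^\nu V$ is the coefficient of $x^\beta$ in $s_\mu[s_\nu](x_1,\dots,x_d)$; making the plethystic substitution explicit (each semistandard Young tableau of shape $\nu$ with entries in $\{1,\dots,d\}$ becomes an ``inner variable'', distinct occurrences of equal monomials kept distinct, and the set of these tableaux equipped with a fixed linear order, e.g.\ lexicographic on reading words), this coefficient counts the fillings $(T_c)_c$ of the cells $c$ of the Young diagram of $\mu$ by semistandard tableaux of shape $\nu$ that are weakly increasing along rows of $\mu$, strictly increasing down columns of $\mu$ with respect to that order, and satisfy $\sum_c\mathrm{wt}(T_c)=\beta$. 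Such a filling is encoded by $O(|\mu|\,|\nu|\log d)$ bits and its validity is checked in polynomial time; hence the map is in \SharpP{}, so $P,N\in\SharpP{}$ (a nondeterministic machine guesses a permutation $w$ of the prescribed sign and a filling for $\beta=\lambda+\delta-w(\delta)$), and $p_\lambda(\mu,\nu)=P-N\in\GapP{}$. For the two distinguished families the description collapses: $D_\beta(\Sym^n\Sym^m V)$ is the number of weakly increasing $n$-tuples of degree-$m$ monomials in $x_1,\dots,x_d$ with product $x^\beta$, and $D_\beta(\Wedge^n\Sym^m V)$ the number of strictly increasing such tuples, which puts \Plethysm{} and \DualPlethysm{} in \GapP{}. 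Together with the hardness above, \Plethysm{}, \DualPlethysm{} and the computation of general plethysm coefficients are \GapP{}-complete.

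The step that I expect to require the most care is placing $D_\beta(\Schur^\mu\Schur^\nu V)$ in \SharpP{}: one has to spell out the plethystic substitution precisely, fix a concrete polynomial-time-decidable encoding of ``semistandard tableaux of semistandard tableaux'', and verify that none of the certificates blows up super-polynomially -- which is exactly the point at which the unary encoding is used. The antisymmetrization identity and the closure of \SharpP{} and \GapP{} under sums over polynomial-length witnesses are classical and pose no difficulty.
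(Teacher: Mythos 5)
Your proof is correct and, despite the different vocabulary, is essentially the same argument as the paper's, landing on the identical formula. What the paper does via the inner product $\innerprod{s_\lambda}{s_\mu[s_\nu]}$, the Jacobi--Trudi determinant $s_\lambda=\det(h_{\lambda_i-i+j})$, and the $\langle h_\kappa,m_\pi\rangle=\delta_{\kappa\pi}$ duality, you do via Weyl antisymmetrization with the Vandermonde; both routes produce precisely the signed sum over $\S_d$ of the quantities $q_{\lambda-(0,\ldots,\ell-1)+\sigma}(\mu,\nu)=D_{\lambda+\delta-w(\delta)}(\Schur^\mu\Schur^\nu V)$, so the two derivations are different presentations of the same classical fact. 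Where the proofs diverge a little is in the $\SharpP$-membership of the weight-space dimension: the paper first expands $s_\mu$ in the monomial basis via Kostka coefficients $K_{\mu\pi}$ and then counts permuted compositions $\rho\in\S_l(\pi)$, whereas you directly certify the weight-space dimension by a single semistandard filling of the shape $\mu$ by inner tableaux of shape $\nu$; these certificates are in natural bijection (group the filling by the weight of its underlying outer tableau), so the counts agree, and your formulation is arguably the cleaner one. On the hardness side the paper's footnote encodes both $\SharpP$ functions of a $\GapP$ difference in disjoint blocks of a single nonnegative function and peels them apart in post-processing, while you simply invoke Turing reducibility together with $\GapP=\SharpP-\SharpP$ from~\cite{FennerFK1994}; both are the standard folklore argument. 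Finally, you are right to flag the encoding issue for membership: the paper's footnote about unary/binary is attached to the hardness results, and for the $\GapP$ upper bound one does need the certificates, of size $O(|\mu|\,|\nu|\log d)$, to be polynomially bounded, which your unary assumption (or the check $|\lambda|=|\mu|\,|\nu|$) takes care of.
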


The proof of~\cref{thm:plethysm-gapp-complete} is provided in~\cref{sec:gapp-completeness}.


\section{Fixing the Inner Parameter to \boldmath$m = 3$} \label{sec:m=3}

To ultimately prove \cref{thm:plethysm-positivity-np-hard,thm:plethysm-sharpp-hard}, we preliminarily demonstrate that proving hardness of the case $m = 3$ is enough:

\begin{lemma} \label[lemma]{lem:reduction-m=3}
Let $m$ be fixed. Given a partition $\lambda$ encoded in unary, we can compute partitions~$\pi$ and~$\pi'$ in polynomial time so that $a_\lambda(n, m) = b_\pi(n, m + 1)$ and $b_\lambda(n, m) = a_{\pi'}(n, m + 1)$.
\end{lemma}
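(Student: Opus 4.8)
Throughout, write $\Schur^\lambda V$ for the Weyl module and recall that the multiplicity of $\Schur^\lambda V$ in a representation $W$ is the number of highest weight vectors of weight $\lambda$ in $W$.

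\medskip
\noindent\textbf{Setup and choice of $\pi,\pi'$.}
First dispose of the case $m\nmid|\lambda|$: then $a_\lambda(n',m)=b_\lambda(n',m)=0$ for every $n'$ because the degrees never match, and we take $\pi=\pi'$ to be any single-row partition whose size is not divisible by $m+1$, so that $b_\pi(n',m+1)=a_{\pi'}(n',m+1)=0$ for all $n'$ as well. Otherwise set $n:=|\lambda|/m$; since $a_\lambda(n',m)$ and $b_\lambda(n',m)$ vanish unless $n'=n$, it suffices to match the coefficients at this single value of the outer parameter. We take in both parts
\[
\pi=\pi':=\lambda+(1^n),
\]
the partition obtained by prepending a column of $n$ boxes to $\lambda$ (equivalently $\pi_i=\lambda_i+1$ for $i\le n$ and $\pi_i=\lambda_i$ for $i>n$). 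Since $|\lambda|$ and $n$ are bounded by the (unary) input size, $\pi$ is computable in polynomial time. By iterated Pieri, tensoring with $\Sym^m V$ adds a horizontal strip and hence increases the number of rows by at most one; thus every constituent of $(\Sym^m V)^{\tensor n}$, and in particular of the subrepresentation $\Wedge^n\Sym^m V$ and of the quotient $\Sym^n\Sym^m V$, has at most $n$ rows. Consequently, whenever $a_\lambda(n,m)>0$ or $b_\lambda(n,m)>0$ the partition $\lambda$ has at most $n$ rows, so $\pi=\lambda+(1^n)$ has first column of height \emph{exactly} $n$; this is the crucial structural feature we will exploit.

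\medskip
\noindent\textbf{The key decomposition.}
By Pieri, $\Sym^{m+1}V$ is a $\GL(V)$-direct summand of $V\tensor\Sym^m V$ with complement $\Schur^{(m,1)}V$; let $\mu\colon V\tensor\Sym^m V\to\Sym^{m+1}V$ be the multiplication map and $\iota$ its section. Applying $\Wedge^n$ and the dual Cauchy formula gives
\[
\Wedge^n(V\tensor\Sym^m V)=\Directsum_{\kappa\vdash n}\Schur^\kappa V\tensor\Schur^{\kappa^\transpose}\Sym^m V ,
\]
and the summand at $\kappa=(1^n)$ is $\Wedge^n V\tensor\Sym^n\Sym^m V$. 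Inside it, by Pieri $\Wedge^n V\tensor\Schur^\rho V=\Directsum_{\sigma}\Schur^\sigma V$ over $\sigma$ with $\sigma/\rho$ a vertical $n$-strip; since $\pi$ has exactly $n$ rows, deleting a vertical $n$-strip from $\pi$ forces removing the last box of every row, i.e.\ the first column, leaving $\rho=\pi-(1^n)=\lambda$. Hence $\Schur^\pi V$ occurs in $\Wedge^n V\tensor\Sym^n\Sym^m V$ with multiplicity exactly $a_\lambda(n,m)$, and $\lambda$ is the unique partition for which this happens. The analogous statement with $\Sym^n$ and the ordinary Cauchy formula isolates $\Wedge^n V\tensor\Wedge^n\Sym^m V$ as the $\kappa=(1^n)$ summand of $\Sym^n(V\tensor\Sym^m V)$, in which $\Schur^\pi V$ occurs with multiplicity $b_\lambda(n,m)$.

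\medskip
\noindent\textbf{Passing to the quotient.}
It remains to show that applying $\Wedge^n\mu$ (resp.\ $\Sym^n\mu$) does not change the multiplicity of $\Schur^\pi V$, i.e.\ that
\[
b_\pi(n,m+1)=a_\lambda(n,m)
\qquad\text{and}\qquad
a_{\pi'}(n,m+1)=b_\lambda(n,m).
\]
The kernel of $\Wedge^n\mu$ is the image of $\Schur^{(m,1)}V\tensor\Wedge^{n-1}(V\tensor\Sym^m V)$. Using the section identity $\mu\circ\iota=\mathrm{id}$ (so that $\Wedge^n\iota$ is a split injection, with retraction $\Wedge^n\mu$) together with a Littlewood--Richardson analysis, one shows that, restricted to the $\Schur^\pi V$-isotypic component, $\Wedge^n\mu$ is injective on the $\kappa=(1^n)$ summand while the copies of $\Schur^\pi V$ contributed by the remaining Cauchy summands are exactly those absorbed by $\ker(\Wedge^n\mu)$; thus only the $\kappa=(1^n)$ contribution survives, and we obtain $b_\pi(n,m+1)=a_\lambda(n,m)$. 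The decisive mechanism is again the long first column of $\pi$: in any Littlewood--Richardson filling of a skew shape $\pi/\rho$ the first column reads $1,2,\dots$ from the top, so having it be of height $n-\ell(\rho)$ forces the content to begin $1,2,\dots,n-\ell(\rho)$, a ballot constraint that rigidly controls which skew shapes and which contents can occur and pins down the surviving copy. The second identity, $a_{\pi'}(n,m+1)=b_\lambda(n,m)$, follows by the symmetric argument with $\Sym^n$ in place of $\Wedge^n$, or alternatively may be deduced from the first by transposing via the dualities of \cref{fact:duality} (note that exactly one of $m$, $m+1$ is even).

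\medskip
\noindent\textbf{Main obstacle.}
The only nonformal step is the Littlewood--Richardson bookkeeping in the third paragraph. Unlike the bound "$\le n$ rows'', which isolated $\pi$ at the level of \emph{which} constituents can appear, there is no crude inequality that isolates the $\kappa=(1^n)$ term among all Cauchy summands: for $\kappa\neq(1^n)$ the summand $\Schur^\kappa V\tensor\Schur^{\kappa^\transpose}\Sym^m V$ can and does contain copies of $\Schur^\pi V$ (e.g.\ already for $n=2$, $m=1$). One must therefore show precisely that the kernel of $\Wedge^n\mu$ removes exactly these spurious copies while leaving the $\kappa=(1^n)$ copies untouched; making this cancellation exact is where the work lies.
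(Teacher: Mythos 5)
Your choice $\pi=\pi'=\lambda+(1^n)$ is exactly the partition the paper constructs (the paper defines $\pi^\transpose$ by prepending a row of width $n$ to $\lambda^\transpose$, which is the same operation). However, you have explicitly flagged, and I confirm, a genuine gap: the ``Passing to the quotient'' paragraph is not a proof. It asserts that $\Wedge^n\mu$ is injective on the $\kappa=(1^n)$ Cauchy summand in $\Schur^\pi V$-isotypic degree and that the kernel absorbs exactly the spurious copies from the other $\kappa$, but no cancellation argument is given, and the Littlewood--Richardson bookkeeping you gesture at in the last paragraph is precisely where the whole difficulty lies. As written, the proposal reduces the lemma to an unproven claim about how $\ker(\Wedge^n\mu)$ distributes across the dual Cauchy decomposition, and there is no crude positivity or height bound that can substitute for this: as you note yourself, for $\kappa\neq(1^n)$ the summand $\Schur^\kappa V\tensor\Schur^{\kappa^\transpose}\Sym^m V$ generally does contain copies of $\Schur^\pi V$, so the argument must be an exact cancellation, not an inclusion.

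The paper avoids this entirely. Its proof is three lines long: apply the duality of \cref{fact:duality} to rewrite $a_\lambda(n,m)$ as the multiplicity of $\Schur^{\lambda^\transpose}V$ in $\Schur^\nu\Wedge^m V$ (with $\nu=(n)$ or $(1^n)$ depending on the parity of $m$), invoke \cref{fact:reduction-wedge} (a known result of Manivel and Micha\l{}ek) to replace $\Wedge^m$ by $\Wedge^{m+1}$ at the cost of prepending a row of width $n$ to $\lambda^\transpose$, and then apply \cref{fact:duality} once more, with the parity of $m+1$ flipped, to land on $b_\pi(n,m+1)$. The content you are trying to re-derive by hand via Cauchy and Pieri is essentially \cref{fact:reduction-wedge}; the paper simply cites it. If you want a self-contained argument, the honest path is to prove \cref{fact:reduction-wedge} (or a $\Sym^m$ analogue) directly, not to reduce the lemma to an unjustified cancellation claim.

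One minor wording issue in an otherwise correct part: in your analysis of the $\kappa=(1^n)$ summand, removing a vertical $n$-strip from $\pi$ (which has exactly $n$ rows) indeed forces $\rho=\pi-(1^n)=\lambda$, but the removed strip is the set of outer boxes $\{(i,\lambda_i)\}_{i<n}$, not the first column of $\pi$; the skew shape $\pi/\lambda$ is not a column. The conclusion is right, the parenthetical ``i.e.\ the first column'' is not.
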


\noindent
In particular, provided that deciding positivity of both $a_\lambda(n, 3)$ and $b_\lambda(n, 3)$ is already \NP{}-hard, deciding positivity of a plethysm coefficient $a_\lambda(n, m)$ or $b_\lambda(n, m)$ is \NP{}-hard as well for each fixed $m \geq 3$. In order to prove \cref{lem:reduction-m=3}, we leverage another fact:

\begin{fact}[\cite{ManivelM2015}] \label[fact]{fact:reduction-wedge}
Let $\nu$ be a partition of~$n$, let~$\lambda$ be a partition of $n \mult m$ of width at most~$n$ and let~$\pi$ be the partition after prepending a row of width~$n$ to~$\lambda$. Then the multiplicity of the irreducible component $\Schur^\lambda V$ in $\Schur^\nu \Wedge^m V$ equals the multiplicity of the irreducible component $\Schur^\pi V$ in $\Schur^\nu \Wedge^{m + 1} V$.
\end{fact}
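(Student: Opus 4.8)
\textbf{Proof plan for \cref{fact:reduction-wedge}.}

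The plan is to prove the stated multiplicity equality by exhibiting an explicit $\GL(V)$-equivariant isomorphism between the relevant isotypic components, or equivalently by matching highest weight vectors. First I would recall the general principle that computes plethysm multiplicities via symmetric functions: the character of $\Schur^\nu \Wedge^m V$ is the plethystic composition $s_\nu[e_m]$ of Schur functions and elementary symmetric functions, so the multiplicity of $\Schur^\lambda V$ in $\Schur^\nu \Wedge^m V$ is the coefficient $\langle s_\nu[e_m], s_\lambda\rangle$. The claim then reduces to the purely symmetric-function identity
\begin{align*}
	\langle s_\nu[e_m], s_\lambda\rangle = \langle s_\nu[e_{m+1}], s_\pi\rangle,
\end{align*}
where $|\nu|=n$, $\lambda$ has width (first part) at most $n$, $|\lambda| = nm$, and $\pi$ is obtained from $\lambda$ by prepending a part equal to $n$ (so $|\pi| = n(m+1)$, and $\pi$ still has at most $n+1$ parts with $\pi_0 = n \geq \pi_1 = \lambda_0$, confirming $\pi$ is a partition).

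The key step is a well-chosen stabilization/truncation argument on Young tableaux or, more slickly, on the combinatorics of the $\lambda$-ring operation $e_m \mapsto e_{m+1}$. One clean route: work inside $\Wedge^{m+1}V$ versus $\Wedge^m V$ and use the fact that a column of height $m+1$ "is" a column of height $m$ together with one extra box in a new top row; prepending the full row of width $n$ to $\lambda$ exactly accounts for the $n$ columns (one per factor in the $n$-fold construction implicit in $s_\nu$ of something of degree related to $n$) each gaining one box. Concretely, I would set up a bijection between semistandard-type fillings witnessing the Littlewood--Richardson-style expansion of $s_\nu[e_m]$ in the Schur basis and those witnessing the expansion of $s_\nu[e_{m+1}]$, where the bijection inserts a new row $0$ filled entirely with the entry corresponding to the "first" slot and shifts everything down; the width-at-most-$n$ hypothesis on $\lambda$ is precisely what guarantees this insertion produces a valid (column-strict) filling and is reversible. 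Alternatively, and perhaps most economically, one invokes the known description of $s_\nu[e_m]$ via the "$m$-quotient"/"adding a column" operation and cites \cite{ManivelM2015} (or \cite{Carre1990}, \cite{Manivel1998}) for the precise combinatorial identity, since \cref{fact:duality} is already attributed there and this is of the same flavor.

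A second, representation-theoretic route that avoids symmetric-function bookkeeping: construct the isomorphism $\Schur^\nu\Wedge^m V \,\hookrightarrow\, \Schur^\nu\Wedge^{m+1}W$ for a suitable $W$, or better, fix an extra basis vector and use the inclusion $\Wedge^m V \otimes \langle v_0\rangle \hookrightarrow \Wedge^{m+1}(\langle v_0\rangle \oplus V)$, apply $\Schur^\nu$, and track how the highest-weight-vector combinatorics changes: the new coordinate $v_0$ contributes exactly one box to each of the $n$ "columns" assembled by the degree-$n$ plethystic outer functor, i.e.\ it prepends a row of width $n$ to the $\GL(V)$-type. Making "each of the $n$ columns" precise is where the hypothesis $|\nu|=n$ enters. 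I expect the main obstacle to be exactly this step: correctly identifying why the outer partition $\nu$ only matters through $n=|\nu|$ in the row-prepending operation, and checking that no multiplicity is gained or lost — in particular verifying that the width bound $\lambda_0 \le n$ is not merely sufficient but is the exact condition making the correspondence a bijection (partitions $\lambda$ with $\lambda_0 > n$ would force $\pi_0 < \pi_1$, so they are simply outside the range, but one must confirm no $\Schur^\pi V$ with $\pi_0 = n$ is missed). Once that combinatorial core is nailed down, the rest is routine: unwinding definitions of the Schur functor and the plethysm, and noting the degree/height constraints are automatically satisfied.
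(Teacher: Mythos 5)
The paper does not actually prove \cref{fact:reduction-wedge}; it is imported verbatim from \cite{ManivelM2015} as a black box, so there is no in-paper argument to compare yours against. Judged on its own, your proposal correctly identifies the content (the symmetric-function identity $\langle s_\nu[e_m], s_\lambda\rangle = \langle s_\nu[e_{m+1}], s_\pi\rangle$) and the right strategy, but both routes are left as plans, and you explicitly flag the crucial step --- why the $n$ wedge factors each absorb exactly one new box --- as an unresolved ``main obstacle.'' That step does close, and cleanly, along your second route: write $V = \langle X_0\rangle \oplus V'$ and observe that a basis of the weight-$\pi$ subspace of $(\Wedge^{m+1}V)^{\otimes n}$ consists of tensor products of $n$ elementary wedges; each elementary wedge contains $X_0$ at most once, and since $\pi_0 = n$ every one of the $n$ factors must contain $X_0$ exactly once. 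Stripping $X_0$ from each factor is an $\S_n$-equivariant isomorphism onto the weight-$\lambda$ subspace of $(\Wedge^m V')^{\otimes n}$, hence restricts to the $\nu$-isotypic pieces and commutes with all raising operators $E_{i,j}$ with $i \geq 1$; the remaining operators $E_{0,j}$ annihilate the entire weight-$\pi$ subspace because their image would have weight $\pi + e_0 - e_j$ with first entry $n+1$, which is unattainable. This identifies the weight-$\pi$ highest weight space of $\Schur^\nu\Wedge^{m+1}V$ with the weight-$\lambda$ highest weight space of $\Schur^\nu\Wedge^m V'$, which is the claim. Note the hypothesis $|\nu| = n$ enters only through the number of tensor factors, exactly as you suspected, and the width bound $\lambda_0 \leq n$ is only needed so that $\pi$ is a partition.

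Two small corrections: $\pi$ need not have ``at most $n+1$ parts'' (the width bound on $\lambda$ limits its first part, not its number of parts; $\lambda$ can have up to $nm$ rows), though the only thing you actually need --- $\pi_0 = n \geq \lambda_0 = \pi_1$ --- is verified correctly. And since the fact is cited rather than proved in the paper, simply invoking \cite{ManivelM2015}, as you suggest in passing, is also an acceptable resolution; but if you want a self-contained proof, the weight-space argument above is the one to write out.
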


\begin{proof}[Proof of \cref{lem:reduction-m=3}]
We show the first claim only; the second part is analogous. Let a partition~$\lambda$ of~$n \mult m$ be given. \cref{fact:duality} yields that $a_\lambda(n, m)$ equals the multiplicity of the irreducible representation $\Schur^{\lambda^\transpose} V$ in the decomposition of $\Schur^\nu \Wedge^m V$, where $\nu$ is the Young Diagram consisting of a single row of width $n$ if $m$ is even and $\nu$ equals a column of length~$n$ otherwise. If~$\lambda^\transpose$ is of width~$> n$, then we immediately deduce that $a_\lambda(n, m) = 0$. So assume that~$\lambda^\transpose$ is of width~$\leq n$ and construct a partition~$\pi^\transpose$ by prepending a row of width~$n$ to~$\lambda^\transpose$. From \cref{fact:reduction-wedge}, we infer that the multiplicity of \raisebox{0pt}[0pt][0pt]{$\Schur^{\pi^\transpose}$} in the decomposition of $\Schur^\nu \Wedge^{m+1}$ equals $a_\lambda(n, m)$ and by applying \cref{fact:duality} once more, we finally obtain that $a_\lambda(n, m) = b_\pi(n, m + 1)$.
\end{proof}

We are left to deal with the case $m = 3$.


\section{Combinatorial Descriptions for Plethysm Coefficients via Discrete Tomography} \label{sec:plethysm-bounds}
Our results in \cref{sec:results} are based on a new combinatorial description of the plethysm coefficient in a subcase, based on new combinatorial lower and upper bounds for the plethysm coefficient. We think that the lower and upper bound as well as the combinatorial description are of independent interest.

By $[i, j]$, we denote the range $\{i, \ldots, j\} \subseteq \Nat$. We define $C := \{ (x, y, z) : x > y > z \} \subseteq \Nat^3$ as the \emph{open cone} and $\overline C = \{ (x, y, z) : x \geq y \geq z \} \subseteq \Nat^3$ as the \emph{closed cone}. A finite set $P \subseteq C$ is called a \emph{point set in the open cone}. A finite set $P \subseteq \overline{C}$ is called a \emph{point set in the closed cone}. A point set $P$ in the open cone is called an \emph{pyramid in the open cone} if for all $(x, y, z) \in P$ and all $(x', y', z') \in C$ with $x' \leq x$, $y' \leq y$, $z' \leq z$ we have $(x', y', z') \in P$.
Analogously, a point set $P$ in the closed cone is called a \emph{pyramid in the closed cone} if for all $(x, y, z) \in P$ and all $(x', y', z') \in \overline C$ with $x' \leq x$, $y' \leq y$, $z' \leq z$ we have $(x', y', z') \in P$.

Let $P$ be a point set in the open cone or closed cone. The \emph{sum-marginal} $S(P)$ is the sequence of numbers defined via
\begin{equation} \label{eq:symmetric-marginal}
  S_i(P) := \sum_{(x, y, z) \in P} \delta_{x, i} + \delta_{y, i} + \delta_{z, i},
\end{equation}
where $\delta_{i, j} = 1$ if $i = j$ and $\delta_{i, j}=0$ otherwise. Note that $|S(P)| := \sum_i S_i(P) = 3|P|$.

\begin{definition} \label{def:plethysm-bounds}
Let $\lambda$ be a partition of $3n$. We define:
\begin{itemize}
\item $\lowerbound a_\lambda(n, 3)$ as the number of pyramids in the open cone with sum-marginal $\lambda^\transpose$,
\item $\upperbound a_\lambda(n, 3)$ as the number of point sets in the open cone with sum-marginal $\lambda^\transpose$,
\item $\lowerbound b_\lambda(n, 3)$ as the number of pyramids in the closed cone with sum-marginal $\lambda$,
\item $\upperbound b_\lambda(n, 3)$ as the number of point sets in the closed cone with sum-marginal $\lambda$.
\end{itemize}
\end{definition}

The next theorem connects plethysm coefficients to the combinatorial notions that we just defined.

\begin{theorem} \label[theorem]{thm:plethysm-bounds}
Let $\lambda$ be a partition of $3n$. Then:
\begin{align*}
	\lowerbound a_\lambda(n, 3) \leq a_\lambda(n, 3) \leq \upperbound a_\lambda(n, 3) &&
	\text{and} &&
	\lowerbound b_\lambda(n, 3) \leq b_\lambda(n, 3) \leq \upperbound b_\lambda(n, 3).
\end{align*}
\end{theorem}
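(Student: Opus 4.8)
The plan is to prove all four inequalities by finding the right combinatorial model for the relevant plethysm coefficient as a count of "tableau-like" objects, and then interpreting that model geometrically as point sets in the cone. The natural starting point is the classical fact that $a_\lambda(n,3)$ counts certain semistandard-type objects: since $\Sym^n\Sym^3 V$ sits inside $\Sym^3 V^{\otimes n}$ or, better, since the character of $\Sym^n\Sym^3$ is the plethysm $h_n[h_3]$ of complete homogeneous symmetric functions, the coefficient $a_\lambda(n,3) = \langle h_n[h_3], s_\lambda\rangle$. I would expand $h_n[h_3]$ using the combinatorial rule for plethysm of $h_n$: $h_n[h_3] = \sum_{\text{multisets } M \text{ of size } n \text{ of degree-}3 \text{ monomials}} (\cdots)$, i.e. it is the sum over multisets of $n$ monomials of degree $3$ of the product of those monomials, symmetrized. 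Equivalently, $a_\lambda(n,3)$ is the number of ways to write $\lambda$ (via the Kostka-type/RSK correspondence) as arising from a multiset of $n$ weakly decreasing triples $x\ge y\ge z$, one per "block," with total content $\lambda$ — up to the sign/straightening corrections that turn a monomial count into a Schur-function coefficient. That correction is exactly what separates the lower bound (pyramids) from the upper bound (arbitrary point sets).

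More concretely, the key steps, in order, are: (1) Establish the monomial model: show that $b_\lambda(n,3)$ (the $\Wedge^n\Sym^3$ case, which is cleaner because the outer $\Wedge$ forces the $n$ triples to be \emph{distinct}) equals $\langle e_n[h_3], s_\lambda\rangle$, and expand $e_n[h_3]$ as an alternating sum over sets — not multisets — of $n$ degree-$3$ monomials. Each such monomial $x^a y^b z^c$ with the three exponents giving a point in $\overline C$; a set of $n$ of them has a "content" which is precisely the sum-marginal. (2) Upper bound: every $s_\lambda$-coefficient is bounded by the corresponding monomial-multiplicity (coefficient in the monomial basis), because the transition matrix from the monomial basis to the Schur basis (inverse Kostka) has entries bounded in absolute value and, more to the point, one can pass through a nonnegative intermediate — the coefficient of $m_\lambda$ in $e_n[h_3]$, which counts point sets in $\overline C$ with marginal $\lambda$, dominates $\langle e_n[h_3], s_\lambda\rangle$ by the unitriangularity $s_\lambda = m_\lambda + \sum_{\mu \triangleright \lambda}(\cdots)m_\mu$ and a sign-reversing/injection argument. (3) Lower bound: exhibit an explicit injection from pyramids with the prescribed marginal into a set of objects witnessing the multiplicity — i.e. show that each pyramid gives rise to a genuine highest weight vector (or a linearly independent $\GL$-covariant) in the relevant isotypic component, so that distinct pyramids contribute independently. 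The pyramid (downward-closed) condition is precisely what guarantees the corresponding straightened tableau does not cancel. (4) Transfer the $a$-statements from the $b$-statements via the duality in \cref{fact:duality} together with \cref{fact:reduction-wedge}: transposing $\lambda$ and toggling open vs.\ closed cone (the open cone $x>y>z$ corresponds to strict decrease $\leftrightarrow$ columns $\leftrightarrow$ $\Wedge$, the closed cone $x\ge y\ge z$ to weak decrease $\leftrightarrow$ rows $\leftrightarrow$ $\Sym$), and checking the marginal is sent to $\lambda^\transpose$ as claimed in \cref{def:plethysm-bounds}.

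The main obstacle I expect is the lower bound, step (3): upper-bounding a Schur coefficient by a monomial count is essentially formal (unitriangularity plus a counting argument), but showing that pyramids give genuinely independent contributions to the multiplicity requires constructing explicit highest weight vectors in $\Sym^n\Sym^3 V$ (resp.\ $\Wedge^n\Sym^3 V$) indexed by pyramids and proving linear independence. The delicate point is that a naive symmetrization of a monomial-labelled tensor can vanish or be a linear combination of others after applying the Schur functor's projector; the downward-closed "pyramid" hypothesis must be used to show that, under a suitable term order, each pyramid produces a distinct leading term that is not killed, so the associated covariants are independent. I would isolate this as a separate lemma — "each pyramid in the (open/closed) cone with marginal $\lambda$ yields a nonzero highest weight vector of weight $\lambda$, and these are linearly independent" — and prove it by a leading-monomial argument with respect to dominance order on weights. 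Everything else (the upper bound, the marginal bookkeeping, and the $a\leftrightarrow b$ transfer) is routine by comparison.
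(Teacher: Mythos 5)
Your plan is, at its core, the same as the paper's: model $\Wedge^n\Sym^3 V$ (resp.\ $\Wedge^n\Wedge^3 V$) by size-$n$ point sets in the closed (resp.\ open) cone, read off the weight as the sum-marginal, use the dimension of the weight-$\lambda$ space for the upper bound, and show that pyramids produce highest weight vectors for the lower bound, transferring to the $a$-case via \cref{fact:duality}. That is exactly the paper's route. Two places where you overcomplicate:

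For the upper bound, the detour through inverse Kostka numbers and a ``sign-reversing/injection'' argument is not needed. The clean statement is simply: for any polynomial $\GL(V)$-module, the multiplicity of $\Schur^\lambda V$ is bounded above by the dimension of the weight-$\lambda$ subspace (every highest weight vector of weight $\lambda$ is, in particular, a weight vector of weight $\lambda$). In the wedge model this dimension is literally the number of size-$n$ point sets in the cone with sum-marginal $\lambda$; no transition-matrix bookkeeping enters. (Equivalently, in symmetric-function language: positivity of the plethysm expansion plus $K_{\lambda\lambda}=1$, $K_{\mu\lambda}\ge 0$.)

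For the lower bound, the ``leading-monomial argument with respect to dominance order'' you flag as the delicate step is not needed either, and the worry that ``a naive symmetrization can vanish or collapse'' does not arise. Because you work in $\Wedge^n\Sym^3 V$, the candidate vectors $v_P=\bigwedge_{(x,y,z)\in P}X_xX_yX_z$ are distinct \emph{basis} elements of the weight-$\lambda$ subspace, so distinct pyramids automatically yield linearly independent vectors. The only thing to check is that $E_{i,j}v_P=0$ for pyramids $P$, and this is a direct computation: each term of $E_{i,j}v_P$ arises by lowering one coordinate of one point of $P$, the pyramid property forces the resulting monomial to already occur in the wedge, and the alternating structure kills the term. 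That is exactly how the paper verifies the highest-weight condition. So the proof you should aim for is shorter and more mechanical than the straightening/leading-term machinery you anticipate. Aside from that, your account of the $a\leftrightarrow b$ transfer and the marginal bookkeeping matches the paper.
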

\begin{proof}
Let $X_0, X_1, \ldots, X_{k-1}$ be an ordered basis of $V = \Complex^k$. There is a natural bijection between the points in $\overline C_{< k} := [0, k-1]^3 \cap \overline C$ and the monomials in $\Sym^3 V$. It is easy to check that a monomial $X_x X_y X_z$ is of weight $\kappa$ if and only the corresponding point $(x, y, z)$ has sum-marginal~$\kappa$ (here, $\kappa$ is not necessarily a partition). In the same way, we match $(x, y, z) \in C_{< k} := [0, k-1]^3 \intersect C$ with elementary wedge products $X_x \wedge X_y \wedge X_z$. (Since we are dealing with these objects modulo scalars, the sign changes caused by the wedge product shall not bother us here.)

Now let $\lambda$ be a partition of $3n$. We associate each point set $P \subseteq \overline C_{< k}$ with sum-marginal~$\lambda$ with a vector
\[
  v_P := \bigwedge_{(x, y, z) \in P} X_x X_y X_z \in \Wedge^n \Sym^3 V;
\]
notice that $|\lambda| = 3n$ implies that $P$ is of size $n$ and thus indeed $v_P \in \bigwedge^n \Sym^3 V$. Analogously, for point sets $P \subseteq C_{< k}$ with sum-marginal $\lambda$, we define vectors
\[
	w_P := \bigwedge_{(x, y, z) \in P} X_x \wedge X_y \wedge X_z \in \Wedge^n \Wedge^3 V.
\]
From the preceding paragraphs, it follows that $v_P$ is a weight vector of weight $\lambda$, and furthermore, for $P$ ranging over all point subsets of $\overline C_{< k}$ with sum-marginal $\lambda$, $\{v_P\}_P$ forms a basis of the weight subspace of weight $\lambda$ of $\Wedge^n \Sym^3 V$. Recall that $\upperbound b_\lambda(n, 3)$ exactly counts the number of such sets $P$, thus $\upperbound b_\lambda(n, 3)$ equals the dimension of the weight-$\lambda$ subspace. Moreover, since $\Wedge^n \Sym^3 V$ is a $\GL(V)$-representation, the multiplicity $b_\lambda(n, 3)$ of $\lambda$ in the decomposition of $\Wedge^n \Sym^3 V$ into irreducible $\GL(V)$-representations matches the dimension of the linear subspace spanned by all highest weight vectors of weight $\lambda$. But each highest weight vector is in particular a weight vector and thus $b_\lambda(n, 3) \leq \upperbound b_\lambda(n, 3)$.

In the same way, $\{w_P\}_P$ forms a basis of the weight-$\lambda^\transpose$ subspace of $\Wedge^n \Wedge^3 V$, for $P$ ranging over all point subsets of $C_{< k}$ with sum-marginal $\lambda^\transpose$. Therefore, $\upperbound a_\lambda(n, 3)$ equals the dimension of the weight-$\lambda^\transpose$ subspace and hence upper-bounds the multiplicity of the irreducible component $\Schur^{\lambda^\transpose} V$ in $\Wedge^n \Wedge^3 V$. Recall that this multiplicity equals $a_\lambda(n, 3)$ by \cref{fact:duality}, so we conclude $a_\lambda(n, 3) \leq \upperbound a_\lambda(n, 3)$.

For the remainder of the proof, we can treat the symmetric (closed cone) and skew-symmetric (open cone) cases identically, so we omit the proof of $\lowerbound a_\lambda(n, 3) \leq a_\lambda(n, 3)$ and verify only that $\lowerbound b_\lambda(n, 3) \leq b_\lambda(n, 3)$. Let $P \subseteq \overline C_{< k}$ be a pyramid in the closed cone with sum-marginal $\lambda$, $|\lambda| = 3n$. It suffices to show that $v_P$ is not only a weight vector, but in fact a highest weight vector. To prove this, we demonstrate that $v_P$ is annihilated by all raising operators $E_{i, j}$; recall that, for $i < j$, $E_{i, j}$ denotes the upper triangular matrix with a single $1$ in the $i$-th row and the $j$-th column and zero entries anywhere else.
\begin{align*}
  E_{i, j} v_P &= \lim_{\varepsilon \to 0} \varepsilon^{-1} \big((\Id + \varepsilon E_{i, j}) v_P - v_P\big) \\
  &= \sum_{(x, y, z) \in P} (\delta_{x, j} X_i X_y X_z + \delta_{y, j} X_x X_i X_z + \delta_{z, j} X_x X_y X_i) \wedge \bigwedge_{\substack{(x', y', z') \in {}\\P \setminus \{(x, y, z)\}}} X_{x'} X_{y'} X_{z'}.
\end{align*}
We claim that each summand vanishes individually. Focus, for some $(x, y, z) \in P$, on the summand $\delta_{x, j} X_i X_y X_z \wedge \Wedge_{(x', y', z')} X_{x'} X_{y'} X_{z'}$, where $(x', y', z')$ is picked from $P \setminus \{(x, y, z)\}$. In case that $x \neq j$, this term clearly equals zero, so assume $x = j$. Since $i < j = x$, the points $(i, y, z)$ and $(x, y, z)$ differ. But $P$ is a pyramid and thus contains $(i, y, z)$, too. Consequently, the monomial $X_i X_y X_z$ also occurs in the big wedge product on the right-hand side and therefore cancels with $X_i X_y X_z$ on the left-hand side. The other summands with $i$ in place of $y$ or $z$, respectively, can be shown to vanish by an analogous argumentation.

Applying the raising operator $E_{i, j}$ to $v_P$ has an intuitive geometric interpretation for $P$: Any summand as above arises from $P$ by replacing some coordinate $j$ of some point in $P$ by $i$. Thereby that point is moved closer to the origin as $i < j$. Recall however, that pyramids are point sets not containing ``holes'', thus $P$ is transformed by mapping two points onto each other. Here comes the alternating property of the wedge product into play by annihilating the summand.
\end{proof}

When the lower and upper bound coincide, we get a combinatorial description of the (dual) plethysm coefficient. The following \cref{prop:min-coordinate-sum} gives a sufficient condition for when this happens. Let $\lambda \in \Nat^{[0, r]}$ be a composition. We define the \emph{coordinate sum $B(\lambda)$} via
\begin{align*}
	B(\lambda) = \sum_{i=0}^r i \cdot \lambda_i.
\end{align*}
Let $P$ be a point set in the open cone or closed cone. The coordinate sum $B(P)$ is defined as the coordinate sum of the sum-marginal $B(P) := B(S(P))$, which is the same as the sum over all coordinates:
\begin{align*}
	B(P) = \sum_{(x, y, z) \in P} x + y + z
\end{align*}
Let $\overline{\xi}(i)$ the number of cardinality-$1$ point sets in the closed cone with coordinate sum exactly $i$. This is the same as the number of partitions of $i$ of height $\leq 3$, which is OEIS sequence A001399, with explicit formula $\textsf{round}((i + 3)^2/12)$. Let $\xi(i)$ denote number of cardinality-$1$ point sets in the open cone with coordinate sum $i$. This is the same as the number of partitions $\mu$ of $i$ of height~$3$ such that $\mu_1$, $\mu_2$, and $\mu_3$ are pairwise distinct, which is OEIS sequence A069905\footnote{There is a bijection to partitions with 3 positive but not necessarily distinct parts via subtracting $(2, 1, 0)$.}, with explicit formula $\textsf{round}(i^2/12)$.

Now let $\overline\iota(n) := \min\{ \iota : \sum_{i=0}^\iota \overline\xi(i) \geq n\}$ and
analogously, let $\iota(n) := \min\{ \iota : \sum_{i=0}^\iota \xi(i) \geq n \}$. Finally, let $\overline\beta(n) := \sum_{i=1}^n \overline\iota(i)$ and analogously, let $\beta(n) := \sum_{i=1}^n \iota(i)$.

\begin{lemma}\label{lem:min-coordinate-sum}
There is no point set $P$ of cardinality $n$ in the closed cone that has $B(P) < \overline\beta(n)$. Analogously, there is no point set $P$ of cardinality $n$ in the open cone that has $B(P) < \beta(n)$.
\end{lemma}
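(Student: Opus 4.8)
The plan is to prove the closed-cone statement; the open-cone statement follows by the same argument with $\overline\xi$ replaced by $\xi$ and $\overline\iota$ by $\iota$. The key observation is that $\overline\beta(n)$ is defined precisely to be the minimum possible value of $B(P)$ over all $n$-element point sets $P$ in the closed cone, and this minimum is attained by a greedy/nested construction: to minimize the total coordinate sum we should pick the $n$ points of $\overline C$ with the smallest individual coordinate sums $x+y+z$. Since $\overline\xi(i)$ counts the points of $\overline C$ with coordinate sum exactly $i$, taking all points with coordinate sum $\le \overline\iota(n)-1$ (there are fewer than $n$ of them) together with enough points of coordinate sum exactly $\overline\iota(n)$ to reach cardinality $n$ yields a point set whose coordinate sum is exactly $\sum_{i=1}^{n}\overline\iota(i) = \overline\beta(n)$, because the $j$-th cheapest point has coordinate sum $\overline\iota(j)$ by definition of $\overline\iota$.

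First I would record the trivial but essential arithmetic identity: if $P=\{p_1,\dots,p_n\}\subseteq\overline C$ and we write $c(p):=x+y+z$ for $p=(x,y,z)$, then $B(P)=\sum_{t=1}^n c(p_t)$ (this is exactly the second displayed formula for $B(P)$ in the text). So the claim reduces to: for any $n$ distinct points of $\overline C$, the sum of their coordinate sums is at least $\overline\beta(n)=\sum_{i=1}^n\overline\iota(i)$. Next I would prove the following exchange lemma by a counting argument: order the points of $P$ so that $c(p_1)\le c(p_2)\le\cdots\le c(p_n)$; then $c(p_t)\ge\overline\iota(t)$ for every $t$. Indeed, the number of points of $\overline C$ with coordinate sum at most $j$ is $\sum_{i=0}^{j}\overline\xi(i)$, so if $c(p_t)\le j$ then $t\le\sum_{i=0}^{j}\overline\xi(i)$, i.e. $j$ does not satisfy $\sum_{i=0}^{j}\overline\xi(i)<t$; contrapositively, for $j=c(p_t)$ we get $\sum_{i=0}^{c(p_t)}\overline\xi(i)\ge t$, which is exactly the statement $c(p_t)\ge\min\{\iota:\sum_{i=0}^{\iota}\overline\xi(i)\ge t\}=\overline\iota(t)$. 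Summing $c(p_t)\ge\overline\iota(t)$ over $t=1,\dots,n$ gives $B(P)\ge\sum_{t=1}^n\overline\iota(t)=\overline\beta(n)$, which is the desired inequality (and in particular there is no $P$ with $B(P)<\overline\beta(n)$).

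I do not anticipate a serious obstacle here; the only mild subtlety is bookkeeping the indexing, namely that $\overline\xi(0)=1$ (the single point $(0,0,0)$), so $\overline\iota(1)=0$ and the sums start correctly, and that $\overline\iota$ is non-decreasing in its argument, which is immediate from its definition as a running count threshold. One should also note for cleanliness that the monotone reordering of $P$ is always possible since coordinate sums are natural numbers, and that distinctness of the points is what forces the strict inequality $t\le\sum_{i=0}^{j}\overline\xi(i)$ rather than something weaker — distinct points with coordinate sum $\le j$ form a subset of the $\sum_{i=0}^{j}\overline\xi(i)$-element set of all such points. The open-cone case is verbatim the same, using that $\xi(i)$ counts points of the open cone $C$ with coordinate sum $i$ and that $\iota(n)$ and $\beta(n)$ are the corresponding thresholds and running sums.
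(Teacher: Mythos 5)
Your proof is correct and is essentially the same argument as the paper's: both rely on the counting observation that the point of $P$ with the $t$-th smallest coordinate sum must have coordinate sum at least $\overline\iota(t)$ (the paper applies this only to the largest point and then recurses by induction, whereas you apply it to every $t$ and sum directly). The two presentations are equivalent up to unrolling the induction.
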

\begin{proof}
This is seen by induction: If $P$ of cardinality $n$ in the open cone has $B(P) < \beta(n)$, then remove the point $(x, y, z)$ with highest coordinate sum to obtain the point set $P'$. Since for each~$\iota$ there are only $\xi(\iota)$ many points $(x', y', z')$ in $C$ with $x' + y' + z' = \iota$, it follows that $x + y + z \geq \iota(n)$. But $\beta(n) - \iota(n) = \beta(n - 1)$, which implies $B(P') \leq B(P) - \iota(n) < \beta(n) - \iota(n) = \beta(n - 1)$, which means that $P'$ cannot exist by induction hypothesis and hence $P$ cannot exist. A completely analogous argument works for the closed cone.
\end{proof}

\begin{proposition} \label{prop:min-coordinate-sum}
If $B(\lambda) = \overline\beta(|\lambda| / 3)$, then each point set in the closed cone with sum-marginal $\lambda$ is a pyramid in the closed cone. In particular, if $\lambda$ is a partition of $3n$, according to \cref{thm:plethysm-bounds}, we have $\lowerbound b_\lambda(n, 3) = b_\lambda(n, 3) = \upperbound b_\lambda(n, 3)$.

If $B(\lambda^\transpose) = \beta(|\lambda| / 3)$, then each point set in the open cone with sum-marginal $\lambda^\transpose$ is a pyramid in the open cone. In particular, if $\lambda$ is a partition of $3n$, then according to \cref{thm:plethysm-bounds}, we have $\lowerbound a_\lambda(n, 3) = a_\lambda(n, 3) = \upperbound a_\lambda(n, 3)$.
\end{proposition}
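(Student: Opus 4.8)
The plan is to isolate a purely combinatorial statement --- that minimality of the coordinate sum forces a point set to be a pyramid --- and then feed it into \cref{thm:plethysm-bounds}. Concretely, I would first prove the following \emph{claim}: if $P$ is a point set of cardinality $n$ in the closed cone with $B(P) = \overline\beta(n)$, then $P$ is a pyramid in the closed cone (and the analogue for the open cone with $\beta$ in place of $\overline\beta$). Given the claim, the proposition drops out of the sandwich in \cref{thm:plethysm-bounds}.

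To prove the claim I would argue by contradiction. Assume $P$ is not a pyramid, so there are $p \in P$ and $q \in \overline C$ with $q \leq p$ coordinatewise but $q \notin P$. Among all points of $\overline C \setminus P$ that lie coordinatewise below some point of $P$, choose one, call it $s$, whose coordinate sum $s_1 + s_2 + s_3$ is as large as possible, and fix a witness $p \in P$ with $s \leq p$. Since $s \neq p$, some coordinate of $s$ falls strictly below that of $p$; a short case distinction on the first such coordinate --- using $p \in \overline C$ to bound which order violations could arise --- produces an index $c$ with $s_c < p_c$ for which $t := s + e_c$ still lies in $\overline C$. Then $t \leq p$ and $t$ has strictly larger coordinate sum than $s$, so maximality of $s$ forces $t \in P$. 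But now $P' := (P \setminus \{t\}) \cup \{s\}$ is again a cardinality-$n$ point set in the closed cone with $B(P') = B(P) - 1 < \overline\beta(n)$, contradicting \cref{lem:min-coordinate-sum}. The open-cone version is verbatim with $C$ and $\beta$ replacing $\overline C$ and $\overline\beta$; the only thing to re-verify is that incrementing coordinate $c$ keeps us inside $C$, which works because the strict inequalities defining $C$ supply exactly one extra unit of slack where it is needed.

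With the claim in hand the proposition is immediate. Suppose $\lambda$ is a partition of $3n$ with $B(\lambda) = \overline\beta(n)$. Any point set $P$ in the closed cone with sum-marginal $\lambda$ satisfies $3|P| = |S(P)| = |\lambda| = 3n$, so $|P| = n$, and $B(P) = B(S(P)) = B(\lambda) = \overline\beta(n)$; by the claim, $P$ is a pyramid. Hence the point sets with sum-marginal $\lambda$ coincide with the pyramids with sum-marginal $\lambda$, so $\upperbound b_\lambda(n,3) = \lowerbound b_\lambda(n,3)$ by \cref{def:plethysm-bounds}, and the inequalities of \cref{thm:plethysm-bounds} pinch $b_\lambda(n,3)$ to that common value. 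The statement about $a_\lambda(n,3)$ is identical after replacing the closed cone by the open cone, $\lambda$ by $\lambda^\transpose$, and $\overline\beta$ by $\beta$.

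The main obstacle is the lattice-geometric step inside the claim: given $s \leq p$ with both points in the cone and $s \neq p$, one must increment a coordinate of $s$ in which it falls short of $p$ while remaining in the cone. The case analysis (whether the shortfall first occurs in the first, second, or third coordinate, and how membership of $p$ in the cone constrains the resulting shape) is elementary but is the one spot where the geometry of $C$ versus $\overline C$ genuinely enters; once it is pinned down, everything else is bookkeeping together with the already-established \cref{lem:min-coordinate-sum} and \cref{thm:plethysm-bounds}.
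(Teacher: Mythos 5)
Your proof is correct, and it rests on the same two ingredients as the paper's: a swap that strictly decreases the coordinate sum, and the lower bound from \cref{lem:min-coordinate-sum}. But the execution is more roundabout than it needs to be. Once you have $p \in P$ and $s \in \overline C \setminus P$ with $s \leq p$ coordinatewise and $s \neq p$, the paper simply replaces $p$ by $s$: the set $P' := (P \setminus \{p\}) \cup \{s\}$ is still a cardinality-$n$ point set in $\overline C$, and $B(P') < B(P) = \overline\beta(n)$ (strict, since $s \lneq p$), contradicting \cref{lem:min-coordinate-sum} immediately. The maximality selection of $s$, the construction of the incremented point $t = s + e_c$, and the ensuing case analysis over which coordinate to increment while staying in the cone are all detours --- you are reverse-engineering a unit-decrement swap when any strict decrement suffices. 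The case analysis you outline is itself sound (in particular, in the open cone the strict inequalities $s_1 > s_2 > s_3$ and $p_1 > p_2 > p_3$ combine with $s_i = p_i$ for the earlier coordinates to give exactly one extra unit of slack, so $t \in C$), so there is no gap --- just machinery you can drop.
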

\begin{proof}
Assume for the sake of contradiction that there exists a point set $P$ in the closed cone with sum-marginal $S(P) = \lambda$ (note that $S(P) = \lambda$ implies $|P| = |\lambda| / 3$) such that $P$ is not a pyramid in the closed cone. Then there exists $(x, y, z) \in P$ and $(x', y', z') \notin P$ with $x' \leq x$, $y' \leq y$, $z' \leq z$. Replacing $(x, y, z)$ in $P$ with $(x', y', z')$ gives a point set $P'$ with $B(P') < B(P)$. This is a contradiction to \cref{lem:min-coordinate-sum}. A completely analogous argument works for the open cone.
\end{proof}


\section{Proof of Main Theorem~\ref{thm:plethysm-sharpp-hard}} \label{sec:reduction}

The reductions that we present are parsimonious polynomial-time reductions between counting problems. Such reductions automatically yield polynomial-time many-one reductions between the associated decision problems.

\subsection*{Counting and Decision Problems in Discrete Tomography}
Let us start with some notations: Let $P \subseteq \Nat^3$ be a finite set of points. By $X_i(P)$ we address the number of elements in $P$ with $X$-coordinate $i$. Similarly, $Y_i(P)$ and $Z_i(P)$ count the elements with $Y$- and $Z$-coordinate $i$, respectively. We call $(X_i(P))_i$, $(Y_i(P))_i$ and $(Z_i(P))_i$ the \emph{$X$-, $Y$-} and \emph{$Z$-marginals} of $P$; these sequences are treated as finite by omitting trailing zeros. Recall the definition \eqref{eq:symmetric-marginal} of the sum-marginal $S(P)$ and observe that $S(P) = X(P)+Y(P)+Z(P)$.

The intermediate 2-dimensional problems below operate on the grid
\[
  G_r := \{ (x, y, z) : x + y + z = r \} \subseteq \Nat^3;
\]
in spirit, $G_r$ is two-dimensional although each entry of~$G_r$ is parameterized by~$3$ coordinates (the so-called trilinear coordinates).
This specific embedding of $G_r \subseteq \Nat^3$ plays a major role in the proof of \cref{lem:reduction-2dxray-3dxray}.


\begin{figure}[t]
\begin{subfigure}[b]{0.48\textwidth}
\centering
\begin{tikzpicture}[triangle diagram, r=7, triangular=.45cm]

\drawtriaxesnoarrows
\drawtriaxesarrowsxyz
\drawtriaxescapt
\drawtrigrid

\drawpoints{
  0/3/4, 0/6/1,
  1/4/2, 1/5/1,
  2/1/4,
  4/0/3, 4/2/1, 4/3/0,
  6/1/0
}

\end{tikzpicture}
\end{subfigure}
\hfill
\begin{subfigure}[b]{0.48\textwidth}
\centering
\begin{tikzpicture}[triangle diagram, r=6, triangular=.45cm]

\def\axisovershoot{.7cm}

\fillcone
\drawtriaxesnoarrows
\drawtriaxesarrowsxyz
\drawtriaxescapt
\drawtrigrid

\drawpoints{
  2/2/2,
  3/2/1,
  5/1/0
}

\end{tikzpicture}
\end{subfigure}
\caption[paragraph]{The left depiction of $P \subseteq G_7$ certifies that the \TwoDXRay{} instance defined by $\mu = (2, 2, 1, 0, 3, 0, 1, 0)$, $\nu = (1, 2, 1, 2, 1, 1, 1, 0)$ and $\rho = (2, 3, 1, 1, 2, 0, 0, 0)$ is satisfiable.

The right image gives an example of a set $P \subseteq G_6 \intersect \overline C$ with sum-marginal $\lambda = (0, 0, 1, 1, 0, 1, 0) + (0, 1, 2, 0, 0, 0, 0) + (1, 1, 1, 0, 0, 0, 0) = (1, 2, 4, 1, 0, 1, 0)$. The framed area (including the boundary) illustrates the closed cone~$\overline C$, while the interior of that area illustrates the open cone~$C$.} \label{fig:xray-example}
\end{figure}
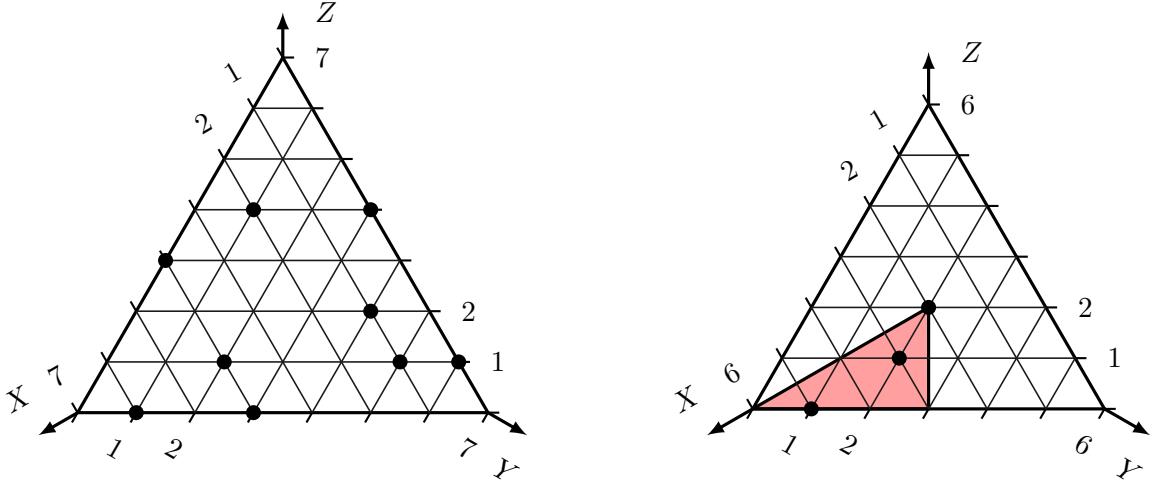

\begin{problem}[\TwoDXRay{}]
Given $\mu, \nu, \rho \in \Nat^{[0, r]}$, the \emph{\TwoDXRay{}} problem is to decide if there exists a point set $P \subseteq G_r$ with $X$-, $Y$- and $Z$-marginals $\mu$, $\nu$ and $\rho$, respectively.
\end{problem}

\cref{fig:xray-example} exemplifies an instance of \TwoDXRay{} with $r = 7$. For this problem and all following tomography problems, we denote the respective counting versions by prepending \Sharp{}. For example:

\begin{problem}[\Sharp\TwoDXRay{}]
Given $\mu, \nu, \rho \in \Nat^{[0, r]}$, the \emph{\TwoDXRay{}} problem is to compute the number of point sets $P \subseteq G_r$ with $X$-, $Y$- and $Z$-marginals $\mu$, $\nu$ and $\rho$, respectively.
\end{problem}

\TwoDXRay{} has been identified as \NP{}-hard (even \NP{}-complete) before by Gardner, Gritzmann and Prangenberg~\cite{GardnerGP1999}. In fact, they prove that the counting version \Counting\TwoDXRay{} is \SharpP{}-hard (even \SharpP{}-complete).
Keeping this in mind, we now continue to reduce \Sharp\TwoDXRay{} in a parsimonious way to computing $a_\lambda(n, 3)$ or $b_\lambda(n, 3)$, taking the detour over the other \Sharp\XRay{} problems. Our reductions are illustrated in \cref{fig:reductions}. 

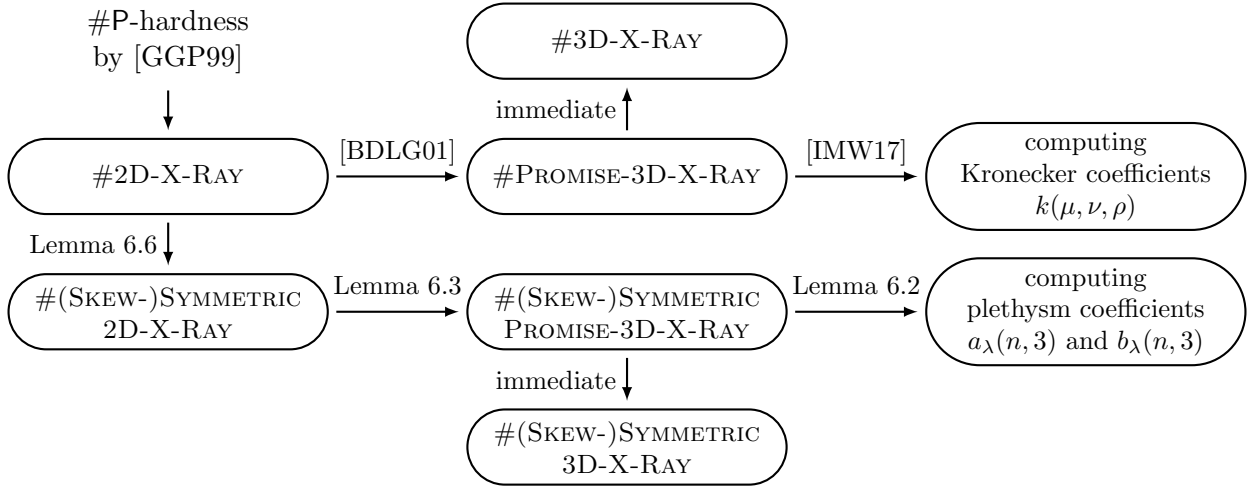
\begin{figure}
\begin{center}
\begin{tikzpicture}[
	problem/.style={
		font=\small,
		align=center,
		draw,
		thick,
		rounded rectangle,
		minimum width=4.5cm,
		minimum height=1cm
	},
	reduction/.style={
		font=\small,
		align=center,
		->,
		>=latex,
		thick,
		shorten >=.1cm,
		shorten <=.1cm
	}
]
\def\x{6.1cm}
\def\y{-1.8cm}
\path
	(0, 0) node[align=center] (initial) {\SharpP{}-hardness\\by~\cite{GardnerGP1999}}
	++(0, \y) node[problem] (TwoDXRay) {\#\TwoDXRay{}}
	+(0, \y) node[problem] (symmetric TwoDXRay) {\#\textsc{(Skew-)Symmetric}\\\TwoDXRay{}}
	++(\x, 0) node[problem] (ThreeDXRay) {\#\PromiseThreeDXRay{}}
	+(0, \y) node[problem] (symmetric ThreeDXRay) {\#\textsc{(Skew-)Symmetric}\\\PromiseThreeDXRay{}}
	+(0, -\y) node[problem] (nopromise ThreeDXRay) {\#\ThreeDXRay{}}
	+(0, \y+\y) node[problem] (nopromise symmetric ThreeDXRay) {\#\textsc{(Skew-)Symmetric}\\\ThreeDXRay{}}
	++(\x, 0) node[problem] (Kronecker) {computing\\Kronecker coefficients\\$k(\mu, \nu, \rho)$}
	+(0, \y) node[problem] (plethysm) {computing\\plethysm coefficients\\$a_\lambda(n, 3)$ and $b_\lambda(n, 3)$};

\draw[reduction] (initial) to (TwoDXRay);
\draw[reduction] (ThreeDXRay) to node[left] {immediate} (nopromise ThreeDXRay);
\draw[reduction] (symmetric ThreeDXRay) to node[left] {immediate} (nopromise symmetric ThreeDXRay);
\draw[reduction] (TwoDXRay) to node[left] {\cref{lem:reduction-2dxray-sym2dxray}} (symmetric TwoDXRay);
\draw[reduction] (TwoDXRay) to node[above] {\cite{BrunettiDG2001}} (ThreeDXRay);
\draw[reduction] (ThreeDXRay) to node[above] {\cite{IkenmeyerMW2017}} (Kronecker);
\draw[reduction] (symmetric TwoDXRay) to node[above] {\cref{lem:reduction-2dxray-3dxray}\vphantom[} (symmetric ThreeDXRay);
\draw[reduction] (symmetric ThreeDXRay) to node[above] {\cref{lem:reduction-3dxray-plethysm}\vphantom[} (plethysm);
\end{tikzpicture}
\end{center}
\caption{The reductions that lead to \SharpP{}-hardness proofs for the problems of computing Kronecker and plethysm coefficients. All depicted reductions are parsimonious and polynomial-time, so the the \NP{}-hardness of all corresponding decision problems ultimately follows from the \NP{}-hardness of \TwoDXRay{} that was proved in \cite{GardnerGP1999}.} \label{fig:reductions}
\end{figure}

\begin{problem}[\Symmetric\TwoDXRay{} and \SkewSymmetric\TwoDXRay{}]
Given $\lambda \in \Nat^{[0, r]}$, the \emph{\Symmetric\TwoDXRay{}} problem is to decide if there exists a point set $P \subseteq G_r \intersect \overline{C}$ with sum-marginal~$S(P) = \lambda$.

Analogously, given $\lambda \in \Nat^{[0, r]}$, the \emph{\SkewSymmetric\TwoDXRay{}} problem is to decide if there exists a point set $P \subseteq G_r \intersect C$.
\end{problem}

Finally, we introduce three-dimensional variants that differ from the above problems only in that $P$ is taken from $\Nat^3$ rather than $G_r$:

\begin{problem}[\ThreeDXRay{}]
Given $\mu, \nu, \rho \in \Nat^{[0, r]}$, the \emph{\ThreeDXRay{}} problem is to decide if there exists a point set $P \subseteq \Nat^3$ with $X$-, $Y$- and $Z$-marginals $\mu$, $\nu$ and $\rho$, respectively.
\end{problem}

\ThreeDXRay{} is known to be \NP-complete by \cite{BrunettiDG2001}, see Figure~\ref{fig:reductions}.

\begin{problem}[\Symmetric\ThreeDXRay{} and \SkewSymmetric\ThreeDXRay{}]
Given $\lambda \in \Nat^{[0, r]}$, the \emph{\Symmetric\ThreeDXRay{}} problem is to decide if there exists a point set $P \subseteq \overline C$ with sum-marginal~$\lambda$.

Analogously, given $\lambda \in \Nat^{[0, r]}$, the \emph{\SkewSymmetric\ThreeDXRay{}} problem is to decide if there exists a point set $P \subseteq C$ with sum-marginal~$\lambda$.
\end{problem}

Our main focus in 3D will be the following \emph{promise problem version}.

\begin{problem}[\Symmetric\PromiseThreeDXRay{} and \SkewSymmetric\PromiseThreeDXRay{}]\hfill
Given $\lambda \in \Nat^{[0, r]}$ such that $B(\lambda) = \overline\beta(|\lambda| / 3)$, the \emph{\Symmetric\PromiseThreeDXRay{}} problem is to decide if there exists a point set $P \subseteq \overline C$ with sum-marginal~$\lambda$.

Analogously, given $\lambda \in \Nat^{[0, r]}$ such that $B(\lambda) = \beta(|\lambda| / 3)$, the \emph{\SkewSymmetric\PromiseThreeDXRay{}} problem is to decide if there exists a point set $P \subseteq C$ with sum-marginal~$\lambda$.
\end{problem}

\subsection*{Reduction from \Sharp{\normalfont\textsc{(Skew-)Symmetric}} \PromiseThreeDXRay{} to {\normalfont\textsc{(Dual-)}}\Plethysm}

Recall that {\normalfont\textsc{(Skew-)Symmetric}} \PromiseThreeDXRay{} instances~$\lambda$ need not be partitions in general. Still, we prove that any promise instance~$\lambda$ which is not a partition is trivially rejected:

\begin{lemma} \label[lemma]{lem:pyramid-partition}
Let~$\lambda$ be a composition, and assume there exists some pyramid $P$ in the open or closed cone with sum-marginal $\lambda$. Then $\lambda$ is a partition.
\end{lemma}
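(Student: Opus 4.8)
The plan is to deduce the claim from the highest-weight-vector construction already carried out in the proof of \cref{thm:plethysm-bounds}, together with the basic fact recalled in \cref{sec:preliminaries} that the weight of a nonzero highest weight vector of a polynomial $\GL(V)$-representation is necessarily a partition. So suppose first that $P$ is a pyramid in the closed cone with sum-marginal $\lambda$, and write $n := |P|$, so that $|\lambda| = 3n$. I would pick $k$ large enough that $P \subseteq \overline C_{< k}$, set $V := \Complex^k$, and consider the vector $v_P = \bigwedge_{(x,y,z) \in P} X_x X_y X_z \in \Wedge^n \Sym^3 V$ exactly as in that proof.

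The steps, in order, are: (i) observe that $v_P \neq 0$, since the monomials $X_x X_y X_z$ attached to distinct points of $\overline C_{< k}$ are distinct vectors of the monomial basis of $\Sym^3 V$, so their wedge is a nonzero basis vector of $\Wedge^n \Sym^3 V$; (ii) quote the cancellation computation in the proof of \cref{thm:plethysm-bounds} verbatim to conclude that $v_P$ is annihilated by every raising operator $E_{i,j}$, i.e.\ $v_P$ is a highest weight vector; (iii) identify the weight of $v_P$: it is the sum over $(x,y,z) \in P$ of the weights of the factors $X_x X_y X_z$, and such a factor has weight equal to the sum-marginal of the single point $(x,y,z)$, so the total weight is $S(P) = \lambda$; (iv) conclude, by the discussion in \cref{sec:preliminaries}, that $\lambda$ is a partition. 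The open-cone case is identical after replacing $v_P$ by $w_P = \bigwedge_{(x,y,z)\in P} X_x \wedge X_y \wedge X_z \in \Wedge^n \Wedge^3 V$ and using the analogous (there merely sketched) part of the proof of \cref{thm:plethysm-bounds}.

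The only slightly delicate point---and it is already contained in the proof of \cref{thm:plethysm-bounds}---is step (ii): when a coordinate equal to $j$ in a point of $P$ is lowered to $i < j$, the resulting triple need not lie in $\overline C$ (this is exactly the boundary behaviour, e.g.\ for points of the form $(j, j, z)$), but its sorted version does, and the pyramid property then puts that sorted point back into $P$, so the monomial in question already appears in the wedge and the summand dies by antisymmetry. I expect no further obstacle; everything else is bookkeeping. (A purely combinatorial argument, exhibiting an injection from coordinate-slots of value $i+1$ into coordinate-slots of value $i$ so as to show $S_i(P) \geq S_{i+1}(P)$ directly, also seems feasible, but it has to treat whole ``columns'' of $P$ together because of this same boundary diagonal, so the representation-theoretic route above is cleaner.)
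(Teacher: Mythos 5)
Your proof is correct but takes a genuinely different route from the paper's. The paper argues directly and combinatorially: to show $\lambda_i \geq \lambda_{i+1}$ it writes $\lambda_i - \lambda_{i+1} = \sum_{(x,y,z)\in P}\alpha(x,y,z)$ with $\alpha(x,y,z) := \delta_{x,i}+\delta_{y,i}+\delta_{z,i}-\delta_{x,i+1}-\delta_{y,i+1}-\delta_{z,i+1}$, and then constructs an explicit injection $\varphi$ on $P$ (by case analysis on how many coordinates equal $i+1$ or $i$) that pairs each point with $\alpha < 0$ to a point with opposite $\alpha$; the pyramid property guarantees the image stays in $P$. This is exactly the ``purely combinatorial argument, exhibiting an injection'' you mention in passing and set aside, and it is more elementary and self-contained. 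Your approach instead re-uses the highest-weight-vector computation from the proof of \cref{thm:plethysm-bounds} and invokes the standard fact that the weight of a nonzero highest weight vector of a polynomial $\GL(V)$-representation is a partition. This is valid, and it is a nice conceptual unification (it exposes that the pyramid condition \emph{is} the highest-weight condition, so the partition property should come for free), but it has two minor implicit dependencies that you should surface: (a) \cref{thm:plethysm-bounds} is \emph{stated} under the hypothesis that $\lambda$ is a partition, so you must point out that the sub-argument showing $E_{i,j}v_P=0$ never uses that hypothesis — it only uses the pyramid property, which is true but worth saying; and (b) you are relying on a ``well-known'' fact from Section~2 rather than on anything proved in the paper. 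On the plus side, you correctly flagged and patched the boundary subtlety (lowering a coordinate may exit $\overline C$, but the \emph{sorted} triple is still coordinatewise dominated and hence in $P$), which the paper's own prose in \cref{thm:plethysm-bounds} glosses over; your handling of this is actually more careful than the paper's.
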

\begin{proof}
The proof is slightly subtle. Fix an arbitrary $i$; we want to show that $\lambda_i \geq \lambda_{i+1}$.
Let us define $\alpha(x, y, z) := \delta_{x, i} + \delta_{y, i} + \delta_{z, i} - \delta_{x, i+1} - \delta_{y, i+1} - \delta_{z, i+1}$. It is easy to see that $\lambda_i - \lambda_{i+1} = \sum_{(x, y, z) \in P} \alpha(x, y, z)$, so we have to show that $\sum_{(x, y, z) \in P} \alpha(x, y, z) \geq 0$. We achieve this as follows: To each point $(x, y, z) \in P$ with $\alpha(x, y, z) < 0$, we assign a point $(x', y', z') \in P$ such that $\alpha(x, y, z) + \alpha(x', y', z') = 0$. Every $(x', y', z')$ will be used for at most one $(x, y, z)$, which then finishes the proof. We proceed to define $(x', y', z') =: \varphi(x, y, z)$.
The symbol $\cdot$ matches any input $\notin \{i, i-1\}$ and is not modified in the output. We set
$\varphi(i+1, i+1, i+1) := (i, i, i)$,
$\varphi(i+1, i+1, i) := (i+1, i, i)$,
$\varphi(i+1, i+1, \cdot) := (i, i, \cdot)$,
$\varphi(\cdot, i+1, i+1) := (\cdot, i, i)$,
$\varphi(i+1, \cdot, \cdot) := (i, \cdot, \cdot)$,
$\varphi(\cdot, i+1, \cdot) := (\cdot, i, \cdot)$,
$\varphi(\cdot, \cdot, i+1):=(\cdot, \cdot,i)$.
The map $\varphi$ maps points in $C$ to points in $C$, and points in $\overline C$ to points in $\overline C$.
Moreover, since $P$ is a pyramid, we have indeed $(x',y',z') \in P$.
\end{proof}

\begin{lemma} \label[lemma]{lem:reduction-3dxray-plethysm}
There exists a parsimonious polynomial-time reduction from \Sharp\Symmetric\PromiseThreeDXRay{} to
\DualPlethysm\Inner{$3$}.

Moreover, there exists a parsimonious polynomial-time reduction from \Sharp\SkewSymmetric\PromiseThreeDXRay{} to
\Plethysm\Inner{$3$}.
\end{lemma}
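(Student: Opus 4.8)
The plan is to observe that, once degenerate inputs are peeled off, a \Sharp\Symmetric\PromiseThreeDXRay{} (resp.\ \Sharp\SkewSymmetric\PromiseThreeDXRay{}) instance is, up to a cosmetic transposition, already an instance of \DualPlethysm\Inner{$3$} (resp.\ \Plethysm\Inner{$3$}): \cref{thm:plethysm-bounds} identifies the relevant plethysm coefficient with a count of point sets, and \cref{prop:min-coordinate-sum} shows that under the promise this count is exact. First I would handle the degenerate cases. Given $\lambda \in \Nat^{[0,r]}$: if $3 \nmid |\lambda|$, then no point set $P$ can have sum-marginal $\lambda$ at all, since $|S(P)| = 3|P|$; thus the instance has zero solutions and the reduction outputs a fixed trivial no-instance, e.g.\ $((1),1,3)$, for which both $a$ and $b$ coefficients are $0$. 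Otherwise write $|\lambda| = 3n$. If $\lambda$ is not a partition, I claim there is still no solution: running the argument of \cref{prop:min-coordinate-sum} (which uses only \cref{lem:min-coordinate-sum} and the promise $B(\lambda) = \overline\beta(n)$, resp.\ $B(\lambda) = \beta(n)$, and nowhere needs $\lambda$ to be a partition) shows that every point set in the closed cone, resp.\ open cone, with sum-marginal $\lambda$ is a pyramid, and by \cref{lem:pyramid-partition} any pyramid has a partition as its sum-marginal; hence no such $P$ exists and the reduction again outputs a trivial no-instance.

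It remains to treat a promise instance $\lambda$ that is a partition of $3n$. In the symmetric case, \cref{def:plethysm-bounds} defines $\upperbound b_\lambda(n,3)$ as exactly the number of point sets in the closed cone with sum-marginal $\lambda$ — which is precisely the solution count of the \Sharp\Symmetric\PromiseThreeDXRay{} instance $\lambda$ (any such point set lies in $[0,r]^3$ because $\lambda \in \Nat^{[0,r]}$, so no boundedness subtlety arises). Since the promise $B(\lambda) = \overline\beta(n)$ holds, \cref{prop:min-coordinate-sum} gives $\upperbound b_\lambda(n,3) = b_\lambda(n,3)$. So the reduction outputs $(\lambda, n, 3)$, and its output value $b_\lambda(n,3)$ equals the solution count of the input instance.

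In the skew-symmetric case the only twist is the transpose appearing in \cref{def:plethysm-bounds}: $\upperbound a_\mu(n,3)$ counts point sets in the open cone with sum-marginal $\mu^\transpose$, not $\mu$. Accordingly, for a promise instance $\lambda$ that is a partition of $3n$ the reduction outputs $(\lambda^\transpose, n, 3)$; then $\upperbound a_{\lambda^\transpose}(n,3)$ counts point sets in the open cone with sum-marginal $(\lambda^\transpose)^\transpose = \lambda$, i.e.\ the solution count of the \Sharp\SkewSymmetric\PromiseThreeDXRay{} instance, and since $B((\lambda^\transpose)^\transpose) = B(\lambda) = \beta(n)$, \cref{prop:min-coordinate-sum} gives $\upperbound a_{\lambda^\transpose}(n,3) = a_{\lambda^\transpose}(n,3)$. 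Testing divisibility by $3$, testing whether $\lambda$ is a partition, and transposing a partition are all polynomial-time, and in every case the number of solutions is preserved exactly, so both reductions are parsimonious and run in polynomial time. Essentially all the mathematical content lives in \cref{thm:plethysm-bounds} and \cref{prop:min-coordinate-sum}; the only points that require care are getting the transpose bookkeeping right in the skew-symmetric case and making the reduction \emph{total} by correctly sending the non-partition and non-divisible-by-$3$ inputs to genuine zero-instances so that parsimony holds on those inputs as well. Neither is a real obstacle.
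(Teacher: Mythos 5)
Your proof is correct and takes essentially the same route as the paper: peel off the degenerate inputs to a fixed no-instance, then in the nontrivial case output $(\lambda,n)$ (symmetric) or $(\lambda^\transpose,n)$ (skew-symmetric) and invoke \cref{prop:min-coordinate-sum} to identify the solution count of the promise instance with the plethysm coefficient. You are in fact slightly more careful than the paper in one spot: to argue that a non-partition promise instance has zero solutions, you explicitly combine the promise (which, via \cref{prop:min-coordinate-sum}, forces every solution to be a pyramid) with \cref{lem:pyramid-partition}, whereas the paper cites \cref{lem:pyramid-partition} alone, leaving that intermediate step implicit.
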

\begin{proof}
We use a trivial no-instance for \DualPlethysm\Inner{$3$}: $b_{(2, 1)}(1, 3) = 0$. Given an instance $(\lambda, n)$ of \Sharp\Symmetric\PromiseThreeDXRay{}, if $\lambda$ is not a partition or if $|\lambda| \neq 3n$, then the reduction outputs that trivial no-instance. This is the correct behavior according to \cref{lem:pyramid-partition}.

On the other hand, if $\lambda$ is a partition, then the reduction outputs its input $(\lambda, n)$. Since $B(\lambda) = \overline\beta(|\lambda| / 3)$, \cref{prop:min-coordinate-sum} ensures that $b^+_\lambda(n, 3) = b_\lambda(n, 3)$. Observe that in the language of \cref{sec:plethysm-bounds}, \Sharp\Symmetric\PromiseThreeDXRay{} asks to compute $b^+_\lambda(n, 3)$. Hence the reduction works correctly.

The same argument works for $a_\lambda(n, 3)$ and $\beta(|\lambda| / 3)$. Here, in the nontrivial case, $(\lambda^\transpose, n)$ is returned.
\end{proof}

\subsection*{Reduction from \Sharp\Symmetric\TwoDXRay{} to \Sharp\SkewSymmetric\PromiseThreeDXRay{}}
In this section we prove the following lemma.

\begin{lemma} \label[lemma]{lem:reduction-2dxray-3dxray}
There exists a parsimonious polynomial-time reduction from \Sharp\Symmetric\TwoDXRay{} to \Sharp\Symmetric\PromiseThreeDXRay{}.

Moreover, there exists a parsimonious polynomial-time reduction from \Sharp\SkewSymmetric\TwoDXRay{} to \Sharp\SkewSymmetric\PromiseThreeDXRay{}.
\end{lemma}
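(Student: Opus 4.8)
The plan is to reduce by padding: given a two-dimensional instance $\lambda\in\Nat^{[0,r]}$, I would prepend a large, fully forced ``base'' consisting of \emph{every} point of coordinate sum less than $r$, so that the only freedom left in a three-dimensional solution is a choice of points on the single slice $G_r$ — which is exactly the two-dimensional instance. I describe the symmetric case; the skew-symmetric reduction is word-for-word the same after replacing $\overline C,\overline\xi,\overline\beta$ and $G_r\cap\overline C$ by $C,\xi,\beta$ and $G_r\cap C$.

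Concretely, I would set $P_0 := \{\,e\in\overline C : e \text{ has coordinate sum }<r\,\}$, $M := |P_0| = \sum_{i<r}\overline\xi(i)$, and $\sigma := S(P_0)\in\Nat^{[0,r-1]}$; these, together with $B(\sigma)=\sum_{i<r} i\,\overline\xi(i)$, are computable in polynomial time (there is a closed form for $\overline\xi$). The reduction outputs $\lambda' := \lambda + \sigma$, \emph{unless} one of the simple necessary-feasibility conditions $3\mid|\lambda|$, $B(\lambda)=r|\lambda|/3$, $|\lambda|/3\le\overline\xi(r)$ fails; in that case the \Sharp\Symmetric\TwoDXRay{} instance is infeasible (every $P\subseteq G_r\cap\overline C$ has $|S(P)|=3|P|$, $B(S(P))=\sum_{(x,y,z)\in P}(x+y+z)=r|P|$, and $|P|\le|G_r\cap\overline C|=\overline\xi(r)$), and the reduction instead outputs a fixed valid promise no-instance, e.g.\ $(8,0,0,1)$, which satisfies $B=3=\overline\beta(3)$ but is realized by no point set in $\overline C$.

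The first thing to check is that $\lambda'$ meets the promise: $|\lambda'|=|\lambda|+3M=3N$ with $N:=|\lambda|/3+M$, and since $M=\sum_{i<r}\overline\xi(i)$ and $N-M=|\lambda|/3\le\overline\xi(r)$, the definitions of $\overline\iota$ and $\overline\beta$ give $\overline\beta(N)=\overline\beta(M)+r(N-M)=B(\sigma)+r|\lambda|/3$, whereas $B(\lambda')=B(\lambda)+B(\sigma)=r|\lambda|/3+B(\sigma)$; hence $B(\lambda')=\overline\beta(|\lambda'|/3)$. The second, main, task is to establish a sum-marginal-preserving bijection $P\leftrightarrow P_0\cup P$ between the solution sets. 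The forward direction is routine: $P\subseteq G_r\cap\overline C$ and $P_0$ are disjoint (different coordinate-sum levels), $P_0\cup P\subseteq\overline C$, and $S(P_0\cup P)=\sigma+\lambda=\lambda'$. For the reverse direction, take $P'\subseteq\overline C$ with $S(P')=\lambda'$, so $|P'|=N$ and, decisively, $B(P')=B(\lambda')=\overline\beta(N)$ is the \emph{minimum} allowed by \cref{lem:min-coordinate-sum}. If $P'$ omitted some $e\in\overline C$ with coordinate sum $<r$, then $P'$ would contain at most $M-1$ points of coordinate sum $<r$, hence at least $N-M+1\ge1$ points of coordinate sum $\ge r$; replacing one such point by $e$ would yield a size-$N$ point set in $\overline C$ of strictly smaller coordinate sum, contradicting \cref{lem:min-coordinate-sum}. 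Thus $P_0\subseteq P'$; setting $P:=P'\setminus P_0$ we get $|P|=N-M$, every point of $P$ has coordinate sum $\ge r$, and $B(P)=\overline\beta(N)-B(\sigma)=r|\lambda|/3=r|P|$ forces all of them to have coordinate sum exactly $r$, i.e.\ $P\subseteq G_r\cap\overline C$, with $S(P)=\lambda'-\sigma=\lambda$. The two maps are mutually inverse, so the reduction is parsimonious and polynomial time. (By \cref{prop:min-coordinate-sum} every such $P'$ is in fact a pyramid, but the argument does not use this.)

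The step I expect to be the crux is the coordinate-sum bookkeeping that makes everything fit on the nose: the padding must be chosen so that the greedy minimum $\overline\beta(N)$ splits exactly as ``base cost $B(\sigma)$ plus $r$ times the number of free points,'' and one then has to recognize that this tightness is precisely what pins the free part of any minimal three-dimensional solution onto the slice $G_r$; taking $P_0$ to be all points strictly below level $r$ is what makes the split clean. A secondary, purely technical, point is the degenerate-instance handling — a carelessly built $\lambda'$ can violate the promise, so the (easily recognizable, already infeasible) instances must be filtered out first.
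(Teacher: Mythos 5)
Your proposal is correct and follows essentially the same route as the paper: pad the input sum-marginal with the sum-marginal of the complete pyramid $\overline P_{r-1}$ (your $P_0$ and $\sigma$ are exactly $\overline P_{r-1}$ and $S(\overline P_{r-1})$), and recover the bijection from the fact that a minimal-coordinate-sum point set in $\overline C$ of the right size must contain $\overline P_{r-1}$ and be contained in $\overline P_r$. The paper factors that last fact out as \cref{lem:coordinate-sum-pyramid} (together with the wrapper \cref{lem:pyramid-embedding}), whereas you inline the swap argument, but the mechanism is identical. One small point in your favour: the paper's filtering step only asserts $3\mid|\widehat\lambda|$ and $B(\widehat\lambda)=rn$, and \cref{lem:pyramid-embedding} as stated silently needs $n\le\overline\xi(r)$ for the equality $\overline\beta(|\overline P_{r-1}|)+rn=\overline\beta(|\overline P_{r-1}|+n)$ (and hence for the promise) to hold; your explicit check $|\lambda|/3\le\overline\xi(r)$ (equivalently $n\le\overline\xi(r)$) closes that gap cleanly and is the right thing to do.
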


Before we can prove \cref{lem:reduction-2dxray-3dxray} we make an observation (\cref{lem:coordinate-sum-pyramid} below) of independent interest.
We call $P_r := \{ (x, y, z) : x + y + z \leq r, x > y > z \}$ the \emph{complete pyramid in the open cone},
and analogously $\overline{P_r} := \{ (x, y, z) : x + y + z \leq r, x \geq y \geq z \}$ the \emph{complete pyramid in the closed cone}.
Note that for all $r$ we have $\overline\beta(|\overline P_r|) = B(\overline P_r)$. Moreover, $\overline\beta(n)$ is the piecewise linear interpolation between these values.
Analogously, $\beta(n)$ is the piecewise linear interpolation between $\beta(|P_r|) = B(P_r)$.

The proof of the following lemma is a careful adaption of a proof given by Brunetti, Del Lungo and Gérard~\cite{BrunettiDG2001} for the non-symmetric \ThreeDXRay{} problem.
Roughly speaking, the trick is to embed the two-dimensional grid $G_r$ into the $r$-th layer of the complete pyramid.

\begin{lemma} \label[lemma]{lem:coordinate-sum-pyramid}
Let $P \subseteq \overline C$ be of size $|P| = n$. Then $P$ has coordinate sum $B(P) = \overline\beta(n)$ if and only if $\overline P_{r-1} \subseteq P \subseteq \overline P_r$ for some $r$.

Analogously, let $P \subseteq C$ be of size $|P| = n$. Then $P$ has coordinate sum $B(P) = \beta(n)$ if and only if $P_{r-1} \subseteq P \subseteq P_r$ for some $r$.
\end{lemma}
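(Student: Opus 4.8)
The plan is to prove the two implications separately, handling in each case the open-cone and closed-cone versions by literally the same argument (replacing $\overline C,\overline P_r,\overline\xi,\overline\iota,\overline\beta$ throughout by $C,P_r,\xi,\iota,\beta$). The only substantial ingredient is \cref{lem:min-coordinate-sum}; everything else is elementary bookkeeping, so I would keep the write-up short.

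For the ``if'' direction I would take $P$ with $\overline P_{r-1}\subseteq P\subseteq\overline P_r$ and $|P|=n$ and compute $B(P)$ by hand. First I would spell out the dictionary behind the remark preceding the lemma: since $\sum_{i=0}^{\iota}\overline\xi(i)=|\overline P_\iota|$, we have $\overline\iota(n)=\min\{\iota:|\overline P_\iota|\ge n\}$, the function $\overline\iota$ is constantly $r$ on the integers of the interval $(|\overline P_{r-1}|,|\overline P_r|]$, and $\overline\beta(|\overline P_r|)=\sum_{i=1}^{r}i\,\overline\xi(i)=B(\overline P_r)$. Now every point of $P\setminus\overline P_{r-1}$ lies in $\overline P_r\setminus\overline P_{r-1}$ and hence has coordinate sum exactly $r$; as there are $n-|\overline P_{r-1}|$ such points, $B(P)=B(\overline P_{r-1})+r(n-|\overline P_{r-1}|)=\overline\beta(|\overline P_{r-1}|)+\sum_{i=|\overline P_{r-1}|+1}^{n}\overline\iota(i)=\overline\beta(n)$.

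For the ``only if'' direction I would start from $|P|=n\ge 1$ and $B(P)=\overline\beta(n)$ and let $r:=\max\{x+y+z:(x,y,z)\in P\}$. Because every point of $P$ has coordinate sum at most $r$, we immediately get $P\subseteq\overline P_r$, so the only thing left is $\overline P_{r-1}\subseteq P$. I would prove this by contradiction: if there were $(a,b,c)\in\overline P_{r-1}\setminus P$, I would pick a point $(x^\ast,y^\ast,z^\ast)\in P$ of coordinate sum $r$ and replace it by $(a,b,c)$, obtaining a finite set $P'\subseteq\overline C$ of cardinality $n$ (here $(a,b,c)\notin P$ is used) with $B(P')=\overline\beta(n)-r+(a+b+c)\le\overline\beta(n)-1$, contradicting \cref{lem:min-coordinate-sum}.

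I do not expect any genuinely hard step; the only points needing care are this combinatorial dictionary (and its open-cone analogue) and the degenerate cases $n=|\overline P_{r-1}|$ or $n=|\overline P_r|$, where $P$ is forced to equal a complete pyramid and the displayed sums are empty. It may also be worth recording, although the lemma does not need it, that any $P$ with $\overline P_{r-1}\subseteq P\subseteq\overline P_r$ is automatically a pyramid in the respective cone: a point of $\overline C$ dominated by some $(x,y,z)\in P$ either has coordinate sum $\le r-1$, hence lies in $\overline P_{r-1}\subseteq P$, or is equal to $(x,y,z)$ — which reconciles the statement with \cref{prop:min-coordinate-sum}.
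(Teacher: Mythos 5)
Your proposal is correct and takes essentially the same route as the paper's proof: the "if" direction is the same one-line observation that $P\setminus\overline P_{r-1}\subseteq G_r$ (you merely unpack the accompanying remark about $\overline\beta$ being the piecewise-linear interpolation of $r\mapsto B(\overline P_r)$), and the "only if" direction is the same exchange argument, swapping a far point for a missing near point and invoking \cref{lem:min-coordinate-sum} to get a contradiction. The only cosmetic difference is how $r$ is chosen for the "only if" step: you take $r$ to be the maximum coordinate sum present in $P$ and then exhibit a hole $(a,b,c)\in\overline P_{r-1}\setminus P$, while the paper takes $r$ maximal with $\overline P_{r-1}\subseteq P$ and then exhibits an outlier $(x,y,z)\in P\setminus\overline P_r$; both choices make the exchanged pair strictly decrease $B$, so the two proofs are interchangeable.
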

\begin{proof}
It is easy to see by a direct calculation that all point sets $\overline P_{r-1} \subseteq P \subseteq \overline P_r$ have $B(P)=\overline\beta(n)$, since every point in $P\setminus \overline P_{r-1}$ has the same coordinate sum~$r$ (i.e., $P\setminus \overline P_{r-1} \subseteq G_r$).

The other direction is intuitively straightforward by the following argument: Whenever $P$ contains a ``hole'', then we can fill the hole by moving outer points closer to the origin. But in that we decrease the coordinate sum of $P$. More formally: Assume that $P$ has minimal coordinate sum among all sets $\subseteq \overline C$ of the same size $n$. We pick $r$ to be maximal, such that $\overline P_{r-1} \subseteq P$ and assume, for the sake of contradiction, that $P \not\subseteq \overline P_r$, i.e., there exists some $(x, y, z) \in P \setminus \overline P_r$. Furthermore, there exists some $(x', y', z') \in \overline P_r \setminus P$, since otherwise $P \subseteq \overline P_r$ which contradicts the way we chose~$r$. We construct the set $P' := P \union \{ (x', y', z') \} \setminus \{ (x, y, z) \}$; clearly $|P'| = |P| = n$. The coordinate sum of $P'$ is bounded by
\[
  B(P') = B(P) - \underbrace{(x + y + z)}_{{} > r} + \underbrace{(x' + y' + z')}_{{} \leq r} < B(P) = \overline\beta(n),
\]
which is a contradiction to the definition of $\overline\beta(n)$.

This proof works completely analogously for $C$.
\end{proof}

The reduction now proceeds by embedding $G_r$ in the three-dimensional space. Any solution $\widehat P$ of the given {\normalfont\textsc{(Skew-)Symmetric}} \TwoDXRay{} instance is interpreted as the $r$-th layer of the thereunder entirely filled pyramid $P = \overline P_{r-1} \union \widehat P$; \cref{fig:pyramid} might give some further intuition on how to view {\normalfont\textsc{Symmetric}} \TwoDXRay{} in three-dimensional space. We detail the construction in the following lemma:


\begin{figure}[t]
\begin{subfigure}[b]{0.44\textwidth}
\centering
\vspace{-.5cm}
\begin{tikzpicture}[triangle diagram, r=4, three dimensional skew=1.45cm]

\def\axiscapt{.18cm}
\def\labelr{$r$}

\drawthreedaxes
\drawthreedaxesskewxyz

\foreach \i in {1, ..., 4} {
  \edef\r{\i}
  \drawtriframe
	\fillcone
	\drawtrigrid
}

\drawthreedskewxlabel{1}{$G_1$}
\drawthreedskewxlabel{2}{$G_2$}
\drawthreedskewxlabel{\r}{$G_r$}

\end{tikzpicture}
\end{subfigure}
\hfill
\begin{subfigure}[b]{0.55\textwidth}
\centering
\begin{tikzpicture}[triangle diagram, r=5, three dimensional straight=.65cm]

\def\axiscapt{.18cm}
\def\labelr{$r$}

\drawthreedaxes
\drawtriaxesarrowsxyz
\drawtriframe
\drawtriaxescapt
\drawthreedgrid
\fillcone
\drawtrigrid

\end{tikzpicture}
\end{subfigure}
\caption{Visualizes how to arrange the grids $G_r$ in three-dimensional space; the red-colored areas indicate the intersections with $\overline C$ (or $C$, by restricting to the interior). The essential step of the reduction is to embed a set $\widehat P \subseteq G_r$ into the top-most layer of the pyramid $P := P_{r-1} \union \widehat P$.} \label{fig:pyramid}
\end{figure}
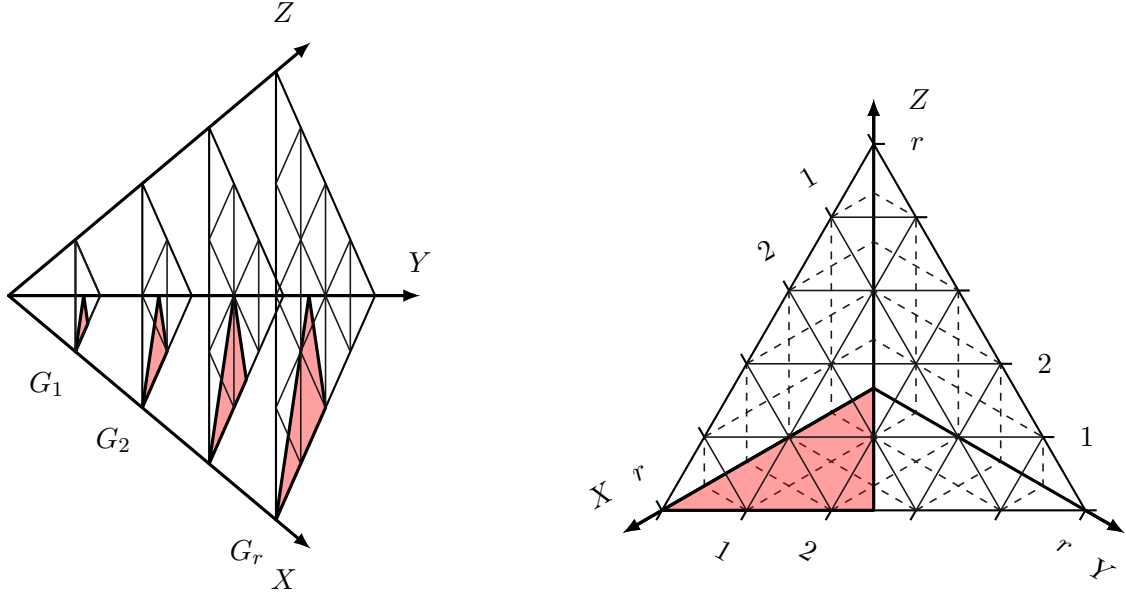

\begin{lemma} \label[lemma]{lem:pyramid-embedding}
Fix some $r$. Let $\widehat\lambda \in \Nat^{[0, r]}$ be such that there exists $n$ with $|\widehat\lambda| = 3n$ and $B(\widehat\lambda) = rn$. (i.e., $\widehat\lambda$ is not trivially rejected as a {\normalfont\textsc{Symmetric}} \TwoDXRay{} instance). Let $\lambda := S(\overline P_{r-1}) + \widehat\lambda$. Then $B(\lambda) = \overline\beta(|\overline P_{r-1}| + n)$ and there is a one-to-one correspondence between point sets $P \subseteq \overline C$ with sum-marginal $\lambda$ and point sets $\widehat P \subseteq G_r \intersect \overline C$ with sum-marginal $\widehat\lambda$.

The same statements hold if instead we consider point sets in the open cone $C$, $P_{r-1}$ and $P_r$, and the {\normalfont\textsc{Skew-Symmetric}} problems. In that case $\lambda := S(P_{r-1}) + \widehat\lambda$.
\end{lemma}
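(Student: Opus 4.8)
The plan is to establish the numerical identity $B(\lambda) = \overline\beta(|\overline P_{r-1}| + n)$ first, and then exhibit the claimed bijection explicitly by ``slicing off the top layer'' $G_r$ of the pyramid.

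\textbf{The coordinate sum.} By additivity of $B$ and of the sum-marginal, $B(\lambda) = B(S(\overline P_{r-1})) + B(\widehat\lambda) = B(\overline P_{r-1}) + rn$, using the hypothesis $B(\widehat\lambda) = rn$. Recall from the discussion preceding \cref{lem:coordinate-sum-pyramid} that $\overline\beta$ is the piecewise-linear interpolation of the values $\overline\beta(|\overline P_s|) = B(\overline P_s)$, and note that $B(\overline P_r) - B(\overline P_{r-1}) = r \cdot (|\overline P_r| - |\overline P_{r-1}|)$, because every point of $\overline P_r \setminus \overline P_{r-1}$ lies in $G_r$ and hence has coordinate sum exactly $r$. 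Thus $\overline\beta$ has slope $r$ on the segment between $|\overline P_{r-1}|$ and $|\overline P_r|$, so $\overline\beta(|\overline P_{r-1}| + n) = B(\overline P_{r-1}) + rn = B(\lambda)$, provided $n \leq |G_r \intersect \overline C|$. (If $n$ were larger, $\widehat\lambda$ would admit no realization in $G_r \intersect \overline C$, and by \cref{lem:min-coordinate-sum} neither would $\lambda$ in $\overline C$; both sides of the correspondence are then empty and the statement is vacuous — this degenerate case is screened out before the lemma is invoked.)

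\textbf{The bijection.} For the forward direction, given any $\widehat P \subseteq G_r \intersect \overline C$ with $S(\widehat P) = \widehat\lambda$, put $P := \overline P_{r-1} \union \widehat P$; since $\overline P_{r-1}$ and $G_r$ are disjoint we get $S(P) = S(\overline P_{r-1}) + \widehat\lambda = \lambda$, and distinct $\widehat P$ yield distinct $P$. For the inverse, let $P \subseteq \overline C$ satisfy $S(P) = \lambda$. Then $|P| = |\lambda|/3 = |\overline P_{r-1}| + n$ and $B(P) = B(\lambda) = \overline\beta(|P|)$ by the previous paragraph, so \cref{lem:coordinate-sum-pyramid} gives $\overline P_{s-1} \subseteq P \subseteq \overline P_s$ for some $s$. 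From $|\overline P_{s-1}| \leq |P| \leq |\overline P_s|$ together with $|\overline P_{r-1}| < |P| \leq |\overline P_r|$ (using $1 \leq n \leq |G_r \intersect \overline C|$) and the strict monotonicity of $s \mapsto |\overline P_s|$, we conclude $s = r$. Hence $\widehat P := P \setminus \overline P_{r-1} \subseteq \overline P_r \setminus \overline P_{r-1} \subseteq G_r$, with $S(\widehat P) = S(P) - S(\overline P_{r-1}) = \widehat\lambda$; this map is inverse to the forward one.

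\textbf{The skew-symmetric case and the main obstacle.} The open-cone statement is proved verbatim with $\overline C, \overline P_s, \overline\beta$ replaced by $C, P_s, \beta$ and $\lambda := S(P_{r-1}) + \widehat\lambda$; the open-cone analogues of \cref{lem:coordinate-sum-pyramid} and \cref{lem:min-coordinate-sum} are already available. The only genuine content is the identification of the layer index produced by \cref{lem:coordinate-sum-pyramid} as exactly $r$ (rather than $r-1$ or $r+1$), which forces the cardinality bookkeeping in the inverse map and the explicit handling of the degenerate values of $n$; everything else is additivity of marginals and of $B$.
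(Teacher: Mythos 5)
Your argument follows the paper's proof of \cref{lem:pyramid-embedding} in all essentials: additivity of $B$ and of the sum-marginal for the numerical identity, and \cref{lem:coordinate-sum-pyramid} for the inverse direction of the bijection. The paper's version is terser (it invokes \cref{lem:coordinate-sum-pyramid} to write $P = \overline P_{r-1} \cup \widehat P$ without spelling out why the sandwiching index produced by that lemma must be $r$); you made that explicit via the cardinality comparison $|\overline P_{r-1}| < |P| \leq |\overline P_r|$, which is a harmless and correct elaboration.

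Your observation about the degenerate case $n > |G_r \cap \overline C|$ is a genuine one, and it uncovers a small oversight in the paper that you could not have seen. The paper's chain $B(\lambda) = B(\overline P_{r-1}) + rn = \overline\beta(|\overline P_{r-1}|) + rn = \overline\beta(|\overline P_{r-1}| + n)$ silently uses that $\overline\beta$ has slope $r$ on the whole interval $[\,|\overline P_{r-1}|,\ |\overline P_{r-1}| + n\,]$, which requires $n \leq |G_r \cap \overline C|$. That bound is \emph{not} implied by the stated hypotheses $|\widehat\lambda| = 3n$ and $B(\widehat\lambda) = rn$: for instance with $r=1$ and $\widehat\lambda = (4,2)$ one has $n = 2 > 1 = |G_1 \cap \overline C|$, yet $\widehat\lambda$ passes the divisibility and $B$-marginal checks that the reduction in \cref{lem:reduction-2dxray-3dxray} actually performs. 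In that regime the bijection is vacuously a bijection of empty sets, but the claimed identity $B(\lambda) = \overline\beta(|\overline P_{r-1}| + n)$ fails, and this matters because the reduction uses precisely that identity to certify that the output $\lambda$ satisfies the promise of \PromiseThreeDXRay{}. Your remark that ``this degenerate case is screened out before the lemma is invoked'' describes the right fix, but be aware that the paper does not in fact perform a cardinality check before invoking the lemma; either the lemma's hypotheses should be strengthened by $n \leq |G_r \cap \overline C|$, or the reduction should reject when $n$ is too large. This is a correction the authors should incorporate, not something you can take for granted.
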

\begin{proof}
We calculate $B(\lambda) = B(\overline P_{r-1}) + rn = \overline\beta(|\overline P_{r-1}|) + rn = \overline\beta(|\overline P_{r-1}| + n)$. The claimed one-to-one correspondence works as follows: For $P \subseteq \overline C$ with sum-marginal $\lambda = S(\overline P_{r-1}) + \widehat\lambda$ we know (\cref{lem:coordinate-sum-pyramid}) that $P = \overline P_{r-1} \union \widehat P$ for some $\widehat P \subseteq \overline C \intersect G_r$. The set $\widehat P$ has sum-marginal $\widehat\lambda$. Conversely, if some set $\widehat P$ has sum-marginal $\widehat\lambda$, then the set $P := \overline P_{r-1} \union \widehat P$ has sum-marginal $S(\overline P_{r-1}) + \widehat\lambda = \lambda$. Both maps $P \mapsto \widehat P$ and $\widehat P \mapsto P$ are inverse to each other, which finishes the proof.

After applying the obvious changes, the proof also works for the open cone $C$.
\end{proof}

\begin{proof}[Proof of \cref{lem:reduction-2dxray-3dxray}]
Given a \Sharp\Symmetric\TwoDXRay{} instance $\widehat\lambda$, we first assert that $|\widehat\lambda|$ is divisible by~$3$ and that $B(\widehat\lambda_i) = rn$ holds for $n = |\widehat\lambda|/3$. If this step fails, output a trivially zero \Sharp{\normalfont\textsc{Symmetric}} \ThreeDXRay{} instance. Otherwise, construct $\lambda := S(\overline P_{r-1}) + \widehat\lambda$ and output $\lambda$ as the corresponding \Sharp{\normalfont\textsc{Symmetric}} \ThreeDXRay{} instance. Clearly, this algorithm runs in polynomial time.

\cref{lem:pyramid-embedding} entirely proves the remaining goals: Since $B(\lambda) = \overline\beta(|\overline P_{r-1}| + n)$, the promise is kept. Moreover, the one-to-one correspondence in \cref{lem:pyramid-embedding} ensures that the reduction is correct and parsimonious. Again, the same proof works if one replaces \Symmetric{}by \SkewSymmetric{}and~$\overline C$ by~$C$.
\end{proof}

\subsection*{Reduction from \Sharp\TwoDXRay{} to \Sharp\Symmetric\TwoDXRay{}}
The following \cref{lem:reduction-2dxray-sym2dxray} is the first known result that connects \XRay{} problems with their symmetric counterparts.

\begin{lemma} \label[lemma]{lem:reduction-2dxray-sym2dxray}
There exist parsimonious polynomial-time reductions from \Sharp\TwoDXRay{} to \Sharp\Symmetric\TwoDXRay{} and to \Sharp\SkewSymmetric\TwoDXRay{}.
\end{lemma}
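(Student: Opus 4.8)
The plan is to encode the three separate marginals $\mu,\nu,\rho$ of a \Sharp\TwoDXRay{} instance into three far‑apart ``blocks'' of coordinate values of a single composition $\lambda$, and then to exploit the ordering constraint of the (skew‑)symmetric target problem together with the constant coordinate‑sum built into $G_r$ to force any solution to place the $X$‑coordinates in the top block, the $Y$‑coordinates in the middle block, and the $Z$‑coordinates in the bottom block. Concretely, given $\mu,\nu,\rho \in \Nat^{[0,r]}$, I would first check whether $|\mu| = |\nu| = |\rho|$; if this fails there is no solution to the original instance, so the reduction outputs a fixed trivial no‑instance (for example $\lambda = (1)$, which has no solution since $|S(P)| = 3|P|$ is always a multiple of $3$). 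Otherwise set $M := r + 1$ and $r' := r + 3M$, and define $\lambda \in \Nat^{[0, r']}$ by $\lambda_i := \rho_i$ for $i \in [0, r]$, $\lambda_{M + i} := \nu_i$ for $i \in [0, r]$, $\lambda_{2M + i} := \mu_i$ for $i \in [0, r]$, and $\lambda_i := 0$ everywhere else; the three blocks $[0, r]$, $[M, M + r]$, $[2M, 2M + r]$ are pairwise disjoint and contained in $[0, r']$ because $M > r$. The reduction outputs this $\lambda$, and the very same $\lambda$ will serve simultaneously as the \Sharp\Symmetric\TwoDXRay{} and the \Sharp\SkewSymmetric\TwoDXRay{} instance. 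This is clearly polynomial‑time.

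Next I would set up the solution bijection. The shift map $\varphi(x, y, z) := (x + 2M,\, y + M,\, z)$ increases the coordinate sum by $3M$, hence maps $G_r$ into $G_{r'}$, and since $M > r$ it satisfies $x + 2M > y + M > z$, so $\varphi(x,y,z) \in G_{r'} \intersect C \subseteq G_{r'} \intersect \overline{C}$. Therefore any point set $\widehat P \subseteq G_r$ with $X$‑, $Y$‑, $Z$‑marginals $\mu, \nu, \rho$ is sent to $\varphi(\widehat P) \subseteq G_{r'} \intersect C$ with sum‑marginal exactly $\lambda$: by block‑disjointness the restriction of $S(\varphi(\widehat P))$ to the top block records the shifted $X$‑marginal, to the middle block the shifted $Y$‑marginal, and to the bottom block the $Z$‑marginal. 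The substantial step is the converse: that \emph{every} $P \subseteq G_{r'} \intersect \overline{C}$ with $S(P) = \lambda$ is of the form $\varphi(\widehat P)$. Here $|\lambda| = 3|\mu|$ forces $|P| = |\mu| =: n$, and every coordinate of every point of $P$ lies in the support of $\lambda$, i.e.\ in one of the three blocks. For a point $(a, b, c) \in P$ let $(t, m, \ell)$ count how many of $a, b, c$ lie in the top, middle, bottom block; then $t + m + \ell = 3$ and $M(2t + m) \leq a + b + c \leq M(2t + m) + 3r$. Plugging in $a + b + c = r' = r + 3M$ and $M = r + 1$, a short computation leaves only the profiles $(t,m,\ell) \in \{(1,1,1), (1,0,2), (0,3,0), (0,2,1)\}$. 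Summing $t$ over all points of $P$ equals the top‑block sum of $\lambda$, which is $|\mu| = n = |P|$; since only $(1,1,1)$ and $(1,0,2)$ have $t \neq 0$, every point has one of these two profiles, and then summing $m$, which equals the middle‑block sum $|\nu| = n = |P|$, over these points forces every point to have profile $(1,1,1)$. Thus each point of $P$ equals $(x + 2M, y + M, z)$ for unique $x, y, z \in [0, r]$ with $x + y + z = r$, giving $P = \varphi(\widehat P)$ for a unique $\widehat P \subseteq G_r$ whose $X$‑, $Y$‑, $Z$‑marginals are read off the three blocks of $\lambda$ as $\mu, \nu, \rho$. Hence $\varphi$ is a bijection between solution sets and the reduction is parsimonious. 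Since every solution of the symmetric instance turned out to lie in $C$, the same $\varphi$ is also a parsimonious reduction to \Sharp\SkewSymmetric\TwoDXRay{}.

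I expect the only real obstacle to be this profile‑forcing step: one has to verify that neither the constant‑sum constraint of $G_{r'}$ nor the three marginal‑size equalities permit a point to ``straddle'' the blocks in an unintended way, and choosing the block gap $M$ (and the induced grid size $r'$) correctly is what makes the case analysis collapse to the single admissible profile $(1,1,1)$. Everything else---the disjointness of the blocks, the shift map, polynomiality, the degenerate size‑mismatch case, and adapting the argument to the open cone $C$ (where all the relevant inequalities are already strict by construction)---I expect to be routine.
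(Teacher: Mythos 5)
Your proposal is correct and uses the same core idea as the paper: spread the three marginals $\mu,\nu,\rho$ into three far-apart blocks of a single composition $\lambda$, embed $G_r$ into a higher layer $G_{r'}$ by a coordinate shift, and show that the coordinate-sum constraint of $G_{r'}$ together with the block supports forces every point of a solution to have exactly one coordinate per block, so that the ordering $x\geq y\geq z$ pins the $X$-, $Y$-, $Z$-coordinates into the top, middle, bottom blocks respectively. The one genuine difference is in the forcing step: the paper chooses a grid $G_{13r'}$ with blocks $[0,r']$, $[3r',4r']$, $[9r',10r']$, wide enough that a purely \emph{local} case analysis on a single point already rules out two coordinates sharing a block (the third coordinate would be forced outside the support of $\lambda$). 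Your gap $M=r+1$ (giving $G_{4r+3}$) is tighter; a local argument alone leaves the profiles $(1,0,2),(0,3,0),(0,2,1)$ alive, and you close those off with a \emph{global} counting argument comparing $\sum_p t_p$ and $\sum_p m_p$ against $|\mu|=|\nu|=|P|$. Both versions are correct and parsimonious; the paper's large gaps buy a simpler pointwise argument, while your smaller gaps buy a more economical target instance at the cost of the extra marginal-sum step. One cosmetic note: the paper's proof contains a typo $M:=[8r',9r']$ where $[9r',10r']$ is meant, but this does not affect its validity.
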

\begin{proof}
Let $\mu', \nu', \rho' \in \Nat^{[0, r']}$ be a given \Sharp\TwoDXRay{} instance. We start to construct the corresponding \Sharp{\normalfont\textsc{Symmetric}} \TwoDXRay{} instance by choosing $r := 13 r'$ and
\begin{align*}
  \lambda := (\rho'_0, \ldots, \rho'_{r'}, \underbrace{0, \ldots, 0}_{2r' - 1}, \nu'_0, \ldots, \nu'_{r'}, \underbrace{0, \ldots, 0}_{5r' - 1}, \mu'_0, \ldots, \mu'_{r'}, \underbrace{0, \ldots, 0}_{3r'}) \in \Nat^{[0, r]}.
\end{align*}


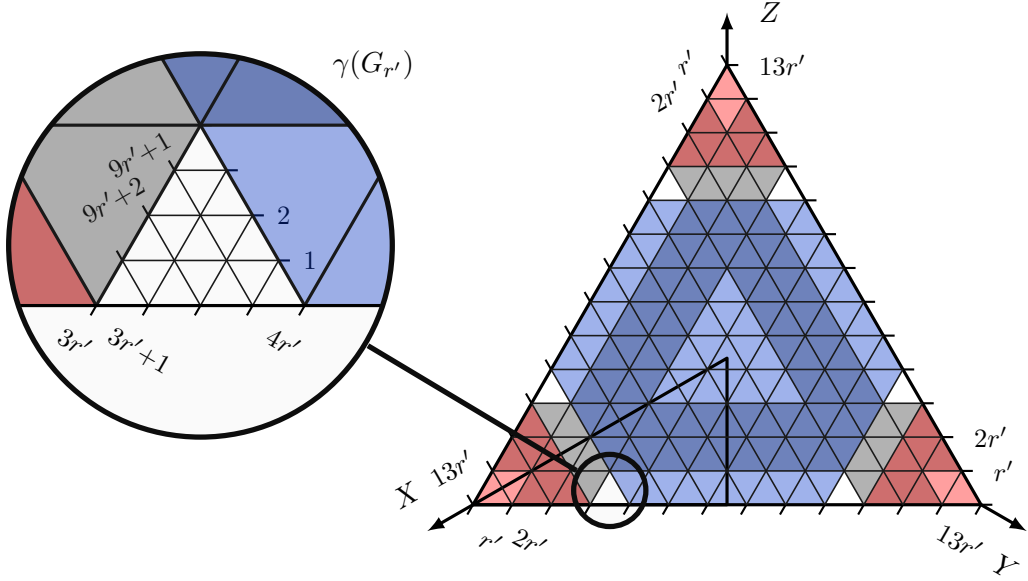
\begin{figure}[t]
\begin{center}
\begin{tikzpicture}[triangle diagram, r=13, triangular=.3cm]

\def\axisovershoot{.7cm}
\def\labelone{\small $r'$}
\def\labeltwo{\small $2r'$}
\def\labelr{\small $13r'$}

\def\black{black!80}
\def\green{green!20!blue}
\def\red{red}

\fillarea{1}{3}{\black}
\fillarea{4}{9}{\green}
\fillarea{10}{13}{\red}

\drawcone
\drawtriaxesnoarrows
\drawtriaxesarrowsxyz
\drawtriaxescapt
\drawtrigrid

\begin{scope}[triangle diagram, r=4, triangular=.4cm, xshift=-7cm, yshift=1.5cm]
	\clip (0, 0) circle (2.55cm);

	\def\labelzero{}
	\def\labelone{\small\hspace{-.6cm}\vbox to .1cm{$9r' \mathord+ 1$}}
	\def\labeltwo{\small\hspace{-.6cm}\vbox to .1cm{$9r' \mathord+ 2$}}
	\def\labelr{}
	\drawtriaxescaptx

	\def\labelzero{\small$3r'$}
	\def\labelone{\small\hspace{.4cm}$3r' \mathord+ 1$}
	\def\labeltwo{}
	\def\labelr{\small $4r'$}
	\drawtriaxescapty

	\def\labelzero{}
	\def\labelone{\small\hspace{-.1cm}$1$}
	\def\labeltwo{\small\hspace{-.1cm}$2$}
	\def\labelr{}
	\drawtriaxescaptz

	\drawtrigrid

	\begin{scope}[triangle diagram, r=3, triangular=1.6cm, yshift=1.6cm, grid/.append style={axis}]
		\filltriup{3}{0}{0}{\black}
		\filltriup{3}{0}{0}{\red}
		\filltridown{2}{0}{1}{\black}
		\filltriup{2}{0}{1}{\black}
		\filltridown{1}{0}{2}{\black}
		\filltridown{1}{0}{2}{\green}
		\filltriup{1}{1}{1}{\black}
		\filltriup{1}{1}{1}{\green}
		\filltridown{1}{1}{1}{\green}
		\filltriup{1}{2}{0}{\green}

		\drawtrigrid
		\drawtriaxes
	\end{scope}
\end{scope}

\begin{scope}[fill=black!95, fill opacity=.02]
	\coordinate (zoom in) at (-7cm, 1.5cm);
	\coordinate (zoom out) at ($(\r, 0, 0)!3.5/13!(0, \r, 0) + (0, .19cm)$);
	\node[draw=black!95, line width=.07cm, fill, circle, minimum size=5.1cm] (zoom in circle) at (zoom in) {};
	\node[draw=black!95, line width=.07cm, fill, circle, minimum size=.96cm] (zoom out circle) at (zoom out) {};
	\path[draw=black!95, line width=.07cm] (zoom in circle) -- (zoom out circle);
\end{scope}
\node at (-4.7cm, 3.9cm) {$\gamma(G_{r'})$};

\end{tikzpicture}
\end{center}
\vspace{-.4cm}
\caption[paragraph]{
Visualizes the reduction from \TwoDXRay{} to \Symmetric\TwoDXRay{}. The lattice depicts the subdivision of the triangular grid~$G_r$ into~$13$ regions per coordinate, each of width~$r'$. The intervals~$(r', 3r')$, $(4r', 9r')$ and~$(10r', 13r']$---that is, the blocks of zeros in~$\lambda$---are respectively indicated by gray-, blue- and red-colored areas; for instance, a point is colored gray if at least one of its coordinates lies in~$(r', 3r')$.

Observe that the circled region equals the image of $\gamma$ under $G_{r'}$, which is the only uncolored region intersecting the (closed) cone $\overline C$.} \label{fig:symmetrization}
\end{figure}

We claim that the \Sharp\Symmetric\TwoDXRay{} instance $\lambda$ has the same number of solutions as the \Sharp\TwoDXRay{} instance $(\mu', \nu', \rho')$. Indeed, consider the following map $\gamma$ that maps points in $G_{r'}$ to points in $G_{r} \intersect \overline C$:
\begin{align*}
	\gamma(x, y, z) = (x + 9r', y + 3r', z)
\end{align*}
We lift $\gamma$ from points to point sets in the canonical way and abuse notation by calling the resulting map $\gamma$ again. We claim that $\gamma$ maps bijectively
\begin{align*}
	\gamma :\quad
	\left\{\text{\parbox{5cm}{\centering point sets in $G_{r'}$ with\\$(X, Y, Z)$-marginal $(\mu', \nu', \rho')$}}\right\}
	\quad\longrightarrow\quad
	\left\{\text{\parbox{3.8cm}{\centering point sets in $G_r \intersect \overline C$\\ with sum-marginal $\lambda$}}\right\},
\end{align*}
which then finishes the proof. An illustration of $\gamma$ is provided in \cref{fig:symmetrization}. The well-definedness and the injectivity of $\gamma$ are obvious. It remains to prove the surjectivity of $\gamma$. We prove surjectivity by constructing an explicit preimage. To achieve this, we start with some observations on the codomain of~$\gamma$. So let $P$ be a point set in $G_r \intersect \overline C$ with sum-marginal $\lambda$. Let $M := [8r', 9r']$, $N := [3r', 4r']$ and $R := [0, r']$. We first establish that for all points $(x, y, z) \in P$ it holds that $x$, $y$ and $z$ are all contained in $M$, $N$ or $R$ and neither two are included in the same interval. As a first insight, since $\lambda_i = 0$ for all $i \not\in M \union N \union R$, it follows that $x$, $y$ and $z$ are indeed contained in $M \union N \union R$. In order to prove that the coordinates $x$, $y$ and $z$ must stem from pairwise distinct sets $M$, $N$ and $R$, we distinguish the following scenarios and show that each case causes a contradiction:

\medskip
\emph{Two coordinates in $R$:} If, say, $y, z \in R$, then $x = r - y - z \geq 13r' - r' - r' = 11r'$. But the range $[11r', 13r']$ intersects none of $M$, $N$ and $R$.

\medskip
\emph{Two coordinates in $N$:} If, say, $x, y \in N$, then $z = r - x - y \geq 13r' - 4r' - 4r' = 5r'$ and $z = r - x - y \leq 13r' - 3r' - 3r' = 7r'$. But the range $[5r', 7r']$ intersects none of $M$, $N$ and $R$.

\medskip
\emph{Two coordinates in $M$:} If, say, $x, y \in M$, then $z = r - x - y \leq 13r' - 8r' - 8r' < 0$, which clearly contradicts $z \in [0, 13r']$.

\medskip
We conclude that each element of $P$ is contained in $M \times N \times R$. \cref{fig:symmetrization} visualizes this result geometrically: The remaining $6$ regions $\sigma (M \times N \times R)$ for permutations $\sigma \in \S_3$ are precisely the $6$ non-colored triangles, one of which lies in $\overline C$ (the left one at the very bottom). It now follows that the the preimage of $P$ under $\gamma$ is the set $\{ (x - 9r', y - 3r', z) : (x, y, z) \in P \}$, which is a point set in $G_{r'}$ with $(X, Y, Z)$-marginal $(\mu', \nu', \rho')$ as required.

Note that without any changes the whole proof also works for the open cone $C$ instead of $\overline C$, which proves the result for \SkewSymmetric\TwoDXRay.
\end{proof}


\section{Connection to Kronecker Coefficients} \label{sec:kronecker}
\paragraph{Problem 10 in \cite{Stanley1999a}}
Let $V_1$, $V_2$, $V_3$ be finite dimensional complex vector spaces.
Then the group $\GL(V_1) \times \GL(V_2) \times \GL(V_3)$ acts linearly on the tensor product $V_1 \tensor V_2 \tensor V_3$ via $(g_1, g_2, g_3)(v_1 \tensor v_2 \tensor v_3) := (g_1 v_1) \tensor (g_2 v_2) \tensor (g_3 v_3)$ and linear continuation. Hence $\GL(V_1) \times \GL(V_2) \times \GL(V_3)$ also acts linearly on the symmetric power $\Sym^N (V_1 \tensor V_2 \tensor V_3)$, which decomposes into irreducibles
\[
	\Sym^n (V_1 \tensor V_2 \tensor V_3) = \Directsum_{\mu, \nu, \rho} (\Schur^\mu V_1 \tensor \Schur^\nu V_2 \tensor \Schur^\rho V_3)^{\directsum k(\mu, \nu, \rho)}.
\]
These multiplicities $k(\mu, \nu, \rho)$ are called the \emph{Kronecker coefficients} and have been the focus of numerous publications, out of which \cite{BurgisserI2008,BriandOR2009,PakP2017,IkenmeyerMW2017} study their complexity.

Problem~10 in \cite{Stanley1999a}, which is also known as the \emph{Kronecker problem}, asks for a combinatorial description of $k(\mu, \nu, \rho)$ (see Section~\ref{sec:introduction}). Many publications give combinatorial interpretations of the Kronecker coefficients in special subcases \cite{Lascoux1980, Remmel1992, RemmelW1994, Rosas2001, BallantineO2006, Blasiak2012, Hayashi2015, Liu2015, IkenmeyerMW2017} and hence these publications make progress towards resolving the Kronecker problem.

\paragraph{Connecting Problems 9 and 10 in \cite{Stanley1999a}}
Problem 9 in \cite{Stanley1999a} is asking for a combinatorial interpretation for $a_\lambda(n, m)$; this is also known as the \emph{pletyhsm problem}. Much less is known about this question than about the Kronecker problem, although numerous algorithms exist to compute plethysm coefficients (see the references provided in \cite{LoehrR2011}), and \cite{KahleM2018} rules out certain approaches towards finding combinatorial formulas. We make progress on the plethysm problem by giving a combinatorial interpretation in an interesting subcase, see Theorem~\ref{thm:equality-plethysm-kronecker} below.
We carefully mimic the proof technique in \cite{IkenmeyerMW2017}, where a combinatorial interpretation for the Kronecker coefficient is given in a subcase. Since we reach the \emph{same} combinatorial interpretation, we conclude that the plethysm coefficient \emph{equals} the Kronecker coefficient in this case. The detailed statement can be found in Theorem~\ref{thm:equality-plethysm-kronecker} below.

To see the connection, consider the net of reductions as depicted in \cref{fig:reductions}: First, observe that all reductions ultimately start from \Sharp\TwoDXRay{}. Furthermore, any reduction between two tomography problems is parsimonious. In a similar sense, the reductions to computing the Kronecker or plethysm coefficients preserve the number of solutions, too, by guaranteeing that the respective coefficients precisely reflect the number of solutions to the corresponding tomography problem.

For this reason, we can start with an arbitrary \TwoDXRay{} instance $(\mu', \nu', \rho')$ of, say, $N$ solutions. By applying the reductions step-by-step, we obtain coefficients $k(\mu, \nu, \rho)$ and $b_\lambda(n, 3)$ that \emph{both} count the number of solutions to the original instance, hence $k(\mu, \nu, \rho) = N = b_\lambda(n, 3)$. By carefully inspecting the reductions, it is easy to verify the following theorem:

\begin{theorem}\label[theorem]{thm:equality-plethysm-kronecker}
Let $\mu', \nu', \rho' \in \Nat^{[0, r]}$ be compositions satisfying $\sum_{i=0}^r i \mult (\mu_i' + \nu_i' + \rho_i') = r|\mu'| = r|\nu'| = r|\rho'|$ (i.e., the corresponding \TwoDXRay{} instance is not trivially unsatisfiable). Let
\begin{align*}
	\mu := (\mu' + X(Q_{r-1}))^\transpose, &&
	\nu := (\nu' + Y(Q_{r-1}))^\transpose, &&
	\rho := (\rho' + Z(Q_{r-1}))^\transpose,
\end{align*}
where $Q_r := \{ (x, y, z) : x + y + z \leq r \} \subseteq \Nat^3$. Moreover, let
\begin{align*}
	\lambda := \Big((\rho'_0, \ldots, \rho'_{r'}, \underbrace{0, \ldots, 0}_{2r' - 1}, \nu'_0, \ldots, \nu'_{r'}, \underbrace{0, \ldots, 0}_{5r' - 1}, \mu'_0, \ldots, \mu'_{r'}, \underbrace{0, \ldots, 0}_{3r'}) + S(P_{13r-1})\Big)^\transpose,
\intertext{and}
	\pi := (\rho'_0, \ldots, \rho'_{r'}, \underbrace{0, \ldots, 0}_{2r' - 1}, \nu'_0, \ldots, \nu'_{r'}, \underbrace{0, \ldots, 0}_{5r' - 1}, \mu'_0, \ldots, \mu'_{r'}, \underbrace{0, \ldots, 0}_{3r'}) + S(\overline P_{13r-1}).
\end{align*}
Then $k(\mu, \nu, \rho) = a_\lambda(|\lambda| / 3, 3) = b_\pi(|\pi| / 3, 3)$.
\end{theorem}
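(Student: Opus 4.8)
The plan is to read the three claimed equalities straight off the diagram in \cref{fig:reductions}, using the single structural fact that a composition of parsimonious polynomial-time reductions is again parsimonious, and hence preserves the number of solutions. Fix a \TwoDXRay{} instance $(\mu',\nu',\rho')$ satisfying the stated non-triviality condition, and let $N$ be its number of solutions, that is, the number of point sets $P\subseteq G_r$ with $(X,Y,Z)$-marginals $(\mu',\nu',\rho')$. I would then walk along each of the three paths leaving \Sharp\TwoDXRay{} in the diagram and check two things for it: that the coefficient at its far end equals $N$, and that the parameters fed to that coefficient are exactly the $\mu,\nu,\rho$ (respectively $\pi$, respectively $\lambda$) of the statement. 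Stitching the three conclusions together then yields $k(\mu,\nu,\rho)=N=a_\lambda(|\lambda|/3,3)=b_\pi(|\pi|/3,3)$.

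On the Kronecker path, the reduction of~\cite{BrunettiDG2001} sends $(\mu',\nu',\rho')$ to the three-dimensional X-ray instance obtained by completing a solution on the top layer $G_r$ with the full simplex $Q_{r-1}$ directly beneath it; the resulting marginals are $\mu'+X(Q_{r-1})$, $\nu'+Y(Q_{r-1})$, $\rho'+Z(Q_{r-1})$, and the map is bijective because every point set in $\Nat^3$ with those marginals is forced to contain all of $Q_{r-1}$ and to place its remaining points into $G_r$ (the simplex analogue of \cref{lem:coordinate-sum-pyramid}, with $Q$ in place of the cones $P$, $\overline P$). The reduction of~\cite{IkenmeyerMW2017} then parsimoniously identifies the Kronecker coefficient $k(\sigma^\transpose,\tau^\transpose,\upsilon^\transpose)$ with the number of point sets in $\Nat^3$ having $(X,Y,Z)$-marginals $\sigma,\tau,\upsilon$, in exactly the regime produced above; taking $\sigma=\mu'+X(Q_{r-1})$ and so on yields $k(\mu,\nu,\rho)=N$ for precisely the $\mu,\nu,\rho$ of the statement.

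On the two plethysm paths I would trace the three-step route from \Sharp\TwoDXRay{} through \Sharp\Symmetric\TwoDXRay{} (resp.\ \Sharp\SkewSymmetric\TwoDXRay{}) by \cref{lem:reduction-2dxray-sym2dxray}, then through \Sharp\Symmetric\PromiseThreeDXRay{} (resp.\ \Sharp\SkewSymmetric\PromiseThreeDXRay{}) by \cref{lem:reduction-2dxray-3dxray} via \cref{lem:pyramid-embedding}, and finally into \DualPlethysm\Inner{$3$} (resp.\ \Plethysm\Inner{$3$}) by \cref{lem:reduction-3dxray-plethysm}. The first step outputs the long composition $\widehat\lambda$ written down in \cref{lem:reduction-2dxray-sym2dxray} (with the blown-up support bound $13r$); the second step prepends the sum-marginal of the completely filled part of the cone strictly below layer $G_{13r}$, turning $\widehat\lambda$ into $S(\overline P_{13r-1})+\widehat\lambda$ in the closed-cone case and into $S(P_{13r-1})+\widehat\lambda$ in the open-cone case --- exactly the $\pi$ and the $\lambda^\transpose$ of the statement --- while \cref{lem:pyramid-embedding} simultaneously certifies the promise $B(\pi)=\overline\beta(|\pi|/3)$ (resp.\ $B(\lambda^\transpose)=\beta(|\lambda|/3)$); the third step then uses \cref{prop:min-coordinate-sum} to equate the resulting tomography count (point sets in the cone with the prescribed sum-marginal, equivalently pyramids) with $b_\pi(|\pi|/3,3)$ in the symmetric branch and with $a_\lambda(|\lambda|/3,3)$ in the skew-symmetric branch, the latter branch additionally transposing its output --- which is why $\lambda$, but not $\pi$, appears transposed in the statement. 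Since every step along both paths is parsimonious, each of these two coefficients equals $N$.

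I expect the difficulty to be entirely in the bookkeeping rather than conceptual. The delicate points are: keeping the transposes aligned, since the reduction of~\cite{IkenmeyerMW2017} and the skew-symmetric branch of \cref{lem:reduction-3dxray-plethysm} each insert one while the symmetric branch does not, and all of this must match the statement; keeping the various constants straight, namely the coordinate shifts in the map $\gamma$ and the factor-$13$ blow-up from the proof of \cref{lem:reduction-2dxray-sym2dxray}, together with the use of the full simplex $Q_{r-1}$ along the Kronecker path (which does not symmetrize) versus the cone-restricted pyramids $P_{13r-1}$ and $\overline P_{13r-1}$ along the symmetrized plethysm paths; and checking that the stated non-triviality hypothesis is precisely what keeps each reduction in its non-trivial branch, so that the clean parameter formulas apply and no trivial no-instance is emitted anywhere. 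It is also worth recording that the transpose of a composition is automatically a partition, so that $\mu,\nu,\rho,\pi$ and $\lambda$ are legitimate inputs to the coefficient problems, that a \PromiseThreeDXRay{} instance is a fortiori a legal \ThreeDXRay{} instance, and that when $N=0$ the chain still delivers all three equalities --- with the conventions (an empty weight space, or \cref{lem:pyramid-partition}) making every reduction behave correctly.
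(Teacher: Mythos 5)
Your proposal is correct and matches the paper's intent exactly: the paper itself does not write out a detailed proof of this theorem, instead remarking that it follows ``by carefully inspecting the reductions'' in \cref{fig:reductions} once one observes that all the reductions are parsimonious and start from the same \Sharp\TwoDXRay{} instance. Your walkthrough of the three paths, the bookkeeping of transposes (one from the Kronecker-side reduction of~\cite{IkenmeyerMW2017}, one from the skew-symmetric/plethysm branch), the identification of the added ``filled'' parts ($Q_{r-1}$ versus $P_{13r-1}$ versus $\overline P_{13r-1}$), and your handling of the $N=0$ case are exactly the details the paper is implicitly asking the reader to verify.
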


To illustrate the application of \cref{thm:equality-plethysm-kronecker}, we give some brief examples starting with \TwoDXRay{} instances of size $r = 1$ each. Unfortunately, all such instances are either uniquely satisfiable or unsatisfiable and thus the resulting plethysm and Kronecker coefficients take values $0$ and $1$ only.

\begin{example}
Consider the following \TwoDXRay{} instance $\mu' = (1, 1)$, $\nu' = (1, 1)$ and $\rho' = (2, 0)$ for $r = 1$, which is uniquely satisfiable by $\text{\raisebox{.58ex}[0pt][0pt]{\tikz[triangle diagram, r=1, triangular=.27cm, inline]{\drawtrigrid\drawpoints{1/0/0, 0/1/0}}}} = \{ (1, 0, 0), (0, 1, 0) \}$.

Following \cref{thm:equality-plethysm-kronecker}, let $\mu = ((1, 1) + (3, 1))^\transpose = (4, 2)^\transpose = (2, 2, 1, 1)$, $\nu = (2, 2, 1, 1)$ and $\rho = (2, 1, 1, 1, 1)$. Moreover, since $S(P_{12}) = (30, 26, 22, 19, 16, 13, 11, 9, 6, 4, 2, 1)$, we assign $\lambda = ((2, 0, 0, 1, 1, 0, 0, 0, 0, 1, 1) + S(P_{12}))^\transpose = (12, 11, 11, 10, 10, 9, 8, 8, 8, 7, 7, 6, 6, 5, 5, 5, 5, 4, 4, 4, 3, 3, 2, 2,\allowbreak 2, 2, 1, 1, 1, 1, 1, 1)$. Then $k(\mu, \nu, \rho) = a_\lambda(55, 3) = 1$.
\end{example}

\begin{example}
Consider the \TwoDXRay{} instance $\mu' = (2, 1)$, $\nu' = (2, 1)$ and $\rho' = (2, 1)$ for $r = 1$, which is again uniquely satisfiable by $\text{\raisebox{.58ex}[0pt][0pt]{\tikz[triangle diagram, r=1, triangular=.27cm, inline]{\drawtrigrid\drawpoints{1/0/0, 0/1/0, 0/0/1}}}} = \{ (1, 0, 0), (0, 1, 0), (0, 1, 0) \}$.

By choosing $\mu = (2, 2, 1, 1, 1)$, $\nu = (2, 2, 1, 1, 1)$, $\rho = (2, 2, 1, 1, 1)$ and $\pi = (2, 1, 0, 2, 1, 0, 0, 0,\allowbreak 0, 2, 1) + (61, 54, 46, 38, 31, 23, 17, 12, 9, 6, 4, 2) = (63, 55, 46, 40, 32, 23, 17, 12, 9, 8, 5, 2)$ as in \cref{thm:equality-plethysm-kronecker}, we find that $k(\mu, \nu, \rho) = b_\pi(104, 3) = 1$.
\end{example}

\begin{example}
Consider the \TwoDXRay{} instance $\mu' = (2, 0)$, $\nu' = (2, 0)$ and $\rho' = (0, 2)$ for $r = 1$; it is easy to check that the instance is unsatisfiable.

We construct $\mu = (2, 1, 1, 1, 1)$, $\nu = (2, 1, 1, 1, 1)$, $\rho = (2, 2, 2)$ and $\pi = (0, 2, 0, 2, 0, 0, 0, 0, 0, 2) + (61, 54, 46, 38, 31, 23, 17, 12, 9, 6, 4, 2) = (61, 56, 46, 40, 31, 23, 17, 12, 9, 8, 4, 2)$ as in \cref{thm:equality-plethysm-kronecker}. It follows that $k(\mu, \nu, \rho) = b_\pi(103, 3) = 0$.
\end{example}


\section{Positive Formulas for Restricted Plethysm Coefficients} \label{sec:positive-formulas}
In the course of the main proof, we discovered that a restricted class of plethysm coefficients can be explained as the number of certain combinatorial objects. As motivated before, results of that sort are of independent interest.

Formally speaking, we would like to show that the problem of computing general plethysm coefficients $p_\lambda(\mu, \nu)$ is contained in the complexity class \SharpP{}. Completely resolving that hypothesis seems out of reach with today's techniques. Nevertheless, we make partial progress by proving that for certain nontrivial sets of inputs $(\mu, \nu, \lambda)$, computing $p_\lambda(\mu, \nu)$ actually is in \SharpP{}. In particular, let
\begin{align*}
	\Phi^{\Sym} &:=
	\left\{
		\left(
			\,\tikz[ydiagram, extents=15]{\ycolumnrange{0}{0}{5}{$n$}}\,,
			\,\tikz[ydiagram, extents=31]{\yrow{0}{0}{3}}\,,
			\lambda
		\right) :
		\text{\parbox{6.8cm}{
			\centering$\lambda = \widecheck\lambda + \widehat\lambda \in \Nat^{[0, r]}$ is a partition of $3n$,\\
			where $\widecheck\lambda$ is the sum-marginal of $\overline P_{r-1}$,\\[-.1ex]
			and $\widehat\lambda$ satisfies $\textstyle\sum_{i=0}^r i \mult \widehat\lambda_i = r |\widehat\lambda| / 3$
		}}
	\right\}\!,
\intertext{and}
	\Phi^{\Wedge} &:=
	\left\{
		\left(
			\,\tikz[ydiagram, extents=15]{\ycolumnrange{0}{0}{5}{$n$}}\,,
			\,\tikz[ydiagram, extents=13]{\ycolumn{0}{0}{3}}\,,
			\lambda
		\right) :
		\text{\parbox{6.8cm}{
			\centering$\lambda = \widecheck\lambda + \widehat\lambda \in \Nat^{[0, r]}$ is a partition of $3n$,\\
			where $\widecheck\lambda$ is the sum-marginal of $P_{r-1}$,\\[-.1ex]
			and $\widehat\lambda$ satisfies $\textstyle\sum_{i=0}^r i \mult \widehat\lambda_i = r |\widehat\lambda| / 3$
		}}
	\right\}\!.
\end{align*}
The following theorem follows from~\cref{prop:min-coordinate-sum}.

\begin{theorem} \label[theorem]{thm:sharp-p-column}
The problem of computing plethysm coefficients $p_\lambda(\mu, \nu)$ is in~\SharpP{}, when restricted to instances $(\mu, \nu, \lambda)$ in $\Phi^{\Sym}$ or $\Phi^{\Wedge}$.
\end{theorem}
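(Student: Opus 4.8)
The plan is to identify $p_\lambda(\mu,\nu)$, for every instance in $\Phi^{\Sym}$ or $\Phi^{\Wedge}$, with one of the combinatorial counts of \cref{def:plethysm-bounds}, and then to observe that such a count lies in \SharpP. First I would unwind the Schur functors. For $(\mu,\nu,\lambda)\in\Phi^{\Sym}$ we have $\mu=(1^n)$ with $n=|\lambda|/3$ and $\nu=(3)$, so $\Schur^\mu\Schur^\nu V=\Wedge^n\Sym^3 V$, and hence $p_\lambda(\mu,\nu)=b_\lambda(n,3)$ by \eqref{eq:decomp-sym-sym}. For $(\mu,\nu,\lambda)\in\Phi^{\Wedge}$ we have $\mu=(1^n)$ and $\nu=(1^3)$, so $\Schur^\mu\Schur^\nu V=\Wedge^n\Wedge^3 V$; since $3$ is odd, \cref{fact:duality}, specifically~\eqref{eq:decomp-wedge-wedge-odd}, gives $p_\lambda(\mu,\nu)=a_{\lambda^\transpose}(n,3)$.

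Next I would invoke the promise built into the definitions of $\Phi^{\Sym}$ and $\Phi^{\Wedge}$. Writing $\lambda=\widecheck\lambda+\widehat\lambda$, the condition $\sum_i i\,\widehat\lambda_i=r|\widehat\lambda|/3$ is precisely the hypothesis of \cref{lem:pyramid-embedding}, so the computation carried out there yields $B(\lambda)=\overline\beta(|\lambda|/3)$ in the $\Phi^{\Sym}$ case and $B(\lambda)=\beta(|\lambda|/3)$ in the $\Phi^{\Wedge}$ case, using $|\lambda|/3=|\overline P_{r-1}|+|\widehat\lambda|/3$, respectively $|\lambda|/3=|P_{r-1}|+|\widehat\lambda|/3$. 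Since $\lambda$ is, by definition, a partition of $3n$, \cref{prop:min-coordinate-sum} now applies. In the $\Phi^{\Sym}$ case it gives $b_\lambda(n,3)=\upperbound b_\lambda(n,3)$, which by \cref{def:plethysm-bounds} equals the number of point sets in the closed cone with sum-marginal $\lambda$. In the $\Phi^{\Wedge}$ case I would apply \cref{prop:min-coordinate-sum} with $\lambda^\transpose$ in the role of its partition: its hypothesis then reads $B(\lambda)=\beta(|\lambda|/3)$, which was just established, and its conclusion reads $a_{\lambda^\transpose}(n,3)=\upperbound a_{\lambda^\transpose}(n,3)$, which by \cref{def:plethysm-bounds} equals the number of point sets in the open cone with sum-marginal $\lambda$.

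It remains to check membership in \SharpP. Because $\lambda\in\Nat^{[0,r]}$, every point set with sum-marginal $\lambda$ is contained in $\overline C_{<r+1}$ (respectively $C_{<r+1}$), which has only $O(r^3)$ points, and $r$ is at most the number of parts of the partition $\lambda$ and hence at most the input length. A nondeterministic polynomial-time machine can therefore guess the indicator vector of a subset $P\subseteq\overline C_{<r+1}$ (respectively $P\subseteq C_{<r+1}$) and accept exactly when $|P|=|\lambda|/3$ and $S(P)=\lambda$; both conditions are checkable in polynomial time, and the number of accepting computations equals the count from the previous paragraph, which by the first two steps equals $p_\lambda(\mu,\nu)$.

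The only genuine work is bookkeeping: one must verify that the conditions defining $\Phi^{\Sym}$ and $\Phi^{\Wedge}$ match the hypotheses of \cref{lem:pyramid-embedding} and \cref{prop:min-coordinate-sum} verbatim---in particular the identity $|\lambda|/3=|\overline P_{r-1}|+|\widehat\lambda|/3$, respectively with $P_{r-1}$---and that the transpose introduced by \cref{fact:duality} for $\Wedge^n\Wedge^3 V$ cancels correctly against the transpose present in \cref{def:plethysm-bounds} for $\upperbound a$. I do not expect any deeper obstacle: the substance of the theorem is the discrete-tomographic description of these plethysm coefficients already obtained in \cref{sec:plethysm-bounds}.
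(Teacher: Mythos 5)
Your proof is correct and follows exactly the route the paper intends—the paper's entire proof of this theorem is the one sentence ``The following theorem follows from~\cref{prop:min-coordinate-sum}''—and you have spelled out the needed bookkeeping (unwinding $\Schur^{(1^n)}$ and $\Schur^{(3)}$ vs.\ $\Schur^{(1^3)}$, matching the $\Phi^{\Sym}/\Phi^{\Wedge}$ promise to the hypothesis of \cref{prop:min-coordinate-sum} via \cref{lem:pyramid-embedding}, tracking the transpose from \cref{fact:duality} against the one in \cref{def:plethysm-bounds}, and bounding $r$ by the input length). Nothing is missing.
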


Using the same idea, we identify a more general class of instances $(\mu, \nu, \lambda)$ for which the same result applies. Namely, for any choice of $\mu$ (and \raisebox{0pt}[0pt][0pt]{$\nu = \,\tikz[ydiagram, extents=31]{\yrow{0}{0}{3}}$} or \raisebox{0pt}[0pt][0pt]{$\nu = \,\tikz[ydiagram, extents=31]{\yrow{0}{0}{3}}\,^\transpose$}), we can construct partitions $\lambda$ in the same spirit as above. Let $\overline p(r) = |\overline P_r|$ denote the number of points in the complete pyramid $\overline P_r$, or equivalently, the total number of partitions of all integers~$\leq r$ into at most three parts. Any Young diagram $\mu = (n_0, n_1, \ldots, n_\ell)^\transpose$ can be written as
\begin{align*}
	\mu =
	\,\begin{tikzpicture}[ydiagram, boxsize={1.2em}{1.2em}, extents={6.5}7]
		\ycolumnrange{0}{0}{7}{$n_0$}{}
		\ycolumnrange{1}{0}{6}{$n_1$}{}
		\ycolumnrange{2}{0}{6}{$n_2$}{}
		\yregion{3}{0}{1.5}{5}
		\ynone{3.25}{.5}{$\cdots\hspace{0em}$}
		\ynone{3.25}{3.5}{\rotatebox{45}{$\cdots\hspace{0em}$}}
		\ycolumnrange{4.5}{0}{4}{$n_\ell$}{}
	\end{tikzpicture}\,
	=
	\,\begin{tikzpicture}[ydiagram, boxsize={1.2em}{1.2em}, extents={6.5}7]
		\ycolumnrange[black!12]{0}{0}{4}{$\widecheck n_0$}{}
		\ycolumnrange[black!12]{1}{0}{3}{$\widecheck n_1$}{}
		\ycolumnrange[black!12]{2}{0}{3}{$\widecheck n_2$}{}
		\yregion[black!12]{3}{0}{1.5}{3}
		\ynone[black!12]{3.25}{.5}{$\cdots\hspace{0em}$}
		\ycolumnrange[black!12]{4.5}{0}{2}{$\widecheck n_\ell$}{}
		\ycolumnrange{0}{4}{3}{$\widehat n_0$}{}
		\ycolumnrange{1}{3}{3}{$\widehat n_1$}{}
		\ycolumnrange{2}{3}{3}{$\widehat n_2$}{}
		\yregion{3}{3}{1.5}{2}
		\ynone{3.25}{3.5}{\rotatebox{45}{$\cdots\hspace{0em}$}}
		\ycolumnrange{4.5}{2}{2}{$\widehat n_\ell$}{}
	\end{tikzpicture}\,
	=
	\,\begin{tikzpicture}[ydiagram, boxsize={1.2em}{1.2em}, extents={6.5}7]
		\draw[box, colored=black!12] (0, 0)
			-- ++(5.5 * \boxx, 0)
			|- ++(-\boxx, 2 * \boxy)
			|- ++(-3.5 * \boxx, \boxy)
			|- ++(-\boxx, \boxy)
			-- cycle;
		\draw[box, colored=white] (0, 4 * \boxy)
			|- ++(\boxx, 3 * \boxy)
			|- ++(2 * \boxx, -\boxy)
			|- ++(1.5 * \boxx, -\boxy)
			|- ++(\boxx, -\boxy)
			|- ++(-\boxx, -2 * \boxy)
			|- ++(-3.5 * \boxx, \boxy)
			|- cycle;
		\node[label=black!12, font=\normalfont] at (.5 * 5.5 * \boxx, 1.5 * \boxy) {$\widecheck\mu$};
		\node[label=white, font=\normalfont] at (.5 * 5.5 * \boxx, 4 * \boxy) {$\widehat\mu$};
	\end{tikzpicture}\,,
\end{align*}
where, for all $0 \leq j \leq \ell$, we pick $r_j$ minimal so that \raisebox{0pt}[0pt][0pt]{$n_j < \overline p(r_j)$} and assign \raisebox{0pt}[0pt][0pt]{$\widecheck n_j := \overline p(r_j - 1)$} and $\widehat n_j := n_j - \widecheck n_j$. Furthermore, let $n := \sum_j n_j = \sum_j \widecheck n_j + \widehat n_j$. As depicted, we refer to the upper rows (that is, rows of small indices) by $\widecheck\mu$ and denote the remaining skew shape by $\widehat\mu$. After decomposing~$\mu$ in that manner, let us define
\begin{align*}
	\Psi^{\Sym} &:=
	\left\{
		\vphantom{\tikz[ydiagram, extents=15]{\ycolumnrange{0}{0}{5}{$n$}}}
		\left(
			\mu,
			\,\tikz[ydiagram, extents=31]{\yrow{0}{0}{3}}\,,
			\lambda
		\right) :
		\text{\parbox{7.8cm}{
			\everymath{\textstyle}
			\centering$\lambda = \sum_j \widecheck\lambda^j + \widehat\lambda^j \in \Nat^{[0, r_0]}$ is a partition of $3n$,\\
			where $\widecheck\lambda^j$ is the sum-marginal of $\overline P_{r_j-1}$,\\[-.3ex]
			and $\widehat\lambda^j$ satisfies $\textstyle\sum_{i=0}^{r_j} i \mult \widehat\lambda^j_i = \widehat n_j \mult r_j$
		}}
	\right\}\!.
\intertext{In a very similar way, we can define $\Psi^{\Wedge}$: This time, let $p(r)$ be the total number of partitions of all integers~$\leq r$ into at most three \emph{distinct} parts (that is, the number of points in the complete pyramid $P_r$). Given $\mu$, determine $r_0, \ldots, r_\ell$ as before with $p(\cdot)$ in place of $\overline p(\cdot)$. Then:}
	\Psi^{\Wedge} &:=
	\left\{
		\vphantom{\tikz[ydiagram, extents=15]{\ycolumnrange{0}{0}{5}{$n$}}}
		\left(
			\mu,
			\,\tikz[ydiagram, extents=13]{\ycolumn{0}{0}{3}}\,,
			\lambda
		\right) :
		\text{\parbox{7.8cm}{
			\everymath{\textstyle}
			\centering$\lambda = \sum_j \widecheck\lambda^j + \widehat\lambda^j \in \Nat^{[0, r_0]}$ is a partition of $3n$,\\
			where $\widecheck\lambda^j$ is the sum-marginal of $P_{r_j-1}$,\\[-.3ex]
			and $\widehat\lambda^j$ satisfies $\textstyle\sum_{i=0}^{r_j} i \mult \widehat\lambda^j_i = \widehat n_j \mult r_j$
		}}
	\right\}\!.
\end{align*}

\begin{theorem} \label[theorem]{thm:sharp-p-arbitrary}
The problem of computing plethysm coefficients $p_\lambda(\mu, \nu)$ is in~\SharpP{}, when restricted to instances $(\mu, \nu, \lambda)$ in $\Psi^{\Sym}$ or $\Psi^{\Wedge}$.
\end{theorem}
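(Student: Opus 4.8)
\textbf{Proof proposal for \cref{thm:sharp-p-arbitrary}.} The plan is to reduce the computation of $p_\lambda(\mu,\nu)$ for instances in $\Psi^{\Sym}$ (resp.\ $\Psi^{\Wedge}$) to a counting problem visibly in $\SharpP$, by layering the single-column argument of \cref{thm:sharp-p-column} across the rows of $\mu$. The key tool is \cref{fact:reduction-wedge}, iterated: since $\mu^\transpose = (n_0,n_1,\ldots,n_\ell)$, the diagram $\nu = \,\tikz[ydiagram, extents=13]{\ycolumn{0}{0}{3}}\,$ applied via $\Schur^\mu$ can be peeled off one column of $\mu$ at a time, and \cref{fact:reduction-wedge} tells us that prepending a full row to the target partition $\lambda$ corresponds exactly to passing from $\Wedge^{m}$ to $\Wedge^{m+1}$. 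The decomposition $\mu = \widecheck\mu + \widehat\mu$ in the statement is engineered so that each column $n_j$ of $\mu$ splits as $\widecheck n_j = \overline p(r_j-1)$ (a ``complete pyramid'' part) plus a small remainder $\widehat n_j$; the $\widecheck n_j$ part is precisely what gets absorbed into the shift $\sum_j \widecheck\lambda^j$ of $\lambda$ via \cref{lem:pyramid-embedding}-style reasoning, leaving the $\widehat n_j$ parts to be counted genuinely.

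Concretely, first I would reduce to the single-row (resp.\ single-column) inner diagram exactly as in \cref{lem:reduction-m=3}, so that we are computing the multiplicity of $\Schur^{\lambda^\transpose}V$ (up to transpose) inside $\Schur^\mu\Wedge^3 V$ or the analogous symmetric statement. Next, the heart of the argument: I claim that the weight-$\lambda$ subspace of the relevant space has a basis indexed by $n$-element point sets $P\subseteq\overline C$ (resp.\ $P\subseteq C$) with sum-marginal $\lambda$ that decompose as a disjoint union $P = \bigsqcup_j P_j$ where $P_j$ has sum-marginal $\widecheck\lambda^j + \widehat\lambda^j$; this is the multi-row analogue of the $\{v_P\}_P$ basis from the proof of \cref{thm:plethysm-bounds}. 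The condition $\sum_i i\cdot\widehat\lambda^j_i = \widehat n_j r_j$ together with the choice $\widecheck\lambda^j = S(\overline P_{r_j-1})$ forces, by \cref{lem:coordinate-sum-pyramid} and \cref{prop:min-coordinate-sum}, each block $P_j$ to satisfy $\overline P_{r_j-1}\subseteq P_j\subseteq \overline P_{r_j}$, hence each $P_j$ is a pyramid; then the highest-weight-vector argument from the proof of \cref{thm:plethysm-bounds} (the raising-operator cancellation via the pyramid property) applies blockwise, so $v_P$ is a highest weight vector. This gives $p_\lambda(\mu,\nu) = \#\{\text{such }P\}$, a quantity computable by a nondeterministic poly-time machine that guesses $P$, checks the marginal and the block structure, and accepts --- hence in $\SharpP$.

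The step I expect to be the main obstacle is establishing that \emph{every} highest weight vector of weight $\lambda$ is in the span of the $v_P$'s \emph{and} that distinct valid $P$ give linearly independent vectors that are all genuine highest weight vectors, i.e.\ that the lower bound $\lowerbound$ and upper bound $\upperbound$ really do coincide in this multi-row regime. The upper bound (dimension of the weight space) is routine combinatorics. The delicate part is showing the coordinate-sum constraint $B(\lambda) = \sum_j B(\widecheck\lambda^j + \widehat\lambda^j)$ with the prescribed values of $\widehat n_j r_j$ is rigid enough to kill all non-pyramid point sets \emph{simultaneously across all blocks}: one must rule out ``transfers'' of coordinate mass between blocks $P_j$ and $P_{j'}$ that would keep the total marginal $\lambda$ fixed while violating the per-block pyramid structure. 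I would handle this by a minimality/extremality argument mirroring \cref{lem:min-coordinate-sum}: the total coordinate sum $B(\lambda)$ is the minimum possible for any point set of size $n$ whose marginal refines into the prescribed block sizes, and this minimum is attained only when each block is independently a complete-pyramid-plus-layer configuration. Once that rigidity is in hand, \cref{lem:pyramid-partition} guarantees every such $\lambda$ is automatically a partition (so non-partition promise instances are trivially rejected), and the $\SharpP$ membership follows by the same bookkeeping as in \cref{thm:sharp-p-column}.
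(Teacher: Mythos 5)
There is a genuine gap in your proposal, and it lies in the claim at its center: that the weight-$\lambda$ subspace of $\Schur^\mu\Sym^3 V$ ``has a basis indexed by $n$-element point sets $P\subseteq\overline C$ that decompose as a disjoint union $P=\bigsqcup_j P_j$.'' That is not the correct combinatorial object. The weight-$\lambda$ subspace of a Weyl module $\Schur^\mu W$ has a basis indexed by \emph{semistandard Young tableaux} of shape $\mu$ with entries in an ordered basis of $W$, i.e.\ here tableaux whose boxes carry monomials $X_xX_yX_z$ (equivalently, points of $\overline C_{<k}$), with columns strictly $\prec$-increasing and rows weakly $\prec$-increasing. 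Two features of your object are incompatible with this. First, the columns of such a tableau are never disjoint in this regime: the forced part $\widecheck T$ fills each column $j$ with the complete pyramid $\overline P_{r_j-1}$, and these pyramids are nested, not disjoint, so ``$P = \bigsqcup_j P_j$'' is already false and $|P|\neq n$. Second, and more damaging, a tuple of column point sets (each a pyramid with the prescribed sum-marginal) need not satisfy the \emph{row} condition of a semistandard tableau. Concretely, when $r_j = r_{j+1}$ the $\widehat T$-parts of columns $j$ and $j+1$ occupy overlapping rows, and choosing, say, $\{p_2,p_3\}$ in column $j$ and $\{p_1,p_2\}$ in column $j+1$ (for $p_1\prec p_2\prec p_3$ of coordinate sum $r_j$) violates weak row-increase. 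So your proposed count strictly over-counts; it is an upper bound on the dimension of the weight space, not the dimension. The row constraint is precisely the additional information that makes semistandard tableaux a basis, and it cannot be dropped.

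A secondary issue is your reliance on \cref{fact:reduction-wedge} and \cref{lem:reduction-m=3} ``iterated over columns of $\mu$''. Those results act on the \emph{inner} parameter $m$ (prepending a row to $\lambda$ when passing $\Wedge^m\mapsto\Wedge^{m+1}$); they do not peel off columns of $\mu$, and the paper does not use them in this proof at all. Your extremality instinct (a multi-column version of \cref{lem:min-coordinate-sum}/\cref{lem:coordinate-sum-pyramid}) is the right idea and matches the paper's \cref{lem:tableau-layers}: the coordinate-sum budget forces each column to be sandwiched between $\overline P_{r_j-1}$ and $\overline P_{r_j}$, so the $\widecheck T$ part is unique and $\widehat T$ lives in the single layer $G_{r_j}$. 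But the correct conclusion from that rigidity is that \emph{every semistandard tableau} of shape $\mu$ and weight $\lambda$ has this column structure and hence (by embedding $\Schur^\mu\Sym^3 V \hookrightarrow \bigotimes_j \Wedge^{n_j}\Sym^3 V$ and applying the raising-operator cancellation column-by-column, as in \cref{lem:tableau-hwv}) gives a highest weight vector $v_T$. Since the $v_T$ already form a basis of the weight space, $p_\lambda(\mu,\nu)$ equals the number of such tableaux, which is a $\SharpP$ quantity. To repair your argument, replace ``tuples of pyramid point sets'' by ``semistandard tableaux of shape $\mu$ and weight $\lambda$'' throughout, drop the appeal to \cref{fact:reduction-wedge}, and reuse the column-wise pyramid/raising-operator argument blockwise as you intended.
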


Clearly, $\Phi^{\Sym} \subset \Psi^{\Sym}$ and $\Phi^{\Wedge} \subset \Psi^{\Wedge}$, so \cref{thm:sharp-p-arbitrary} really generalizes \cref{thm:sharp-p-column}. For the following proof of \cref{thm:sharp-p-arbitrary}, we will omit the treatment of $\Psi^{\Wedge}$ and focus on $\Psi^{\Sym}$ only; both argumentations are analogous and the only exceptions have been pointed out in the preceding sections.

We catch up on some representation theoretic background. A \emph{Young tableau~$T$} is a Young diagram~$\lambda$ together with a filling of all boxes in $\lambda$ with objects taken from some alphabet. If a total ordering on the alphabet is understood, then we say that~$T$ is \emph{semistandard} whenever the entries of~$T$ are strictly increasing down each column and weakly increasing along each row. The Weyl module~$\Schur^\lambda V$ can explicitly be constructed in terms of Young tableaux: By arbitrarily indexing the boxes of $\lambda$, we fix a basis $\{v_T\}_T$ for \raisebox{0pt}[0pt][0pt]{$\Tensor^{|\lambda|} V$} where $T$ ranges over all Young tableaux of shape $\lambda$ filled with basis elements of $V$. Now $\Schur^\lambda V$ is the largest subspace of \raisebox{0pt}[0pt][0pt]{$\Tensor^{|\lambda|} V$} satisfying the following two exchange conditions (known as the Grassmann-Plücker relations):
\begin{enumerate}[label=\arabic*.]
\item $v_T = -v_{T'}$ where $T'$ is obtained from $T$ by exchanging two vertically adjacent boxes.
\item $v_T = \sum_{T'} v_{T'}$ where, for some $i$ and $\ell$, $T'$ ranges over all tableaux obtained from $T$ by exchanging the top-most $\ell$ boxes in column $i$ with any $\ell$ boxes in column $i - 1$ preserving the vertical order.
\end{enumerate}
It is known that the set of all semistandard tableaux of shape $\lambda$ filled with basis elements of $V$ forms a basis of $\Schur^\lambda V$.

For the remainder of this section, let \raisebox{0pt}[0pt][0pt]{$(\mu, \nu, \lambda) \in \Psi^{\Sym}$} and let $X_0, X_1, \ldots, X_{k-1}$ be an ordered basis of $V = \Complex^k$. Then the set of all monomials $X_x X_y X_z$ for $x, y, z \in [0, k-1]$ forms a basis of $\Sym^3 V$; we sometimes identify monomials $X_x X_y X_z$ with their representatives $(x, y, z) \in \overline C_{< k} := [0, k-1]^3 \intersect \overline C$. Next, let~$\prec$ be the partial ordering on $\overline C_{< k}$ defined by $(x, y, z) \prec (x', y', z')$ whenever $x + y + z < x' + y' + z'$ and arbitrarily extend $\prec$ to a total order. From the previous paragraph we derive that $\{v_T\}_T$ forms a basis of $\Schur^\mu \Schur^\nu V = \Schur^\mu \Sym^3 V$, where $T$ ranges over all semistandard tableaux of shape $\mu$ and alphabet $\overline C_{< k}$. By construction, the weight of a vector $v_T$ equals the sum of all entry-wise weights in $T$.

For a Young tableau $T$ of shape $\mu$, we write $\widecheck T$ to denote the subtableau of shape $\widecheck \mu$ and we write \raisebox{0pt}[0pt][0pt]{$\widehat T$} to denote the skew tableau corresponding to~$\widehat \mu$.

We proceed to rework some parts of \cref{sec:plethysm-bounds,sec:reduction}, where we now consider semistandard tableaux in place of pyramids---the exact same proof is recovered when restricting $\mu$ (and thereby the shape of the tableaux $T$) to a single column.

\begin{lemma} \label[lemma]{lem:tableau-layers}
Let $T$ be a semistandard Young tableau of shape $\mu$ and weight $\lambda$. Then:
\begin{enumerate}[label=\arabic*.]
\item $\widecheck T$ is of weight $\widecheck\lambda := \sum_j \widecheck\lambda^j$. In fact, there exists only one such tableau $\widecheck T$, which is obtained by filling all boxes in the $i$-th row of $\widecheck\mu$ with the $i$-th smallest monomial according to $\prec$.
\item $\widehat T$ is of weight $\widehat\lambda := \sum_j \widehat\lambda^j$. Moreover, the $j$-th column of $\widehat T$ exclusively contains monomials $X_x X_y X_z$ where $x + y + z = r_j$.
\end{enumerate}
\end{lemma}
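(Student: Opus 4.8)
The plan is to prove both items at once, by a global coordinate-sum counting argument that pins down the \emph{set} of monomials occurring in each column of $T$, and then a local argument using the total order $\prec$ to fix their positions. First I would evaluate the coordinate sum $B(\lambda)$ from the defining constraints of $\Psi^{\Sym}$. Since $B$ is linear, $B(\lambda)=\sum_j\big(B(\widecheck\lambda^j)+B(\widehat\lambda^j)\big)$. Here $\widecheck\lambda^j=S(\overline P_{r_j-1})$, so $B(\widecheck\lambda^j)=B(\overline P_{r_j-1})=\overline\beta(\overline p(r_j-1))=\overline\beta(\widecheck n_j)$, using the remark that $\overline\beta$ agrees with $B(\overline P_{\cdot})$ at the values $|\overline P_r|$; and the constraint in $\Psi^{\Sym}$ says $B(\widehat\lambda^j)=\widehat n_j r_j$. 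Because $\overline\beta$ is the piecewise-linear interpolation through the points $(|\overline P_r|,B(\overline P_r))$ and every level-$r_j$ point contributes coordinate sum $r_j$, the interpolant has slope $r_j$ on $[\overline p(r_j-1),\overline p(r_j)]$; since $0\le\widehat n_j<\overline p(r_j)-\overline p(r_j-1)$ this yields $\overline\beta(n_j)=\overline\beta(\widecheck n_j)+\widehat n_j r_j$. Summing over $j$ gives $B(\lambda)=\sum_j\overline\beta(n_j)$.

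Next I would read column $j$ of $T$ as a point set $P_j\subseteq\overline C$: strict increase down a column makes the $n_j$ entries pairwise distinct, so $|P_j|=n_j$. Summing the entry-weights box by box, the weight $\lambda$ of $v_T$ satisfies $B(\lambda)=\sum_j B(P_j)$. By \cref{lem:min-coordinate-sum} we have $B(P_j)\ge\overline\beta(n_j)$ for every $j$, and comparing with $\sum_j B(P_j)=B(\lambda)=\sum_j\overline\beta(n_j)$ forces $B(P_j)=\overline\beta(n_j)$ for every $j$. Then \cref{lem:coordinate-sum-pyramid} gives $\overline P_{r-1}\subseteq P_j\subseteq\overline P_r$ for some $r$, and the cardinality $|P_j|=n_j\in[\overline p(r_j-1),\overline p(r_j))$ forces $r=r_j$. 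Thus $\overline P_{r_j-1}\subseteq P_j$, and the remaining $\widehat n_j=n_j-\widecheck n_j$ points of $P_j$ all lie on the layer $G_{r_j}$, i.e., have coordinate sum exactly $r_j$.

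Finally I would use that $\prec$ refines the coordinate sum. Inside column $j$ the $\widecheck n_j$ entries from $\overline P_{r_j-1}$ (coordinate sum $<r_j$) all precede, in $\prec$, the $\widehat n_j$ entries of coordinate sum $r_j$; since the column is increasing from top to bottom, the top $\widecheck n_j$ boxes of column $j$ carry exactly the set $\overline P_{r_j-1}$ and the bottom $\widehat n_j$ boxes carry the layer-$r_j$ part. The latter is precisely the assertion that the $j$-th column of $\widehat T$ uses only monomials with $x+y+z=r_j$, and since the weight is additive over boxes, the weight of $\widehat T$ equals $\lambda$ minus the weight of $\widecheck T$, which is $\sum_j\widehat\lambda^j$ once item~1 is established. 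For item~1, column $j$ of $\widecheck T$ is the set $\overline P_{r_j-1}$ listed in $\prec$-order; but $\overline P_{r_j-1}$ is exactly the set of the $\widecheck n_j$ smallest monomials of $\overline C_{<k}$, so the $i$-th box down column $j$ of $\widecheck T$ is the $i$-th smallest monomial, independently of $j$. Hence every row of $\widecheck T$ is constant, this is the unique semistandard filling with the prescribed column contents, and its weight is $\sum_j S(\overline P_{r_j-1})=\sum_j\widecheck\lambda^j=\widecheck\lambda$. One also notes that $\widecheck\mu=(\widecheck n_0,\widecheck n_1,\ldots)^\transpose$ is an honest partition contained in $\mu$, since $n_0\ge n_1\ge\cdots$ makes the $r_j$, hence the $\widecheck n_j=\overline p(r_j-1)$, non-increasing.

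I expect the only delicate points to be bookkeeping ones: getting the inequalities to close, which works only because the ``minimality'' of $\lambda$ is built into $\Psi^{\Sym}$ through the equalities $B(\widehat\lambda^j)=\widehat n_j r_j$ that make $B(\lambda)=\sum_j\overline\beta(n_j)$ exactly tight; correctly translating ``strictly increasing down columns with respect to a $\prec$ refining coordinate sum'' into ``the complete pyramid sits at the top of the column''; and checking that the radius $r$ returned by \cref{lem:coordinate-sum-pyramid} is the intended $r_j$ rather than an off-by-one neighbour, which is handled by the bound $\overline p(r_j-1)\le n_j<\overline p(r_j)$.
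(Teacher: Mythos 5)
Your proof is correct and follows essentially the same strategy as the paper: compute the coordinate sum $B(T)=B(\lambda)$ from the $\Psi^{\Sym}$ constraints, observe it equals $\sum_j\overline\beta(n_j)$, read each column as a point set and use the lower bound $B(P_j)\ge\overline\beta(n_j)$ (from \cref{lem:min-coordinate-sum}) to force equality column-by-column, then apply \cref{lem:coordinate-sum-pyramid} and the fact that $\prec$ refines coordinate sum to pin down the column contents. If anything you are a bit more careful than the paper in a few spots — you explicitly invoke \cref{lem:min-coordinate-sum} for the lower bound (the paper cites \cref{lem:coordinate-sum-pyramid} for both directions), you check that the $r$ returned by \cref{lem:coordinate-sum-pyramid} must be $r_j$ via the cardinality window, and you note that $\widecheck\mu$ is an honest sub-partition — but these are refinements, not a different route.
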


\noindent
The proof of \cref{lem:tableau-layers} roughly follows \cref{lem:pyramid-embedding}, and to this end we first lift the definition of coordinate sums to tableaux: Let
\begin{align*}
	B(T) = \sum_{(x, y, z)} x + y + z,
\end{align*}
where the sum is over all entries $(x, y, z)$ of $T$. Again, it is easy to relate the coordinate sum of $T$ with its weight: $B(T) = \sum_{i=0}^r i \mult \kappa_i$, where $\kappa \in \Nat^{[0, r]}$ is the weight of $T$.

\begin{proof}[Proof of \cref{lem:tableau-layers}]
Assume that $T$ is a semistandard Young tableau of shape $\mu$ and weight $\lambda$. We determine $B(T)$ exactly:
\begin{align} \label{eq:barycenter-tableau}
	B(T)
	= \sum_{i=0}^{r_0} i \mult \lambda_i
	= \sum_{j=0}^\ell \left(\sum_{i=0}^{r_j} i \mult \widecheck\lambda^j_i + \sum_{i=0}^{r_j} i \mult \widehat\lambda^j_i\!\right)
	= \sum_{j=0}^\ell \left(\sum_{i=0}^{r_j} i \mult \widecheck\lambda^j_i + \widehat n_j \mult r_j\!\right)\!.
\end{align}
Recall that $\widecheck\lambda^j$ equals the sum-marginal of the complete pyramid $\overline P_{r_j}$. In viewing the $j$-th column in $T$ as a point set in the closed cone of size $n_j = |\overline P_{r_j}| + \widehat n_j$, \cref{lem:coordinate-sum-pyramid} yields that the contribution of the $j$-th column to the coordinate sum $B(T)$ is at least \raisebox{0pt}[0pt][0pt]{$\sum_{i=0}^{r_j} i \mult \widecheck\lambda^j_i + \widehat n_j \mult r_j$}. In conjunction with \eqref{eq:barycenter-tableau}, that bound is tight. By applying \cref{lem:coordinate-sum-pyramid} again, we learn that for all $j$, the $j$-th column of $T$ corresponds to a pyramid $P$ that satisfies \raisebox{0pt}[0pt][0pt]{$\overline P_{r_j - 1} \subseteq P \subset \overline P_{r_j}$}. But then the only way to label the boxes in the $j$-th column of $T$ is as stated: The top-most $\widecheck n_j$ boxes are filled with points in \raisebox{0pt}[0pt][0pt]{$\overline P_{r_j - 1}$} and all $\widehat n_j$ boxes below are filled with points in $G_{r_j} \intersect \overline C$.
\end{proof}

\begin{lemma} \label[lemma]{lem:tableau-hwv}
Let $T$ be a semistandard Young tableau of shape $\mu$ and weight $\lambda$. Then $v_T$ is a highest weight vector in $\Schur^\mu \Sym^3 V$.
\end{lemma}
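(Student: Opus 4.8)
The plan is to imitate, essentially verbatim, the argument in the proof of \cref{thm:plethysm-bounds} that established $\lowerbound b_\lambda(n,3)\le b_\lambda(n,3)$ by showing that $v_P$ is a highest weight vector for every pyramid $P$ --- only now the role of the single pyramid is played by the columns of the tableau $T$. The one new ingredient is \cref{lem:tableau-layers}: it tells us that, for each $j$, the monomials occurring in the $j$-th column of $T$ are exactly those of $\overline P_{r_j-1}$ together with $\widehat n_j$ further monomials of coordinate sum $r_j$, so that $\overline P_{r_j-1}\subseteq(\text{column }j)\subseteq\overline P_{r_j}$. Since semistandardness forces the entries of a column to be pairwise distinct, each column of $T$, read as a point set in $\overline C_{<k}$, is a pyramid in the closed cone in the sense of \cref{sec:plethysm-bounds} (that is, downward closed in $\overline C$ under the coordinatewise order); this is the only place where the hypothesis $(\mu,\nu,\lambda)\in\Psi^{\Sym}$ is used.

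Next I would compute $E_{i,j}v_T$ for a raising operator $E_{i,j}$ with $i<j$. As in \cref{thm:plethysm-bounds}, $E_{i,j}\in\gl(V)$ acts on $\Sym^3 V$ as a derivation replacing one factor $X_j$ of a monomial by $X_i$, and on $\Schur^\mu\Sym^3 V\subseteq\Tensor^{|\mu|}\Sym^3 V$ by the Leibniz rule over the $|\mu|$ tensor slots; this action commutes with the column-antisymmetrization defining the vectors $v_T$, so $E_{i,j}v_T=\sum v_{T'}$, where the sum runs over all fillings $T'$ obtained from $T$ by choosing one box, choosing one occurrence of the value $j$ inside that box's monomial entry, changing it to $i$, and re-sorting the resulting monomial into its $\overline C_{<k}$-representative. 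It then suffices to show that each term $v_{T'}$ vanishes. Fix such a $T'$, modifying the box $b$ in column $c$ whose entry is $(x,y,z)$ and producing the new entry $(x',y',z')$. Because $(x',y',z')$ is obtained from the multiset $\{x,y,z\}$ by decreasing one element (from $j$ to $i<j$) and re-sorting, every order statistic can only drop, so $(x',y',z')\le(x,y,z)$ coordinatewise; and $(x',y',z')$, being a sorted triple of nonnegative integers, lies in $\overline C$. Since column $c$ of $T$ is a pyramid in the closed cone and contains $(x,y,z)$, it also contains $(x',y',z')$, and $(x',y',z')\neq(x,y,z)$ because the coordinate sum strictly dropped. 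Hence in $T'$ two distinct boxes of column $c$ carry the same entry $(x',y',z')$, so $v_{T'}=0$ by the column-antisymmetry exchange condition (the first of the two relations recalled above). Therefore $E_{i,j}v_T=0$ for all $i<j$, which is exactly the assertion that $v_T$ is a highest weight vector in $\Schur^\mu\Sym^3 V$.

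The only genuinely delicate point is the monotonicity claim ``$(x',y',z')\le(x,y,z)$ coordinatewise after re-sorting'': since entries of $T$ are monomials rather than ordered triples, replacing a factor $X_j$ by $X_i$ may permute which exponent is largest, middle, or smallest, so one has to check the cases $i$ larger than, between, or below the two unaffected values separately. This is the same bookkeeping that was implicitly glossed over in \cref{thm:plethysm-bounds} (where it is noted that the sign changes caused by the wedge product do not matter here), and it is routine. Everything else is a direct transcription of the pyramid argument, now carried out one column at a time.
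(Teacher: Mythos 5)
Your proposal is correct and follows essentially the same route as the paper: both proofs invoke \cref{lem:tableau-layers} to conclude that each column of $T$ is a pyramid in the closed cone, then apply the Leibniz rule for $E_{i,j}$ and use column-antisymmetry to kill each resulting term. The only difference is presentational: the paper factors through the $\GL(V)$-equivariant projection $\varphi:\bigotimes_j \Wedge^{n_j}\Sym^3 V \to \Schur^\mu\Sym^3 V$ and cites the column-wise vanishing $E_{a,b}v_{T_j}=0$ already established in the proof of \cref{thm:plethysm-bounds}, whereas you expand the derivation box-by-box and re-derive that vanishing inline (your re-sorting/monotonicity bookkeeping is exactly the step that the proof of \cref{thm:plethysm-bounds} treats somewhat implicitly).
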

\begin{proof}
We assert that $v_T$ vanishes under all raising operators $E_{a, b}$. In fact, it suffices to show that~$v_T$ vanishes under~$E_{a, b}$ after canonically embedding \raisebox{0pt}[0pt][0pt]{$v_T \in \Schur^\mu \Sym^3 V \subset \Tensor_{j=0}^\ell \Wedge^{n_j} \Sym^3 V$} (that is, the space obtained by omitting condition \textsf{2} in the above characterization of Weyl modules). Let \raisebox{0pt}[0pt][0pt]{$\varphi : \Tensor_{j=0}^\ell \Wedge^{n_j} \Sym^3 V \to \Schur^\mu \Sym^3 V$} be the canonical projection. We rewrite $v_T = \varphi(v_{T_0} \tensor \ldots \tensor v_{T_\ell})$, where each column $T_j$ of $T$ is interpreted as a point set in the closed cone and~$v_P$ for $P$ such a point set is chosen as in \cref{thm:plethysm-bounds}. \cref{lem:tableau-layers} in particular implies that each column~$T_j$ is a pyramid in the closed cone, which entails that $E_{a, b} v_{T_j} = 0$ by \cref{thm:plethysm-bounds}. Hence:
\begin{align*}
	E_{a, b} v_T = E_{a, b} \varphi(v_{T_0} \tensor \ldots \tensor v_{T_\ell}) = \varphi(E_{a, b}v_{T_0} \tensor \ldots \tensor v_{T_\ell}) + \ldots + \varphi(v_{T_0} \tensor \ldots \tensor E_{a, b}v_{T_\ell}) = 0;
\end{align*}
the second equality holds since $\varphi$ is $\GL(V)$-equivariant.
\end{proof}

Recall that $p_\lambda(\mu, \nu)$ equals the dimension of the weight-$\lambda$ highest weight subspace of $\Schur^\mu \Schur^\nu V$. Since each elementary weight-$\lambda$ vector is a highest weight vector in this case, $p_\lambda(\mu, \nu)$ equals the dimension of the weight subspace of weight $\lambda$---or equivalently, the number of semistandard tableaux of shape $\mu$ and weight $\lambda$. Formally, we obtain the following statement which implies \cref{thm:sharp-p-arbitrary} as an immediate consequence.

\begin{lemma}
Let $(\mu, \nu, \lambda) \in \Psi^{\Sym}$. Then $p_\lambda(\mu, \nu)$ equals the number of semistandard tableaux of shape~$\mu$ and weight~$\lambda$, where all boxes are filled with points in $\overline C_{< k}$.
\end{lemma}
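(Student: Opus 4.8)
The plan is to assemble the pieces already in place, so the argument is short. Fix $V = \Complex^k$ with $k$ large enough that $p_\lambda(\mu,\nu)$ is the generic plethysm coefficient; concretely $k-1 \ge r_0$ suffices, since by \cref{lem:tableau-layers} every semistandard tableau of shape $\mu$ and weight $\lambda$ uses only monomials $X_xX_yX_z$ with $x+y+z \le r_0$, all of which lie in $\overline C_{<k}$, so the quantity we wish to count is already stable and (as $\dim V \geq \mathrm{ht}(\lambda)$) equals $p_\lambda(\mu,\nu)$. Recall from the discussion preceding the statement that $p_\lambda(\mu,\nu)$ is the dimension of the subspace $\mathrm{HWV}_\lambda \subseteq \Schur^\mu \Schur^\nu V = \Schur^\mu \Sym^3 V$ spanned by the highest weight vectors of weight $\lambda$.

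First I would use the tableau basis of the Weyl module. Since $\{v_T : T \text{ semistandard of shape } \mu \text{ over the ordered alphabet } \overline C_{<k}\}$ is a basis of $\Schur^\mu \Sym^3 V$, and each $v_T$ is a weight vector whose weight is the sum of the (entrywise) weights of the monomials in $T$, the subfamily indexed by those $T$ of weight $\lambda$ is a basis of the weight subspace $(\Schur^\mu \Sym^3 V)_\lambda$. Hence $\dim (\Schur^\mu \Sym^3 V)_\lambda$ is exactly the number of semistandard tableaux of shape $\mu$ and weight $\lambda$ filled with points in $\overline C_{<k}$.

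Next I would invoke \cref{lem:tableau-hwv}: every such $v_T$ is in fact a highest weight vector. Therefore $(\Schur^\mu \Sym^3 V)_\lambda \subseteq \mathrm{HWV}_\lambda$; the reverse inclusion is trivial, as every highest weight vector is a weight vector. So the two subspaces coincide, and $p_\lambda(\mu,\nu) = \dim \mathrm{HWV}_\lambda = \dim (\Schur^\mu \Sym^3 V)_\lambda$, which by the previous paragraph equals the asserted count.

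The one genuinely load-bearing step — and the only place the hypothesis $(\mu,\nu,\lambda) \in \Psi^{\Sym}$ enters — is the appeal to \cref{lem:tableau-hwv}: the coordinate-sum constraints defining $\Psi^{\Sym}$ are what force each column of a weight-$\lambda$ semistandard tableau to be a pyramid in the closed cone (via \cref{lem:tableau-layers}), so that the raising operators annihilate $v_T$ exactly as in the proof of \cref{thm:plethysm-bounds}. For general $(\mu,\nu,\lambda)$ this fails — weight-$\lambda$ vectors need not be highest weight vectors — and the tableau count only yields the upper bound of \cref{thm:plethysm-bounds} rather than the exact value. Everything else is bookkeeping with the tableau basis of the Weyl module.
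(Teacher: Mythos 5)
Your argument is correct and is essentially the paper's own: express the weight-$\lambda$ subspace of $\Schur^\mu \Sym^3 V$ via the semistandard-tableau basis, then apply \cref{lem:tableau-hwv} to conclude that this weight space coincides with the highest-weight space, whose dimension is $p_\lambda(\mu,\nu)$. Your added remark that $k-1 \ge r_0$ suffices (so that all admissible tableaux already live over the alphabet $\overline C_{<k}$ and the count is stable) is a small but welcome clarification that the paper leaves implicit.
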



\section{GapP-Completeness} \label{sec:gapp-completeness}
This section is devoted to proving~\cref{thm:plethysm-gapp-complete}, which claims that computing general plethysm coefficients $p_\lambda(\mu, \nu)$, and also the special cases $a_\lambda(n, m)$ and $b_\lambda(n, m)$, is \GapP{}-complete. The hardness part follows immediately from~\cref{thm:plethysm-sharpp-hard}\footnotemark. In order to show containment in \GapP{}, we will derive an explicit formula involving signs for the general plethysm coefficients $p_\lambda(\mu, \nu)$. We heavily rely on tools from the theory of symmetric functions, however, we shall refrain from describing the necessary background in detail and instead limit ourselves to the most important definitions; see for instance~\cite{Stanley1999b} for a thorough introduction. The following argument is mostly standard and similar to e.g.~\cite{DorflerIP2019}.
\footnotetext{Indeed, it is folklore that under the appropriate notions of reductions which allow for pre- and post-computations, \SharpP{}-hardness implies \GapP{}-hardness. To see this, let $f$ be \SharpP{}-hard and let $g_1 - g_2 \in \GapP{}$ so that $g_1, g_2 \in \SharpP{}$. The trick is to construct a nonnegative function $g \in \SharpP{}$ by encoding the nonnegative functions $g_1$, $g_2$ in different blocks of bits. Now, since $f$ is \SharpP{}-hard, there exists a polynomial-time reduction from $g$ to $f$ and we can use the post-processing step to recover the intended function value $g_1 - g_2$.}

Let $\Lambda$ denote the ring of symmetric polynomials (that is, polynomials which are invariant under any permutation of variables) in finitely many variables $X_0, X_1, \ldots$. The characters of the irreducible $\GL_n$-representation $\Schur^\nu \Complex^n$ are elements $s_\nu(X_0, \ldots, X_{n-1}) \in \Lambda$ called \emph{Schur functions} (where $X_0, \ldots, X_{n-1}$ correspond to the eigenvalues of the conjugacy class of $\GL_n$). Recall the notion of \emph{semistandard Young tableaux} as introduced in \cref{sec:positive-formulas} and let $T_0, \ldots, T_{l-1}$ be an arbitrary enumeration of all semistandard tableaux of shape $\nu$ filled with entries $0, \ldots, n-1$. The \emph{weight $\weight(T) \in \Nat^{[0, n-1]}$} of a tableau $T$ is defined as the composition such that $\weight(T)_i$ is the number of entries $i$ in $T$. We have that
\begin{align}
	s_\nu(X_0, \ldots, X_{n-1}) = \sum_{i=0}^{l-1} X^{T_i}, \label{eq:schur-functions}
\end{align}
for $X^T := \prod_{t \in T} X_t$, where the product is over (the multiset of) the entries $t$ of $T$. Following this description, it follows that the character of the composed representation $\Schur^\mu \Schur^\nu \Complex^n$ is given by
\begin{align*}
	s_\mu[s_\nu](X_0, \ldots, X_{n-1}) := s_\mu(X^{T_0}, \ldots, X^{T_{l-1}});
\end{align*}
that composition $s_\mu[s_\nu]$ of symmetric functions is called a \emph{symmetric function plethysm}.

The set of Schur function forms a basis of $\Lambda$. Moreover, since the Schur functions are the characters of the irreducible $\GL_n$-representations, there exists an inner product $\innerprod\cdot\cdot$ on $\Lambda$ so that the Schur functions form an orthonormal basis: $\innerprod{s_\lambda}{s_\mu} = \delta_{\lambda \mu}$. It follows that for any $\GL_n$-representation with character $\chi$, the multiplicity of $\Schur^\lambda \Complex^n$ in its decomposition into irreducibles equals $\innerprod{s_\lambda}{\chi}$. In particular, $p_\lambda(\mu, \nu) = \innerprod{s_\lambda}{s_\mu[s_\nu]}$. It remains to develop a \GapP{}-formula for $\innerprod{s_\lambda}{s_\mu[s_\nu]}$.

As a special case of Schur functions, the \emph{complete homogeneous symmetric functions $h_d$} are defined as $s_{(d)}$; we write $h_\lambda := h_{\lambda_0} h_{\lambda_1} \cdots$. Moreover, let $m_\lambda := \sum_{\rho \in \S_n(\lambda)} X^\rho$ denote the \emph{monomial symmetric functions}, where $\S_n(\lambda)$ is the set of all (distinct) permutations of $\lambda = (\lambda_0, \ldots, \lambda_{n-1})$ and $X^\rho := \prod_i X_i^{\rho_i}$. The well-known Jacobi-Trudi identity~\cite[Section~7.16]{Stanley1999b} expresses any Schur function as a determinant:
\begin{align}
	s_\lambda = \det (h_{\lambda_i - i + j})_{i, j = 0}^{\ell(\lambda) - 1}, \label{eq:jacobi-trudi}
\end{align}
thus $s_\lambda$ can be written as a signed sum of polynomials $h_\mu$. In addition, it is known that the orthogonal dual basis for the complete homogeneous symmetric functions is given by the monomial symmetric functions, i.e.~$\innerprod{h_\lambda}{m_\mu} = \delta_{\lambda \mu}$. Now suppose that $s_\mu[s_\nu]$ can be expressed as
\begin{align*}
	s_\mu[s_\nu] = \sum_\kappa q_\kappa(\mu, \nu) m_\kappa,
\end{align*}
for some \GapP{}-computable coefficients $q_\kappa(\mu, \nu)$. By the Jacobi-Trudi identity~\eqref{eq:jacobi-trudi}, we have
\begin{align*}
	p_\lambda(\mu, \nu) = \innerprod{\det (h_{\lambda_i - i + j})_{i, j = 0}^{\ell(\lambda)-1}}{s_\mu[s_\nu]} = \sum_{\sigma \in \S_{\ell(\lambda)}} \sign(\sigma) q_{\lambda - (0, \ldots, \ell(\lambda)-1) + \sigma}(\mu, \nu),
\end{align*}
where the permutation $\sigma$ is viewed as a vector with entries $0, \ldots, \ell(\lambda) - 1$. This formula implies in particular that also $p_\lambda(\mu, \nu)$ can be computed by a \GapP{} machine.

It now suffices to prove that the coefficients $q_\kappa(\mu, \nu)$ are indeed expressible as a \GapP{} formula. From~\eqref{eq:schur-functions}, one can derive that
\begin{align*}
	s_\mu = \sum_\pi K_{\mu \pi} m_\pi,
\end{align*}
where $K_{\mu \pi}$ denotes the number of semistandard tableaux of shape $\mu$ and weight $\pi$ (called the \emph{Kostka coefficient}). We finally obtain:
\begin{align*}
	s_\mu[s_\nu](X_0, \ldots, X_{n-1})
	&= s_\mu(X^{T_0}, \ldots, X^{T_{l-1}}) \\
	&= \sum_\pi K_{\mu \pi} m_\pi(X^{T_0}, \ldots, X^{T_{l-1}}) \\
	&= \sum_\pi K_{\mu \pi} \sum_{\rho \in \S_l(\pi)} X^{\sum_i \rho_i \weight(T_i)},
\end{align*}
and therefore
\begin{align*}
	q_\kappa(\mu, \nu) = \innerprod{h_\kappa}{s_\mu[s_\nu]} = \sum_\pi K_{\mu \pi} \mult \#\{ \rho \in \S_l(\pi) : \kappa = \sum_i \rho_i \weight(T_i) \}.
\end{align*}
From that identity it is obvious that $q_\kappa(\mu, \nu)$ is \GapP{}-computable (even \SharpP{}-computable): The coefficients $K_{\mu \pi}$ are clearly \SharpP{}-computable and after guessing $\pi$ and $\rho$, we can check the condition $\kappa = \sum_i \rho_i \weight(T_i)$ in polynomial time.

\bibliographystyle{alpha}
\bibliography{references}

\end{document}